\date{}
\newtheorem{theorem}{Theorem}
\numberwithin{theorem}{section}
\newtheorem{proposition}[theorem]{Proposition}
\newtheorem{lemma}[theorem]{Lemma}
\newtheorem{corollary}[theorem]{Corollary}
\newtheorem{problem}[theorem]{Problem}
\newtheorem{example}[theorem]{Example}
\newcommand{\RR}{\mathbb{R}}
\definecolor{g4}{rgb}{0,0.4,0}
\newcommand{\diag}[1]{\mbox{diag}{#1}}
\title{\textbf{Geometry of  Log-Concave Density Estimation}}
\author{Elina Robeva, Bernd Sturmfels,  and Caroline Uhler}
\begin{document}

\maketitle

\begin{abstract}
\noindent 
Shape-constrained density estimation is an important topic in mathematical statistics. 
We focus on densities on $\mathbb{R}^d$ that are log-concave, and we
study geometric properties of the maximum likelihood estimator (MLE) for
weighted samples.
Cule, Samworth, and Stewart showed that the logarithm of the optimal 
log-concave density is piecewise linear and supported on a regular subdivision of the samples. This defines a map from the space of weights
to the set of regular subdivisions of the samples, i.e. the face poset of their secondary polytope. We prove that this map is surjective. In fact, every regular subdivision arises in the MLE for some set of weights with positive probability, but coarser subdivisions appear to be more likely to arise than finer ones. To quantify these results, we introduce a continuous version of the secondary polytope, whose dual we name the Samworth body. This article establishes a new link between
geometric combinatorics and nonparametric statistics,
and it suggests numerous open problems.
\end{abstract}

\section{Introduction}

Let $X = (x_1,x_2,\ldots,x_n)$ be a
configuration of $n$ distinct labeled points in $\RR^d$,
and let $w = (w_1,w_2,\ldots,w_n)$ be a vector
of positive weights that satisfy
$w_1 + w_2 + \cdots + w_n  =1$.
The pair $(X,w)$ is our dataset.
Think of experiments whose outcomes are measurements in $\RR^d$.
We interpret $w_i$ as the fraction among our experiments that led to
the sample point $x_i$ in $\RR^d$.

From this dataset one can compute
the sample mean  $\,\hat \mu = \sum_{i=1}^n w_i x_i \,$
and the sample covariance matrix
$\,\hat \Sigma = \sum_{i=1}^n w_i (x-\hat \mu) (x - \hat \mu)^T $.
Suppose that $\hat \Sigma$ has full rank $d$ and we wish to
approximate the sample distribution by a Gaussian with density $f_{\mu,\Sigma}$ on $\RR^d$.
Then~$(\hat \mu, \hat \Sigma) $ is the best solution
in the likelihood sense, i.e.~this pair
 maximizes the log-likelihood function
\begin{equation}
\label{ex:loglikelihood}  (\mu,\Sigma) \,\,\, \mapsto \,\,\, \sum_{i=1}^n w_i \cdot 
{\rm log}(f_{\mu,\Sigma} (x_i)) .
\end{equation}

 In  {\em nonparametric statistics} one abandons the assumption that the
desired probability density  belongs to a model with finitely many parameters.
Instead one seeks to maximize 
\begin{equation}
\label{eq:loglikelihood2}
f \,\,\,\mapsto \,\,\, \sum_{i=1}^n w_i \cdot {\rm log}(f (x_i))  
\end{equation}
over all density functions $f$. However,
since $f$ can be chosen arbitrarily close to the
finitely supported measure $\sum_{i=1}^n w_i \delta_{x_i}$, it is necessary to put constraints on $f$. 
One approach to~a meaningful maximum likelihood  problem is to impose
{\em shape constraints} on
the graph of~$f$. This line of research started with Grenander~\cite{Grenander}, who analyzed the 
case when the density is monotonically decreasing. Another popular shape constraint is convexity of the density~\cite{Groeneboom}.


 In this paper, we consider maximum likelihood estimation, under the assumption that $f$ is {\em log-concave}, i.e. that~${\rm log}(f)$ is a concave
function from $\RR^d$ to $\RR \cup \{-\infty\}$. Density estimation under log-concavity has been studied in depth in recent years; see e.g.~\cite{CSS, Duembgen, Walther}. Note that
Gaussian distributions $f_{\mu,\Sigma}$ are log-concave. Hence,
the following optimization
problem naturally generalizes the familiar task of maximizing (\ref{ex:loglikelihood})
over all pairs of parameters $(\mu,\Sigma)$:
\begin{equation}
\label{eq:ourproblem}
\begin{matrix}
\hbox{Maximize the log-likelihood (\ref{eq:loglikelihood2}) of the given sample $(X,w)$ over all} \\
\hbox{integrable functions $f: \RR^d \rightarrow \RR_{\geq 0} $ such that
${\rm log}(f)$ is concave and
$\,\int_{\RR^d} f(x) dx = 1$.}
\end{matrix}
\end{equation}

A solution to this optimization problem was given by
Cule, Samworth and Stewart  in \cite{CSS}.
They showed that the logarithm of the optimal density
$\hat f$ is a piecewise linear concave function,
whose regions of linearity are the cells of a
regular polyhedral subdivision of the configuration $X$.
This reduces the infinite-dimensional optimization
problem (\ref{eq:ourproblem}) to a convex optimization
problem in $n$ dimensions, since $\hat f$ is uniquely defined once its values at $x_1,\dots, x_n$ are known. An efficient algorithm for solving
this problem is described in \cite{CSS}. It is implemented
in the {\tt R} package {\tt LogConcDEAD} due to Cule, Gramacy and Samworth \cite{CGS}.

\begin{figure}[b!]
\begin{center}
\includegraphics[width = 0.39\textwidth]{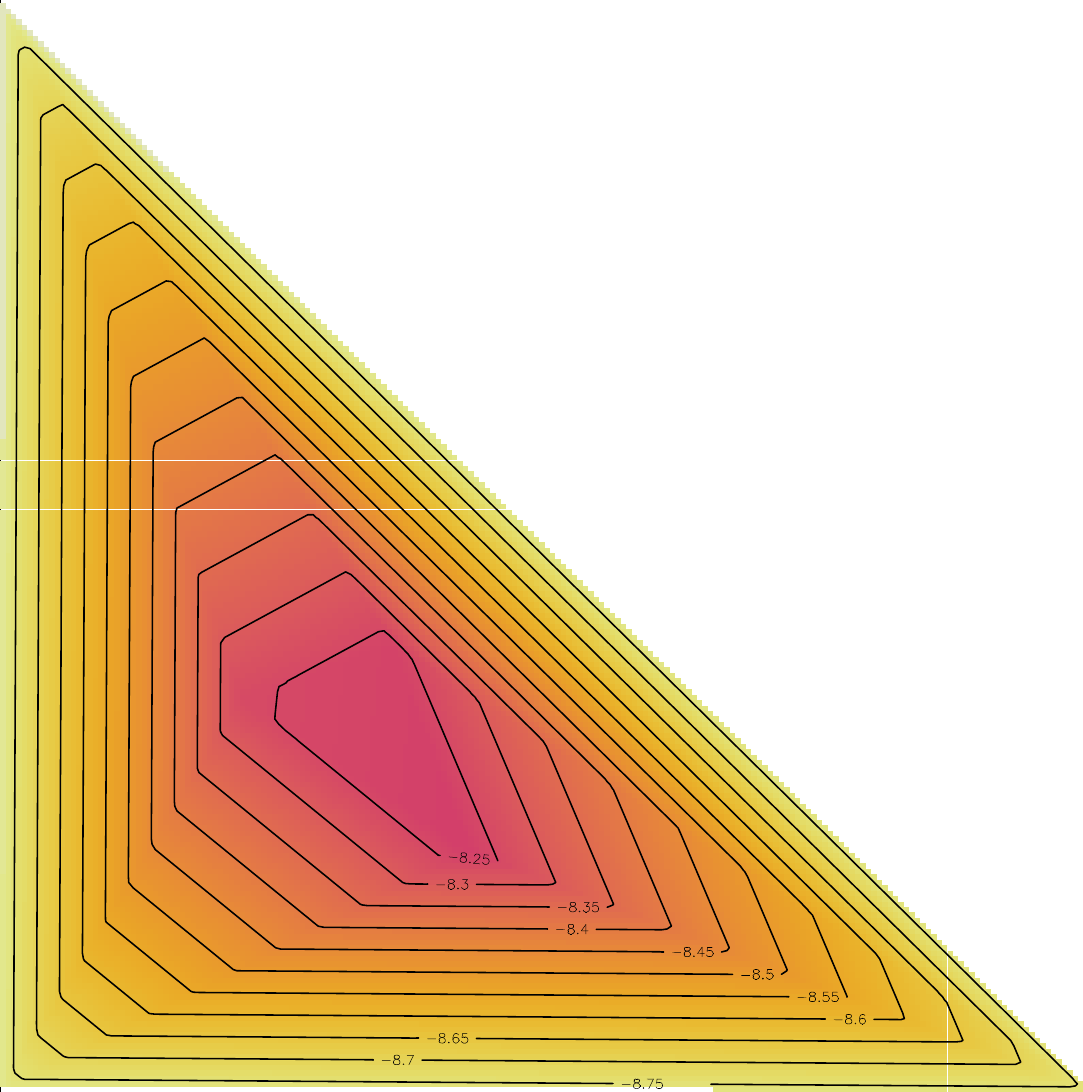} $\quad\quad$
\includegraphics[width = 0.39\textwidth]{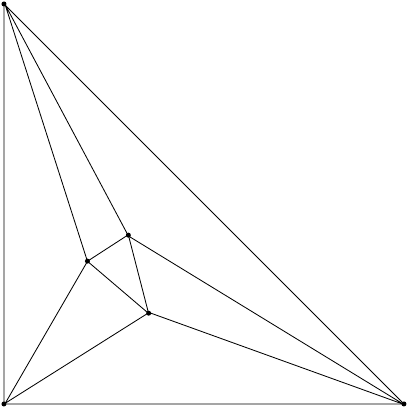}
\end{center}
\vspace{-0.2in}
\caption{The optimal log-concave density $\hat f$ for the  six data points in (\ref{eq:sixpoints}) with unit weights.
The graph of the piecewise linear concave function $\log(\hat f)$ is shown on the left.
The regions of linearity are the seven triangles in
  the triangulation of the six points shown on the right.
\label{fig:octahedron}}
\end{figure}

\begin{example} \label{ex:octahedron} \rm
Let $d=2$, $n=6$, $w = \frac{1}{6}(1,1,1,1,1,1)$, and fix the  point configuration
\begin{equation}
\label{eq:sixpoints}
 X \,\,=\,\, \bigl( \,(0, 0)\,, \,(100, 0)\,,\, (0, 100)\,,\, (22, 37)\,, \,(43, 22)\,, \,(36, 41) \,\bigr). 
 \end{equation}
 The graphical output generated by  {\tt LogConcDEAD} is shown on  the left in Figure~\ref{fig:octahedron}.
 This is the graph  of the function ${\rm log}(\hat f)$ that solves  (\ref{eq:ourproblem}). 
This piecewise linear concave function has seven linear pieces, namely the triangles 
on the right in Figure~\ref{fig:octahedron}, with vertices taken from~$X$.
 \end{example}

The purpose of this paper is to establish a link
between nonparametric statistics and geometric combinatorics.
We develop a generalization of the theory of
regular triangulations
arising in the context of maximum likelihood estimation
for log-concave densities.

A key feature of our approach is that we emphasize the  presence 
of unequal weights in the maximum likelihood estimation problem
(\ref{eq:ourproblem}). In fact, it is important for us that the weights
$w$ are allowed to vary.  While this is a natural assumption from 
the perspective of geometry and combinatorics, it is
also well-motived by statistics. For instance, unequal weights
are necessary when applying the bootstrap to assess uncertainty in the density estimate.
In addition, Leister \cite[Section 2.3]{Lei} studies 
Hidden Markov Models for state-dependent log-concave densities.
Her EM algorithm  solves the problem (\ref{eq:ourproblem}) repeatedly
for different weights.

Our paper is organized as follows. In Section 2 we first
review the relevant mathematical concepts, especially
polyhedral subdivisions and secondary polytopes \cite{DRS, GKZ}.
We then generalize results in \cite{CSS} from the case
of unit weights $w = \frac{1}{n}(1,1,\ldots,1)$ to arbitrary weights~$w$. Theorem~\ref{thm:samworth} 
casts the problem (\ref{eq:ourproblem})  as a linear optimization problem
 over a convex subset $\mathcal{S}(X)$ of $\RR^n$,
which we call the {\em Samworth body} of~$X$.
Theorem \ref{thm:IntegralFormula} 
uses integrals as in \cite{Bar} to give
an unconstrained formulation of this problem
with an explicit objective function.

Cule, Samworth and Stewart \cite{CSS} discovered that log-concave density estimation 
leads to regular polyhedral subdivisions. In this paper we prove the following
converse to their result:

\begin{theorem}
\label{thm:converse}
Let $\Delta$ be any regular polyhedral subdivision of the configuration $X$.
There exists a non-empty open subset $\,\mathcal{U}_\Delta$ in $\RR^n$
such that, for every $w \in \mathcal{U}_\Delta$, the optimal solution 
$\hat f$ to (\ref{eq:ourproblem}) is a piecewise log-linear function whose
regions of linearity are the cells of~$\Delta$.
\end{theorem}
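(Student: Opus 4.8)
The plan is to push the statement through the unconstrained reformulation of Theorem~\ref{thm:IntegralFormula}: maximizing (\ref{eq:loglikelihood2}) is equivalent to maximizing $y\mapsto\langle w,y\rangle-\sigma(y)$ over $y\in\RR^n$, where $\sigma$ is the explicit convex function built from the integral of $\exp(\bar y)$ over $\mathrm{conv}(X)$ in the spirit of \cite{Bar} ($\bar y$ being the tent function of the height vector $y$), and where a maximizer $y^{\ast}$ recovers the MLE via $\log\hat f=\bar{y^{\ast}}$. Since $\hat f$ is log-linear exactly on the cells of the regular subdivision of $X$ induced by $y^{\ast}$, and since $w$ enters linearly, $y^{\ast}$ is a maximizer for $w$ if and only if $w\in\partial\sigma(y^{\ast})$. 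Writing $\mathcal C_\Delta=\{y\in\RR^n:\ \Delta\text{ is the regular subdivision of }X\text{ induced by }y\}$, which is nonempty because $\Delta$ is regular and is a relatively open convex polyhedral cone, the theorem amounts to showing that
\[
\mathcal U_\Delta\ :=\ \bigcup_{y\in\mathcal C_\Delta}\partial\sigma(y)
\]
has nonempty interior in $\RR^n$.

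The first technical step is an explicit description of the subdifferential. Writing $\bar y(x)=\max\{\langle y,u\rangle:\ u\in P_x\}$, where $P_x\subseteq\RR^n$ is the $y$-independent polytope of fractional barycentric coordinates of $x$ with respect to $X$, a measurable-selection argument identifies $\partial\sigma(y)$ with the set of vectors $\int_{\mathrm{conv}(X)}e^{\bar y(x)}u(x)\,dx$ taken over all measurable selections $u(x)\in P_x$; in particular $\partial\sigma(y)\subseteq\RR^n_{\ge 0}$. When $y\in\mathcal C_\Delta$ the selection $u(x)$ is pinned to the $\Delta$-barycentric coordinates of $x$ on each maximal cell, with freedom remaining only at the coordinates $i$ for which $x_i$ is not a vertex of $\Delta$ (the ``redundant'' coordinates); this is the source both of the nonsmoothness of $\sigma$ along $\mathcal C_\Delta$ and of the positive-dimensional part of $\partial\sigma(y)$ when $\Delta$ is coarse.

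The core is an invariance-of-domain argument. Let $m=\dim\mathcal C_\Delta$ and let $L$ be its space of directions. On the relatively open set $\mathcal C_\Delta$ the restriction $\sigma|_{y+L}$ is a smooth exponential-polynomial, and it is strictly convex: a direction along which $\sigma$ is affine must be supported on the redundant coordinates, and no nonzero such direction lies in $L$, since lowering a redundant height keeps $\Delta$ but raising it strictly refines $\Delta$. Hence the gradient $\nabla_L\sigma$ of $\sigma$ along $L$ is a smooth injection of $\mathcal C_\Delta$ into $L^{\ast}$. Moreover $\partial\sigma(y)\subseteq\nabla_L\sigma(y)+L^{\perp}$, and the key lemma is that this inclusion has full dimension $n-m$, so that $\bigl\{(y,w):\ y\in\mathcal C_\Delta,\ w\in\mathrm{relint}\,\partial\sigma(y)\bigr\}$ is a nonempty $n$-dimensional manifold, smoothly fibered over $\mathcal C_\Delta$. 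The forgetful map $(y,w)\mapsto w$ from this manifold to $\RR^n$ is injective: if $w\in\partial\sigma(y_1)\cap\partial\sigma(y_2)$ with $y_1,y_2\in\mathcal C_\Delta$, then $\bar{y_1}=\bar{y_2}$ by uniqueness of the MLE, so $y_1-y_2$ is supported on redundant coordinates, whence $y_1-y_2\in L$ and $y_1=y_2$. By invariance of domain this map has open image, which is nonempty, contained in $\mathcal U_\Delta$, and, being an open subset of $\RR^n$ lying in $\RR^n_{\ge 0}$, automatically contained in $\RR^n_{>0}$; since $\sigma(y+c\mathbf 1)=e^{c}\sigma(y)$ makes $\mathcal U_\Delta$ a cone, intersecting with the probability simplex yields the desired open family of weights realizing $\Delta$.

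The delicate points are concentrated in the third paragraph: the strict convexity of $\sigma|_{y+L}$, the exact dimension $n-m$ of $\partial\sigma(y)$, and the injectivity of the forgetful map, where uniqueness of the MLE is essential. For a regular triangulation everything collapses — $\mathcal C_\Delta$ is full-dimensional, $\sigma$ is smooth on it, $\partial\sigma$ is a single point, and $\mathcal U_\Delta=\nabla\sigma(\mathcal C_\Delta)$ by plain invariance of domain — so the real content is the coarse case, where almost all $n$ dimensions of $\mathcal U_\Delta$ must come from the nonsmoothness of $\sigma$, i.e.\ from the combinatorics of the polytopes $P_x$. This is genuinely different from classical GKZ theory, in which a coarse regular subdivision indexes a positive-dimensional face of the secondary polytope with a lower-dimensional normal cone; here every regular subdivision, coarse or fine, acquires a full-dimensional region of weights. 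Keeping this bookkeeping organized is exactly what the continuous secondary polytope and the Samworth body $\mathcal S(X)$ from Theorem~\ref{thm:samworth} are for, and I expect the cleanest write-up to phrase the dimension count as a statement about their facial structure.
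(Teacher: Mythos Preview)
Your overall architecture matches the paper's: both identify $\mathcal{U}_\Delta$ with $\bigcup_{y\in\mathcal{C}_\Delta}\partial\sigma(y)$ and argue full-dimensionality by combining the $m$ dimensions of $\mathcal{C}_\Delta$ with $n-m$ dimensions from each fiber $\partial\sigma(y)$. The gap is that you assert the ``key lemma'' $\dim\partial\sigma(y)=n-m$ without proof, and this is precisely the hard step. The paper does not prove it for all $y$ either; immediately after the proof of Theorem~\ref{thm:converse} the authors state explicitly that they cannot establish this in general. What the paper supplies instead is a workaround: the extreme rays of $\partial\sigma(y)$ are written explicitly as vectors $w^{\Delta_i}_y$ indexed by the triangulations refining $\Delta$, via the symmetric function $H$ (Theorem~\ref{thm:normalcone}); at the constant height vector $y_0=(c,\ldots,c)$ these specialize to the GKZ vectors $z^{\Delta_i}$ (Corollary~\ref{cor:cccc}), whose span has dimension exactly $n-m$ by secondary-polytope theory; upper semicontinuity of matrix rank then pushes the inequality $\dim\partial\sigma(y)\ge n-m$ to a relatively open neighborhood of $y_0$ in $\mathcal{C}_\Delta$. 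Your invariance-of-domain packaging is a clean alternative to the paper's semi-analytic-set language, but it consumes the same dimension input, and the constant-vector-to-GKZ reduction is the missing ingredient you would need.

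A secondary issue: your strict-convexity claim for $\sigma|_{y+L}$ fails when $\Delta$ omits some point $x_i$. For such a redundant index the ray $-e_i$ lies in $\mathcal{C}_\Delta$, so $e_i\in L$, yet $\sigma$ is constant along $-e_i$ near $y$; thus $\sigma|_{y+L}$ has a nontrivial affine direction and $\nabla_L\sigma$ is not injective. Your sentence ``lowering a redundant height keeps $\Delta$ but raising it refines $\Delta$'' shows that $e_i$ is not in the \emph{lineality space} of $\mathcal{C}_\Delta$, not that it is absent from the linear span $L$. This is repairable by restricting to relevant $y$ (as the paper does in Lemma~\ref{lem:weightsAnyDim} and Theorem~\ref{thm:normalcone}), but it shows the bookkeeping with $L$ needs more care than your sketch provides.
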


The proof of Theorem~\ref{thm:converse} appears in Section 3.
We introduce a remarkable symmetric function $H$ that serves as a key technical tool.
The theory behind $H$ seems interesting in its own right.
In Theorem~\ref{thm:normalcone}
we characterize the normal cone at any boundary point of the Samworth body.
In other words, for a given concave piecewise log-linear function $f$,
we determine the set of all weight vectors $w$ such that
$f$ is the optimal solution in (\ref{eq:ourproblem}).

In Section 4  we view (\ref{eq:ourproblem}) as a parametric optimization problem,
as either $w$ or $X$ vary. Variation of $w$ is explained by
the geometry of the Samworth body. We
explore empirically the probability that a given subdivision is
optimal. We observe that triangulations are rare.
Thus pictures like the triangulation in Figure \ref{fig:octahedron} are
exceptional and deserve special attention.

In Section 5 we focus our attention on the case of unit weights,
and we examine the constraints this imposes on $\Delta$.
Theorem \ref{thm:d+2points} shows that triangulations
never occur for $n=d+2$ points in $\RR^d$ with unit weights.
A converse to this result is established in Theorem~\ref{thm:d+3points}.

Sections 4 and 5 conclude with several open problems.
These suggest possible lines of inquiry for a future research theme
that  might be named {\em Nonparametric Algebraic Statistics}.

\section{Geometric Combinatorics}

We begin by reviewing concepts from geometric combinatorics,
studied in detail in the books by
De Loera, Rambau and Santos \cite{DRS}
and Gel'fand, Kapranov and Zelevinsky~\cite{GKZ}.
See Thomas \cite[\S 7-8]{Tho} for a first introduction.
Let $X = (x_1,\ldots,x_n)$ be a configuration as before
and $P = {\rm conv}(X)$ its convex hull in $\RR^d$.
We assume that the polytope $P$ has dimension $d$.

Fix a real vector $y = (y_1,\ldots,y_n)$. We write
$h_{X,y}$ for the smallest concave function $h$ on $\RR^n$
such that $h(x_i) \geq y_i$ for $i=1,\ldots,n$. The graph of $h_{X,y}$
is the upper convex hull of $\{(x_1,y_1),\ldots,(x_n,y_n)\}$ in $\RR^{n+1}$.
Hence $h_{X,y}(t)$ is the largest real number $h^*$ such 
that $(t,h^*)$ is in the convex hull of $\{(x_1,y_1),\ldots,(x_n,y_n)\}$.
In particular, $h_{X,y}(t) = - \infty$ for $t \not\in P$.
Up to sign, the function $h_{X,y}$ is called
the {\em characteristic section}
in \cite[Definition 5.2.12]{DRS}.
We also refer to $h_{X,y}$ as the {\em tent function},
with (some of) the points $(x_i,y_i)$ being the {\em tent poles}.
The vector $y$ is called {\em relevant} if $h_{X,y}(x_i) = y_i$
for $i=1,\ldots,n$, i.e.~if each $(x_i,y_i)$ is a tent pole.
This fails, for example, if $x_i$ lies in the interior of $P$ and $y_i$ is small
relative to the other~$y_j$.

A {\em regular subdivision} $\Delta$ of $X$ is a collection of subsets of $X$
whose convex hulls are the regions of linearity of the function $h_{X, y}$ for some $y\in \mathbb R^n$.
These regions are $d$-dimensional polytopes, and are called the {\em cells} of $\Delta$.
 A regular subdivision $\Delta$ is a {\em regular triangulation} of $X$ if each 
 cell is a $d$-dimensional simplex.
The {\em secondary polytope} $\Sigma(X)$ is a polytope
of dimension $n-d-1$ in $\RR^n$ whose faces are in bijection with
the regular subdivisions of $X$.
In particular, the vertices of $\Sigma(X)$ correspond to
the  regular triangulations of $X$;
see~\cite[\S 5]{DRS}.

If $\Delta$ is a regular triangulation of $X$,
then the $k$-th coordinate of the  corresponding vertex 
$z^\Delta$ of $\Sigma(X) \subset \RR^n$
is the sum of the volumes of all simplices in $\Delta$ that contain $x_k$.
In symbols,
\begin{equation}
\label{eq:GKZvector}   
z^\Delta_k \,\,= \,\, \sum_{\sigma \in \Delta:  \atop x_k \in \sigma} {\rm vol}(\sigma) .
\end{equation}
We call $z^\Delta = (z^\Delta_1,\ldots,z^\Delta_n) $ 
the {\em GKZ vector} of the triangulation $\Delta$, in reference to~\cite{GKZ}.

The support function of the secondary polytope $\Sigma(X)$ is the
piecewise linear function
$$ \RR^n \rightarrow \RR, \quad y \,\mapsto \, \int_P h_{X,y}(t) dt. $$
This follows from the equation in \cite[page 232]{DRS}.
The function is linear on each cone in the {\em secondary fan} of $X$.
For every $y$ in the secondary cone of a given regular triangulation $\Delta$, 
\begin{equation}
\label{eq:intislinear}
 \int_P h_{X,y}(t) dt \,\,= \,\, z^\Delta \cdot y \,\,=\,\, \sum_{i=1}^n z^\Delta_i y_i . 
 \end{equation}
This means that the convex dual to the secondary polytope has the representation
$$ \Sigma(X)^* \quad = \quad \bigl\{
y \in \RR^n \,:\, z^\Delta \cdot y \leq 1 \,\,\hbox{for all} \,\,\Delta \bigr\} \,\,\, = \,\,\,
\bigl\{ y \in \RR^n \,:\,\int_P h_{X,y}(t) dt \leq 1 \,\bigr\}. $$
Note that $\Sigma(X)^*$ is an unbounded polyhedron in $\RR^n$
since $\Sigma(X)$ has dimension $n-d-1$. Indeed,  $\Sigma(X)^*$ is the product of an
 $(n-d-1)$-dimensional polytope  and an orthant $\RR_{\geq 0}^{d+1}$.

\smallskip

We now introduce an object that looks like a continuous analogue
of $\Sigma(X)^*$. We define
\begin{equation}
\label{eq:samdef}
 \mathcal{S}(X) \quad = \quad \bigl\{\,
y \in \RR^n \,:\,\int_P {\rm exp}(h_{X,y}(t)) dt \leq 1 \,\bigr\}. 
\end{equation}

Inspired by \cite{CGS, CSS}, we 
call $\mathcal{S}(X)$ the {\em Samworth body} of the point configuration $X$.

\begin{proposition}
The Samworth body $\mathcal{S}(X)$ is a full-dimensional closed
 convex set in $\RR^n$. 
\end{proposition}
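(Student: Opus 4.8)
The plan is to establish the three claims—closedness, convexity, and full-dimensionality—separately, exploiting the fact that the function $\Phi(y) := \int_P \exp(h_{X,y}(t))\, dt$ is finite, continuous, and convex on all of $\RR^n$. First I would verify that $\Phi$ is well-defined and finite: for fixed $y$, the tent function $h_{X,y}$ is a real-valued concave (hence continuous, hence bounded) function on the compact polytope $P$, so $\exp(h_{X,y})$ is bounded on $P$ and the integral converges. Continuity of $\Phi$ in $y$ follows because $h_{X,y}(t)$ depends continuously (in fact in a Lipschitz manner) on $y$ uniformly in $t \in P$, so we may pass limits inside the integral by dominated convergence. Closedness of $\mathcal{S}(X) = \Phi^{-1}((-\infty,1])$ is then immediate from continuity of $\Phi$.

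For convexity, the key observation is that $y \mapsto h_{X,y}(t)$ is a convex function of $y$ for each fixed $t \in P$: indeed $h_{X,y}(t) = \max\{\sum \lambda_i y_i : \lambda_i \geq 0, \sum \lambda_i = 1, \sum \lambda_i x_i = t\}$ is a pointwise maximum of linear functions of $y$. Composing with the increasing convex function $\exp$ keeps it convex, and integrating over $t \in P$ preserves convexity. Hence $\Phi$ is convex on $\RR^n$, and its sublevel set $\mathcal{S}(X)$ is convex.

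Full-dimensionality requires producing an $n$-dimensional ball inside $\mathcal{S}(X)$. The simplest route is to observe that if $y$ has all coordinates very negative—say $y_i \leq c$ for all $i$ with $c \to -\infty$—then $h_{X,y}(t) \leq c$ on $P$ (the tent function is bounded above by the largest tent pole height), so $\Phi(y) \leq \exp(c) \cdot \mathrm{vol}(P) < 1$ once $c$ is sufficiently negative. Thus $\mathcal{S}(X)$ contains an entire translated orthant's worth of points, in particular an open ball, so it is full-dimensional. (Alternatively, since $\Phi$ is continuous and convex and not identically $+\infty$, any strict sublevel set of a nonconstant continuous convex function has nonempty interior.)

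The only mildly delicate step is the uniform continuity of $y \mapsto h_{X,y}$ needed to justify interchanging limit and integral in the continuity argument; I expect this to be the main technical point, though it is routine. It follows from the explicit formula $h_{X,y}(t) = \max\{\lambda \cdot y : \lambda \in \NN(t)\}$ where $\NN(t) = \{\lambda \geq 0 : \sum\lambda_i = 1,\ \sum \lambda_i x_i = t\}$ is a fixed compact polytope: for $y, y'$ we get $|h_{X,y}(t) - h_{X,y'}(t)| \leq \|y - y'\|_\infty$ for every $t \in P$, giving not just continuity but global Lipschitz continuity of $\Phi$ via the bound $|\exp(a) - \exp(b)| \leq \exp(\max(a,b))|a-b|$ together with a uniform a priori bound on $h_{X,y}$ over $y$ in a neighborhood.
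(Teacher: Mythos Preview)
Your proof is correct and follows essentially the same approach as the paper: convexity via the inequality $h_{X,\alpha y + (1-\alpha)y'}(t) \le \alpha h_{X,y}(t) + (1-\alpha) h_{X,y'}(t)$ together with convexity of $\exp$, closedness via continuity of the defining integral, and full-dimensionality by noting that all points with sufficiently negative coordinates lie in $\mathcal{S}(X)$. You are more explicit than the paper about justifying continuity (your Lipschitz bound via the formula $h_{X,y}(t) = \max\{\lambda \cdot y : \lambda \in \mathcal{N}(t)\}$), whereas the paper simply asserts it; otherwise the arguments coincide.
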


\begin{proof}
Let $y, y' \in \mathcal{S}(X)$ and consider
a convex combination $y'' = \alpha y + (1-\alpha)y'$ where $0 \leq \alpha \leq 1$.
For all $t \in P$, we have
$h_{X,y''}(t) \leq \alpha h_{X,y}(t) + (1-\alpha) h_{X,y'}(t)$, and therefore
$$ {\rm exp}(h_{X,y''}(t)) \,\leq \, {\rm exp} \bigl(\alpha h_{X,y}(t) + (1-\alpha) h_{X,y'}(t) \bigr) \,\leq \,
\alpha \cdot {\rm exp} ( h_{X,y}(t)) + (1-\alpha) \cdot {\rm exp}(h_{X,y'}(t) ) . $$
Now integrate both sides of this inequality over all $t \in P$.
The right hand side is bounded above by $1$, and hence so is the left hand side.
This means that $y'' \in \mathcal{S}(X)$. We conclude that $\mathcal{S}(X)$ is convex.
It is closed because the defining function is continuous, and it is $n$-dimensional
because all points $y$ whose $n$ coordinates are sufficiently negative lie in $\mathcal{S}(X)$.
\end{proof}

Every boundary point $y$ of the Samworth body $\mathcal{S}(X)$
defines a log-concave probability density function $f_{X,y}$ on $\RR^d$ that is supported
on the polytope $P = {\rm conv}(X)$. This density~is 
\begin{equation}
\label{eq:fydensity}
f_{X,y} \,\,: \,\,\,
 t \,\,\mapsto\,\, \begin{cases} {\rm exp}(h_{X,y}(t))  & {\rm if} \,\,\, t \in P, \\
\qquad 0  & {\rm otherwise}. \end{cases} 
\end{equation}

We fix a positive real vector 
$w = (w_1,\ldots,w_n) \in \RR^n_{\geq 0}$ that satisfies
$\sum_{i=1}^n w_i =1$. The following result rephrases the key
 results of Cule, Samworth and Stewart \cite[Theorems 2 and 3]{CSS},
 who proved this, in a different language, for the
 unit weight case $w = \frac{1}{n} (1,1,\ldots,1)$.

\begin{theorem}
\label{thm:samworth}
The linear functional $ \,y \mapsto w \cdot y = \sum_{i=1}^n w_i y_i$
is bounded above on the Samworth body $\,\mathcal{S}(X)$.
Its maximum over $\,\mathcal{S}(X)\,$ is attained at a unique point~$y^*$.
The corresponding log-concave density $f_{X,y^*}$ is the unique optimal solution to 
the estimation problem~(\ref{eq:ourproblem}).
\end{theorem}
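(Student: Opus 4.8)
The plan is to recast the estimation problem (\ref{eq:ourproblem}) so that it becomes a linear-programming problem over the Samworth body, and then to extract uniqueness and optimality from convexity. The first step is the standard reduction due to Cule--Samworth--Stewart: an optimal density $f$ for (\ref{eq:ourproblem}) may be assumed to be of the ``tent function'' form $f = f_{X,y}$ for some relevant vector $y$. Indeed, if $\log f$ is concave and integrable, then replacing $\log f$ by the smallest concave function agreeing with the values $y_i := \log f(x_i)$ at the sample points can only increase each term $w_i \log f(x_i)$ while not increasing the integral; rescaling to make the integral equal to $1$ only further increases the objective. Hence the supremum in (\ref{eq:ourproblem}) equals the supremum of $\sum_i w_i y_i = w\cdot y$ over all $y \in \RR^n$ with $\int_P \exp(h_{X,y}(t))\,dt \le 1$, i.e.\ over the Samworth body $\mathcal{S}(X)$.

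Next I would establish boundedness of the linear functional $y \mapsto w\cdot y$ on $\mathcal{S}(X)$. Since all $w_i$ are strictly positive and sum to $1$, it suffices to bound $\sum_i w_i y_i$ from above given the integral constraint. The key point is that $h_{X,y}(t) \ge$ (the value of the affine interpolant through any $d+1$ affinely independent tent poles), so on a fixed full-dimensional simplex $\sigma \subseteq P$ with vertices among the $x_i$ one gets a lower bound for $\int_P \exp(h_{X,y})$ that grows like a positive-coefficient exponential combination of the corresponding $y_i$; if $w\cdot y \to \infty$ along a sequence in $\mathcal{S}(X)$ one shows some such combination must also tend to $+\infty$, contradicting $\int_P \exp(h_{X,y}) \le 1$. (One must handle the direction in which all $y_i \to -\infty$, which is allowed but makes $w\cdot y \to -\infty$, not $+\infty$.) Concretely, I would argue that the superlevel set $\{y \in \mathcal{S}(X) : w\cdot y \ge c\}$ is compact for every $c$: it is closed, and the exponential lower bounds force it to be bounded. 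Continuity of $y \mapsto w\cdot y$ then yields attainment of the maximum at some point $y^*$, which necessarily lies on the boundary of $\mathcal{S}(X)$, so $f_{X,y^*}$ is a genuine probability density by (\ref{eq:fydensity}).

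Finally, uniqueness. Strict convexity of the constraint function $\Phi(y) := \int_P \exp(h_{X,y}(t))\,dt$ is too much to hope for (it is affine along certain directions, e.g.\ those not changing any tent pole), so instead I would argue as follows: suppose $y^*$ and $y^{**}$ are two maximizers, both on the boundary with $w\cdot y^* = w\cdot y^{**}$; their midpoint $\bar y$ lies in $\mathcal{S}(X)$ and also attains the maximum, so $\bar y$ is a boundary point, forcing $\Phi(\bar y) = 1$. Running the convexity computation from the proof of the preceding Proposition with $\alpha = \tfrac12$, the inequality $\exp(h_{X,\bar y}(t)) \le \tfrac12\exp(h_{X,y^*}(t)) + \tfrac12\exp(h_{X,y^{**}}(t))$ must be an equality for almost every $t \in P$; since $\exp$ is strictly convex this forces $h_{X,y^*}(t) = h_{X,y^{**}}(t)$ for a.e.\ $t$, hence everywhere by continuity, hence $f_{X,y^*} = f_{X,y^{**}}$ as densities. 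To upgrade this to $y^* = y^{**}$ as vectors in $\RR^n$, one uses that $y^*$ and $y^{**}$ are relevant (each is a maximizer, and a non-relevant $y$ can be strictly improved in the $w\cdot y$ direction since some $w_i y_i$ can be raised), so $y^*_i = h_{X,y^*}(x_i) = h_{X,y^{**}}(x_i) = y^{**}_i$ for all $i$.

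The main obstacle I expect is the boundedness argument: one has to rule out escape to infinity in every direction $v \in \RR^n$, and the function $\Phi$ is genuinely affine (not strictly convex) along many such directions, so the compactness of the superlevel sets is not automatic and requires the explicit simplex-based exponential lower bound for $\Phi$ together with a careful case analysis on the sign pattern of $v$.
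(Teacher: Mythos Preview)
Your proof is correct and takes a genuinely different route from the paper. The paper's argument is a reduction: it invokes \cite[Theorem~2]{CSS} for unit weights, extends to rational $w$ by replicating each $x_i$ with multiplicity $Nw_i$, and then passes to arbitrary real $w$ by continuity. You instead give a self-contained convex-analysis argument, and in doing so you are more careful on two points. For boundedness, your ``exponential lower bound on a simplex'' is exactly what shows that the recession cone of $\mathcal S(X)$ is $\RR^n_{\le 0}$; since $w$ has strictly positive entries, every superlevel set $\{y\in\mathcal S(X):w\cdot y\ge c\}$ then has trivial recession cone and is compact, so your worry about this step is unfounded and no delicate case analysis on sign patterns is needed. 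For uniqueness, you correctly note that $\mathcal S(X)$ is \emph{not} strictly convex in the usual sense (one can lower a non-relevant coordinate $y_i$ without leaving the boundary), whereas the paper's proof simply asserts strict convexity; your argument---strict convexity of $\exp$ forces $h_{X,y^*}=h_{X,y^{**}}$, and relevance of any maximizer then upgrades this to $y^*=y^{**}$---is the right way to make this precise. One small correction: in your reduction step, passing from $\log f$ to $h_{X,y}$ with $y_i=\log f(x_i)$ keeps each term $w_i\log f(x_i)$ \emph{equal} rather than increasing it (since $\log f$ is itself a concave majorant of the $y_i$, one gets $h_{X,y}\le\log f$ and hence $h_{X,y}(x_i)=y_i$); the gain comes only from the subsequent rescaling.
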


\begin{proof}
We are claiming that  $\mathcal{S}(X)$ is strictly
convex and its recession cone is contained
in the negative orthant $\RR^n_{\leq 0}$.
The point $y^*$ represents the solution to
the optimization problem
\begin{equation}
\label{eq:constrained}
 \hbox{Maximize $\,w \cdot y\,$ subject to $\,y \in \mathcal{S}(X)$.} 
\end{equation}

The equivalence of (\ref{eq:ourproblem}) and (\ref{eq:constrained}) stems from
the fact that the optimal solution
$\hat f$ to the maximum likelihood problem  (\ref{eq:ourproblem}) has the form
$ f = f_{X, y}$ for some choice of $ y \in \RR^n$.
This was proven in \cite{CSS} for unit weights
$w = \frac{1}{n}(1,1,\ldots,1)$. The general case for positive weights $w\in\mathbb R^n$
follows analogously. To be precise, let $f$ be the maximum likelihood estimator for $(X,w)$, and
 let $y = \log(f(X))$ be the heights at the samples $X$. Then, the function 
$C \cdot f_{X,y}$, where $C\geq 1$ is the appropriate constant that makes
the integral equal to $1$,  gives a likelihood that is at least as large as $f$ does, 
with equality if and only if $f = f_{X,y}$. Thus, $f = f_{X,y}$.

Let $N$ be the sample size, so that $N_i = Nw_i$ is a positive integer for $i=1,\ldots,n$.
We think of $x_i$ as a sample point in $\RR^d$ that has been observed
$N_i$ times. If $f$ is any probability density function on $\RR^d$,
then the log-likelihood of the $N$ observations with respect to $f$ equals
\begin{equation}
\label{eq:MLF}
 N \cdot \sum_{i=1}^n w_i \cdot {\rm log}( f(x_i)).
 \end{equation}
 Maximizing (\ref{eq:MLF}) over log-concave densities is equivalent
 to maximizing  (\ref{eq:loglikelihood2}).
 We know from \cite[Theorem 2]{CSS} that the maximum is unique and is attained by
 $ f = f_{X,y^*}$ for some $y^* \in \RR^n$.
 Here $y^*$ is the unique relevant point in
$\,   \mathcal{S}(X) = \bigl\{ y \in \RR^n : \int_{\RR^d} f_{X,y}(t)dt \leq 1 \bigr\}\,$
that maximizes the linear functional $w \cdot y$.
Hence  (\ref{eq:ourproblem}) and (\ref{eq:constrained}) are equivalent for all $w \in \RR^n_{\geq 0}$.
 \end{proof}
 
The constrained optimization problem (\ref{eq:constrained})
can be reformulated as an unconstrained optimization problem.
For the unit weight case $w_1 = \cdots = w_n = 1/n$, this was done in
\cite[\S 3.1]{CSS}. This result can easily be extended to general weights.
In the language of convex analysis, Proposition \ref{prop:unconstrained}
says that the optimal value function of the convex optimization problem (\ref{eq:constrained}) is the
{\em Legendre-Fenchel transform} of the convex function $y \mapsto \int_P {\rm exp}(h_{X,y}(t)) dt$.

\begin{proposition}
\label{prop:unconstrained}
The constrained optimization problem {\rm (\ref{eq:constrained})} is equivalent to the unconstrained 
optimization problem
\begin{equation}
\label{eq:unconstrained}
 \hbox{{\rm Maximize} $\,\,w \cdot y - \int_{P} \exp(h_{X,y}(t)) dt\,\,$ over  all $\,\,y \in \RR^n$}, 
\end{equation}
where, as before, $P$ denotes the convex hull of $x_1, \dots , x_n\in\mathbb{R}^d$ and $h_{X,y}$ 
is the tent function, i.e.,~
$h_{X,y} : \mathbb{R}^d \to \mathbb{R}$ is the least concave function satisfying $h_{X,y}(x_i) \geq y_i$ for all
$i = 1,\ldots , n$. 
\end{proposition}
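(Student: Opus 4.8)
The plan is to show that the unique maximizer $y^\star$ of the constrained problem \eqref{eq:constrained} is also the unique maximizer of the unconstrained objective $\Phi(y) := w\cdot y - \int_P \exp(h_{X,y}(t))\,dt$, and conversely. The key structural fact I would exploit is that the map $y\mapsto \int_P \exp(h_{X,y}(t))\,dt$ is positively scaling-friendly in a very explicit way: replacing $y$ by $y + c\mathbf{1}$ (adding the constant $c$ to every coordinate) replaces $h_{X,y}$ by $h_{X,y}+c$, hence multiplies the integral by $e^c$. I would first record this observation, since it drives the entire argument.

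First I would argue the "forward" direction. Let $y^\star$ be the maximizer from Theorem~\ref{thm:samworth}; it lies on the boundary of $\mathcal S(X)$, so $\int_P \exp(h_{X,y^\star}(t))\,dt = 1$ and therefore $\Phi(y^\star) = w\cdot y^\star - 1$. Now take any $y\in\RR^n$ and set $I(y) := \int_P \exp(h_{X,y}(t))\,dt > 0$. Choose the constant shift $c = -\log I(y)$ so that $\tilde y := y + c\mathbf 1$ satisfies $I(\tilde y) = 1$, i.e. $\tilde y \in \partial\mathcal S(X) \subseteq \mathcal S(X)$. By optimality of $y^\star$, $w\cdot\tilde y \le w\cdot y^\star$, and since $\sum_i w_i = 1$ we get $w\cdot y = w\cdot\tilde y + \log I(y) \le w\cdot y^\star + \log I(y)$. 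Hence
\[
\Phi(y) \;=\; w\cdot y - I(y) \;\le\; w\cdot y^\star + \log I(y) - I(y) \;\le\; w\cdot y^\star - 1 \;=\; \Phi(y^\star),
\]
using the elementary inequality $\log u - u \le -1$ for $u>0$, with equality iff $u=1$. This proves $y^\star$ maximizes $\Phi$. For uniqueness, equality forces $I(y)=1$ (so $y$ itself is in $\partial\mathcal S(X)$) and $w\cdot y = w\cdot y^\star$, and then uniqueness in Theorem~\ref{thm:samworth} gives $y = y^\star$.

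Conversely, suppose $y^{\circ}$ maximizes $\Phi$. The gradient-type / first-order reasoning along the shift direction $\mathbf 1$ shows $I(y^\circ) = 1$: indeed $\frac{d}{dc}\Phi(y^\circ + c\mathbf 1)\big|_{c=0} = w\cdot\mathbf 1 - I(y^\circ) = 1 - I(y^\circ)$ must vanish at the maximum, so $I(y^\circ)=1$ and $y^\circ\in\partial\mathcal S(X)$. Then $\Phi(y^\circ) = w\cdot y^\circ - 1$, and combining with the forward inequality $\Phi(y^\circ)\le\Phi(y^\star) = w\cdot y^\star - 1$ gives $w\cdot y^\circ \le w\cdot y^\star$; since $y^\circ\in\mathcal S(X)$, uniqueness in Theorem~\ref{thm:samworth} yields $y^\circ = y^\star$. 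Thus the two problems have the same (unique) solution, which is the asserted equivalence.

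The main obstacle I anticipate is purely a matter of bookkeeping rather than depth: one must make sure that $\Phi$ actually attains its maximum (so that talking about "the" maximizer is legitimate) and is not $-\infty$ or unbounded. Boundedness above follows from the forward inequality itself ($\Phi \le w\cdot y^\star - 1$), and attainment follows because we have exhibited the attaining point $y^\star$; so this is handled by the same computation. The only genuinely delicate point is differentiability of $c\mapsto\Phi(y^\circ + c\mathbf 1)$, but along this particular one-parameter family it is immediate from $\Phi(y^\circ + c\mathbf 1) = w\cdot y^\circ + c - e^c I(y^\circ)$, which is smooth in $c$; so even this reduces to inspecting an explicit scalar function. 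Everything else is the elementary inequality $\log u \le u - 1$ and the scaling identity for $h_{X,y}$, both of which are routine.
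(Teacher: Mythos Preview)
Your proof is correct and rests on the same two ingredients as the paper's: the shift identity $h_{X,y+c\mathbf 1}=h_{X,y}+c$ (so $I(y+c\mathbf 1)=e^cI(y)$) and the scalar inequality $c-\log c\ge 1$. The paper argues in the reverse order---it assumes an unconstrained optimizer $y^*$, shifts it onto $\partial\mathcal S(X)$, and uses $-\log c-1+c\ge 0$ to force $c=1$---but the content is identical; your version has the minor advantage that existence and boundedness of the unconstrained problem fall out of the forward inequality rather than being assumed.
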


\begin{proof}
A proof for uniform weights  is given in \cite{CSS}. We here present
the proof for arbitrary weights $w_1,\ldots,w_n$. These are positive real numbers
that sum to $1$. This ensures that the objective function in (\ref{eq:constrained})
is bounded above, since the exponential term dominates when the coordinates
of $y$ become large. Clearly, the
optimum of (\ref{eq:constrained}) is attained on the boundary  $\partial \mathcal{S}(X)$
of the feasible set $\mathcal{S}(X)$, and we could equivalently optimize over that boundary.

 Now suppose that $y^*$ is an optimal solution of (\ref{eq:unconstrained}). 
 This implies that  $h_{X,y^*}(x_i) = y^*_i$, i.e.~each tent pole touches the tent.
 Otherwise $\,w \cdot y$ in the objective function can be increased 
 without changing $\int_{P} \exp(h_{X,y}(t)) dt$. Let $c:= \int_{P} \exp(h_{X,y^*}(t)) dt$. 
  We claim that $c=1$.
  
    Let $\hat{y}$ be a vector
    in $\RR^n$,   also satisfying $h_{X,\hat{y}}(x_i) = \hat{y}_i$ for all $i$,
     such that $\exp(h_{X,y^*}(t)) = c \exp(h_{X,\hat{y}}(t))$ and 
     $\int_{P} \exp(h_{X,\hat{y}}(t)) dt = 1$.
  This means that  $h_{X,y^*}(t) = \log(c) + h_{X,\hat{y}}(t)$ for all points $t$ in the polytope $ P$.
  In particular,  we have $ y_i^* -  \hat{y}_i = \log(c) $ for $i=1,2,\ldots,n$.
   
  We now analyze the difference of the objective functions at the points $\hat{y}$ and $y^*$:
$$w \cdot \hat{y} - \int_{P} \exp(h_{X,\hat{y}}(t)) dt - \left(w \cdot y^*
 - \int_{P} \exp(h_{X,y^*}(t)) dt\right) = -\log(c) -1+c.$$

Note that the function $c \mapsto -\log(c) -1+c$ is nonnegative.
Since $y^*$ maximizes $w \cdot y - \int_{P} \exp(h_{X,y}(t)) dt$, it follows that $-\log(c) -1+c = 0$, which implies  that $c=1$. So, the claim holds.
We have shown that the solution $y^*$ of (\ref{eq:unconstrained})
      also solves the following problem:
\begin{equation}
\label{eq:constrained_2}
 \hbox{Maximize $\,w \cdot y - \int_{P} \exp(h_{X,y}(t)) dt$ \; subject to\;  $\int_{P} \exp(h_{X,y}(t)) dt = 1$.} 
\end{equation}
But this  is equivalent to the constrained formulation (\ref{eq:constrained}), and the proof is complete.
\end{proof}

The objective function in (\ref{eq:unconstrained}) looks complicated because of the 
integral and because $h_{X,y}(t)$ depends piecewise linearly  on  both $y$ and $t$.
To solve our optimization problem,  a more explicit form is needed.
This was derived by Cule, Samworth and Stewart in \cite[Section B.1]{CSS}.
The formula that follows writes the objective function 
locally as an exponential-rational function.  This can also be derived
from work on polyhedral residues due to Barvinok~\cite{Bar}.

\begin{lemma} \label{lem:integral}
Fix a simplex $\sigma= {\rm conv}(x_0,x_1,\ldots,x_d)$ in $\RR^d$ and
an affine-linear function  $\ell: \RR^d \rightarrow \RR$, and
let $y_0=\ell(x_0), y_1 = \ell(x_1),\ldots,y_d = \ell(x_d)$
 be its values at the vertices. Then
$$ \int_{\sigma} {\rm exp}\bigl( \ell (t)\bigr) dt \,\, \,= \,
\,\,{\rm vol}(\sigma) \cdot \sum_{i=0}^d {\rm exp}(y_i) \! 
\!\prod_{j \in \{0,\ldots,d\}\backslash \{i\}} \!\! \!\!\!  (y_i-y_j)^{-1}. $$
\end{lemma}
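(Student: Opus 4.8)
The plan is to reduce to the standard simplex by an affine change of variables and then compute the resulting integral directly. First, choose the affine map $\phi: \RR^d \to \RR^d$ sending the standard simplex $\Delta_d = \{u \in \RR^d_{\ge 0} : u_1 + \cdots + u_d \le 1\}$ to $\sigma$, with $\phi(e_i) = x_i$ for $i = 1,\ldots,d$ and $\phi(0) = x_0$. The Jacobian of $\phi$ has absolute value $d! \cdot {\rm vol}(\sigma)$, so $\int_\sigma {\rm exp}(\ell(t))\,dt = d!\,{\rm vol}(\sigma) \int_{\Delta_d} {\rm exp}(\ell(\phi(u)))\,du$. Since $\ell \circ \phi$ is affine and takes the value $y_i$ at $e_i$ and $y_0$ at the origin, we have $\ell(\phi(u)) = y_0 + \sum_{i=1}^d (y_i - y_0) u_i$. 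So everything comes down to evaluating
\[
I(a_1,\ldots,a_d) \;:=\; \int_{\Delta_d} {\rm exp}\Bigl(\sum_{i=1}^d a_i u_i\Bigr)\,du,
\qquad a_i = y_i - y_0,
\]
and showing that $d!\,{\rm exp}(y_0)\,I(a_1,\ldots,a_d)$ equals the claimed symmetric sum.

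The main computational step is a clean evaluation of $I(a_1,\ldots,a_d)$. I would do this by induction on $d$, integrating out $u_d$ from $0$ to $1 - u_1 - \cdots - u_{d-1}$, which turns the $d$-dimensional integral into a combination of $(d-1)$-dimensional integrals of the same type (after absorbing the linear term from the upper limit of integration). Carrying the induction through yields the closed form
\[
I(a_1,\ldots,a_d) \;=\; \sum_{i=0}^{d} \frac{{\rm exp}(a_i)}{\prod_{j \ne i}(a_i - a_j)},
\qquad a_0 := 0,
\]
valid when the $a_i$ are pairwise distinct. Equivalently, this is the classical divided-difference identity for the exponential function: $I(a_1,\ldots,a_d) = \exp[a_0, a_1, \ldots, a_d]$, the $d$-th divided difference of ${\rm exp}$ at the nodes $a_0,\ldots,a_d$. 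Translating back via $a_i = y_i - y_0$ and multiplying through by $d!\,{\rm exp}(y_0)$ gives exactly
\[
{\rm vol}(\sigma) \cdot \sum_{i=0}^d {\rm exp}(y_i) \prod_{j \in \{0,\ldots,d\} \setminus \{i\}} (y_i - y_j)^{-1},
\]
since ${\rm exp}(y_0)\,{\rm exp}(a_i) = {\rm exp}(y_i)$ and ${\rm exp}(y_0) \cdot \frac{1}{\prod_{j\ne i}(a_i - a_j)}$ matches the product over $j$ (including $j=0$) once the $\exp(y_0)$ factor is distributed — the bookkeeping of which differences carry the $\exp(y_0)$ is routine.

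The one genuine subtlety — which I expect to be the real obstacle — is the \emph{generic} hypothesis that the $y_i$ (hence the $a_i$) are pairwise distinct: the formula's right-hand side has apparent poles when $y_i = y_j$, yet the left-hand side $\int_\sigma {\rm exp}(\ell(t))\,dt$ is a perfectly finite analytic function of $(y_0,\ldots,y_d) \in \RR^{d+1}$ (indeed of $\ell$). The resolution is that both sides are restrictions of entire functions of $(a_1,\ldots,a_d) \in \CC^d$: the left side manifestly so, and the right side because the divided difference $\exp[a_0,\ldots,a_d]$ extends analytically across the diagonals (the apparent poles cancel). So it suffices to prove the identity on the dense open set where the $a_i$ are distinct, where the inductive integration is valid, and then invoke analytic continuation (or, if one prefers to stay real, a limiting argument as coincident nodes are perturbed apart). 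I would state the lemma for distinct $y_i$ and remark that the general case follows by continuity, or alternatively cite Barvinok~\cite{Bar} for the polyhedral-residue viewpoint in which this confluence phenomenon is the expected behavior.
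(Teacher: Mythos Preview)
Your approach is sound and in fact supplies strictly more than the paper does: the paper's own proof is a two-line citation to \cite[Section~B.1]{CSS} and \cite[Theorem~2.6]{Bar}, with no computation. Your route---affine change of variables to the standard simplex, then evaluating $I(a_1,\ldots,a_d)$ as the divided difference $\exp[a_0,a_1,\ldots,a_d]$---is essentially what lies behind the CSS reference, so the two agree in spirit; you are simply making it self-contained. Your handling of the confluent case (coincident $y_i$) by analytic continuation is correct and is exactly the right way to dissolve the apparent poles.

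One bookkeeping point: with ${\rm vol}(\sigma)$ taken to be Lebesgue measure, your Jacobian step gives $d!\cdot{\rm vol}(\sigma)\cdot e^{y_0}\cdot I(a)$, and after substituting $a_i=y_i-y_0$ this equals $d!\cdot{\rm vol}(\sigma)$ times the symmetric sum, not ${\rm vol}(\sigma)$ times it. The lemma as stated is correct because the paper uses \emph{normalized} volume, ${\rm vol}(\sigma)=d!\times(\text{Lebesgue measure})$; see the convention in~(\ref{eq:areas}). So either write ``the Jacobian has absolute value ${\rm vol}(\sigma)$'' under that convention, or keep your Jacobian and note that the extra $d!$ is absorbed into the paper's normalization. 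As written, your final sentence silently drops a factor of $d!$.
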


\begin{proof}  This follows directly from equation (B.1) in \cite[Section B.1]{CSS}, 
and it can also easily be derived from Barvinok's formula in~\cite[Theorem 2.6]{Bar}. 
\end{proof}

This lemma implies the following formula for integrating exponentials
of piecewise-affine functions on a convex polytope.
This can be regarded as an exponential  variant of~(\ref{eq:intislinear}).

\begin{theorem}\label{thm:IntegralFormula}
Let $\Delta$ be a triangulation of the configuration $X=(x_1,\ldots,x_n)$
and $h : P \rightarrow \RR$ the piecewise-affine function
on $\Delta$ that takes values
$h(x_i) = y_i$ for $i = 1,2,\ldots,n$. Then
$$ \int_{P} {\rm exp}\bigl( h (t)\bigr) dt \,\, \,= \,\,\,
\sum_{i=1}^n
{\rm exp}(y_i) \sum_{\sigma \in \Delta: \atop i \in \sigma} \frac{{\rm vol}(\sigma)}
{\prod_{j \in \sigma \backslash i} (y_i-y_j)}
$$
\end{theorem}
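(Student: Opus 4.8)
The plan is to reduce this to Lemma~\ref{lem:integral} by additivity of the integral over the cells of the triangulation $\Delta$. Since $\Delta$ subdivides $P$ into finitely many $d$-simplices $\sigma$ with disjoint interiors, and since $h$ is affine on each $\sigma$, we can write
\[
\int_P \exp(h(t))\,dt \;=\; \sum_{\sigma \in \Delta} \int_\sigma \exp(h(t))\,dt,
\]
and each summand is handled by Lemma~\ref{lem:integral}.

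First I would fix a cell $\sigma = \mathrm{conv}(x_{i_0},\ldots,x_{i_d}) \in \Delta$. The function $h$ restricted to $\sigma$ is affine-linear (this is exactly what ``piecewise-affine on $\Delta$'' means), so there is an affine $\ell_\sigma : \RR^d \to \RR$ with $\ell_\sigma(x_{i_k}) = y_{i_k}$ for $k = 0,\ldots,d$, and $h|_\sigma = \ell_\sigma|_\sigma$. Applying Lemma~\ref{lem:integral} to $\sigma$ and $\ell_\sigma$ gives
\[
\int_\sigma \exp(h(t))\,dt \;=\; \mathrm{vol}(\sigma)\cdot \sum_{i \in \sigma} \exp(y_i)\prod_{j \in \sigma\backslash i}(y_i - y_j)^{-1},
\]
where I abbreviate ``$i \in \sigma$'' for ``$x_i$ is a vertex of $\sigma$''. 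Then I would sum over all $\sigma \in \Delta$ and interchange the two (finite) sums, collecting for each index $i \in \{1,\ldots,n\}$ the contributions from those simplices $\sigma$ that contain $x_i$; this yields exactly
\[
\sum_{i=1}^n \exp(y_i) \sum_{\sigma \in \Delta:\, i \in \sigma} \frac{\mathrm{vol}(\sigma)}{\prod_{j \in \sigma\backslash i}(y_i - y_j)},
\]
which is the claimed formula.

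The only genuine subtlety — and the step I would flag as requiring care rather than difficulty — is the well-definedness of the integrand and the right-hand side when the $y_i$ are not in ``general position'': if $y_i = y_j$ for two vertices $x_i, x_j$ of a common simplex $\sigma$, the factors $(y_i - y_j)^{-1}$ in Lemma~\ref{lem:integral} are undefined. I would address this exactly as one does for the secondary-polytope support function: the left-hand side $\int_P \exp(h(t))\,dt$ is manifestly continuous in $y$ for fixed $\Delta$, and each per-simplex expression extends continuously (the apparent poles cancel, as is standard for these Barvinok/Brion-type formulas, since $\exp$ is entire), so the identity for generic $y$ propagates to all $y$ by density and continuity. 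Alternatively one can simply adopt the convention that the formula is read in this limiting sense. With that caveat noted, the proof is just additivity of the integral plus a reindexing of a finite double sum.
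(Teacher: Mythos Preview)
Your proof is correct and follows exactly the paper's approach: sum the formula of Lemma~\ref{lem:integral} over all maximal simplices of $\Delta$ and regroup the terms by collecting the coefficient of each $\exp(y_i)$. Your added remark about the degenerate case $y_i = y_j$ is a nice clarification, but the paper simply leaves this implicit.
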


\begin{proof}
We add the expressions in Lemma \ref{lem:integral} over all
maximal simplices $\sigma$ of the triangulation $\Delta$,
and we collect the rational function multipliers for 
 each of the $n$ exponentials ${\rm exp}(y_i)$.
 \end{proof}

This formula underlies the efficient solution to the estimation problem (\ref{eq:ourproblem})
that is implemented in the {\tt R} package {\tt LogConcDEAD} \cite{CGS}.
We record the following algebraic reformulation, which will be used in our
study in the subsequent sections. This follows from Theorem \ref{thm:IntegralFormula}.

\begin{corollary} \label{cor:optsecondary}
The equivalent optimization problems (\ref{eq:ourproblem}), 
(\ref{eq:constrained}), (\ref{eq:unconstrained}) are also equivalent to
\begin{equation}
\label{eq:optsecondary}
{\rm Maximize} \,\,\,w \cdot y \,- \,\sum_{\sigma \in \Delta}
\sum_{i \in \sigma}
\frac{{\rm vol}(\sigma) \cdot {\rm exp}(y_i)} 
{\prod_{j \in \sigma \backslash i} (y_i-y_j)},
\end{equation}
where $y $ runs over $ \RR^n$ and $\Delta$ is a regular triangulation of $X$ whose
secondary cone contains~$y$.
\end{corollary}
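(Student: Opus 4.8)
The plan is to bootstrap from the equivalences already proved. By Theorem~\ref{thm:samworth} the estimation problem (\ref{eq:ourproblem}) coincides with the constrained problem (\ref{eq:constrained}), and by Proposition~\ref{prop:unconstrained} that one coincides with the unconstrained problem (\ref{eq:unconstrained}); so it suffices to rewrite the objective $w\cdot y - \int_P \exp(h_{X,y}(t))\,dt$ of (\ref{eq:unconstrained}) in the explicit form displayed in (\ref{eq:optsecondary}). The only nontrivial ingredient is the integral, and the tool for evaluating it is Theorem~\ref{thm:IntegralFormula}.

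First I would fix a regular triangulation $\Delta$ of $X$ and restrict attention to $y$ in its secondary cone $C_\Delta$; since the secondary fan is complete, every $y\in\RR^n$ lies in some such $C_\Delta$, which is exactly the object quantified over in (\ref{eq:optsecondary}). For $y\in C_\Delta$ the regions of linearity of the tent function $h_{X,y}$ are unions of cells of $\Delta$, so $h_{X,y}$ is affine on every maximal simplex of $\Delta$. If moreover $y$ is \emph{relevant}, i.e.\ $h_{X,y}(x_i)=y_i$ for all $i$, then $h_{X,y}$ is precisely the piecewise-affine function on $\Delta$ with these values, and Theorem~\ref{thm:IntegralFormula} applies verbatim:
$$\int_P \exp(h_{X,y}(t))\,dt \;=\; \sum_{i=1}^n \exp(y_i) \sum_{\sigma\in\Delta,\; i\in\sigma} \frac{{\rm vol}(\sigma)}{\prod_{j\in\sigma\backslash i}(y_i-y_j)}.$$
Interchanging the two sums (first over the simplices $\sigma\in\Delta$, then over their vertices $i$) turns the right-hand side into $\sum_{\sigma\in\Delta}\sum_{i\in\sigma}{\rm vol}(\sigma)\exp(y_i)\big/\prod_{j\in\sigma\backslash i}(y_i-y_j)$, which is exactly the term subtracted in (\ref{eq:optsecondary}). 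Hence on the set of relevant points of $C_\Delta$ the objectives of (\ref{eq:unconstrained}) and of (\ref{eq:optsecondary}) literally coincide.

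It remains to pass from this cone-by-cone identity to an equivalence of optimization problems. The proof of Proposition~\ref{prop:unconstrained} shows that the unique optimizer $y^*$ of (\ref{eq:unconstrained}) is relevant; hence $y^*$ lies in the relevant locus of $C_\Delta$ for some regular triangulation $\Delta$, where the two objectives agree, so $(y^*,\Delta)$ is a feasible point of (\ref{eq:optsecondary}) realizing the optimal value of (\ref{eq:unconstrained}). For the converse bound I would argue that the supremum of the objective of (\ref{eq:optsecondary}) over each cone $C_\Delta$ is already attained on the relevant locus: a non-relevant $y\in C_\Delta$ has $h_{X,y}(x_i)>y_i$ only at points $x_i$ that $\Delta$ does not use as a vertex, so raising those $y_i$ up to $h_{X,y}(x_i)$ leaves the $\Delta$-interpolant---and therefore the subtracted sum in (\ref{eq:optsecondary})---unchanged while strictly increasing $w\cdot y$, so nothing is lost by restricting to relevant $y$. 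On the relevant locus the objective of (\ref{eq:optsecondary}) equals that of (\ref{eq:unconstrained}), so its supremum over $C_\Delta$ is at most the optimal value of (\ref{eq:unconstrained}); maximizing over $\Delta$ yields equality of optimal values, and strict convexity of $\mathcal{S}(X)$ (Theorem~\ref{thm:samworth}) forces the optimum of (\ref{eq:optsecondary}) to be attained only at $y^*$.

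The main obstacle I anticipate is the bookkeeping around \emph{relevance}: one must invoke Theorem~\ref{thm:IntegralFormula} only where $h_{X,y}$ genuinely interpolates the prescribed values, and be careful with $y$ on the boundary of a secondary cone, where some differences $y_i-y_j$ in the denominators of (\ref{eq:optsecondary}) may vanish---there the formula is to be read as its finite limiting value (Lemma~\ref{lem:integral} guarantees the integral stays finite), which is harmless since the optimizer lies in the relative interior of the relevant locus of its cone. Past these technicalities, the substantive content is just the summation-order rearrangement supplied by Theorem~\ref{thm:IntegralFormula}.
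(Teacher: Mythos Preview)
Your approach is correct and matches the paper's one-line justification (``This follows from Theorem~\ref{thm:IntegralFormula}''). One simplification worth noting: the detour through relevance is unnecessary, because for \emph{every} $y\in C_\Delta$ the tent function $h_{X,y}$ is already affine on each simplex $\sigma\in\Delta$ and satisfies $h_{X,y}(x_i)=y_i$ at every vertex $x_i$ of $\sigma$; the double sum in (\ref{eq:optsecondary}) does not involve the remaining coordinates $y_i$ at all, so Lemma~\ref{lem:integral} summed over the simplices of $\Delta$ gives the equality of the two objective functions on the whole cone $C_\Delta$, not just on its relevant locus. Thus (\ref{eq:unconstrained}) and (\ref{eq:optsecondary}) are literally the same function of $y$, and no separate argument for matching optima is needed.
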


\begin{figure}[h]
\begin{center}\includegraphics[width = 0.65\textwidth]{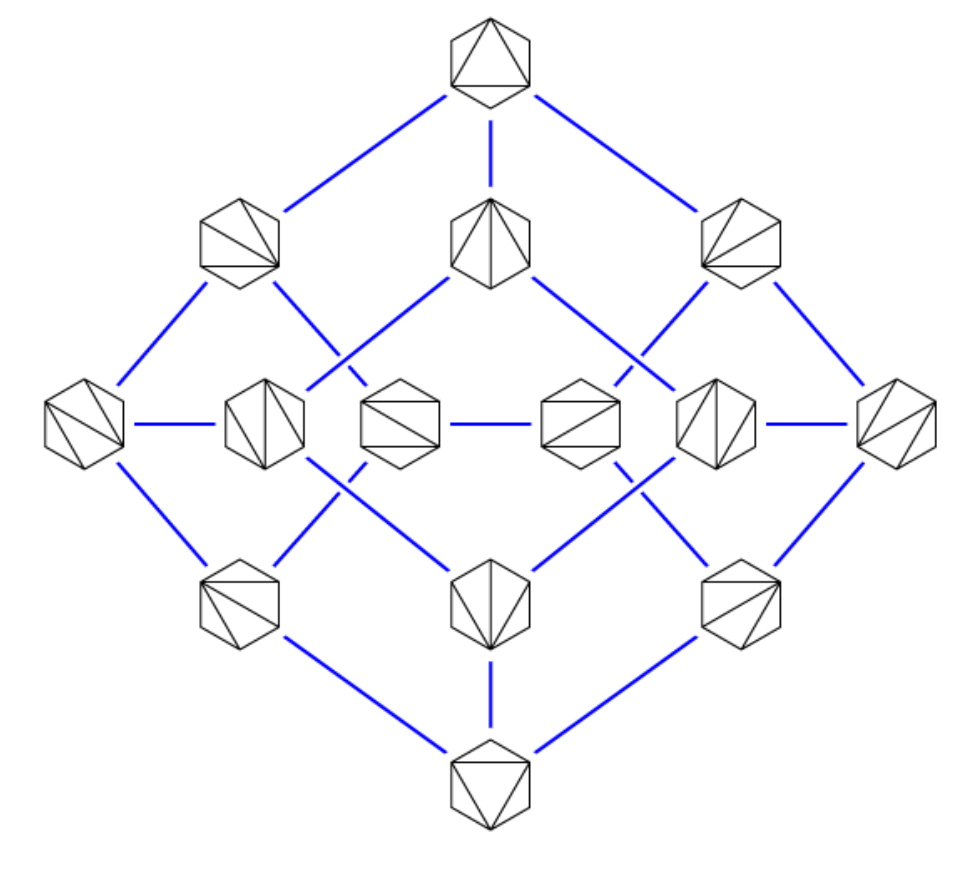}\end{center}
\vspace{-0.32in}
\caption{The associahedron is the secondary polytope for the six vertices of a hexagon.
This diagram belongs to David Epstein. It was first appeared in his blog post 
\cite{Epp1} and it was published as Figure 15 in his article
\cite{Epp2}.
\label{fig:associahedron}} 
\end{figure}

We close this section with an example that illustrates the various concepts seen so far.

\begin{example}
\label{ex:hexagon} \rm
Let $d=2$ and $n=6$.  Take $X$ to be six points in convex position in the plane,
labeled cyclically in counterclockwise order.
The normalized area of the triangle formed by any three of the vertices 
of the hexagon $P  = {\rm conv}(X)$ 
is computed as a $3 \times 3$-determinant
\begin{equation}
\label{eq:areas}
\quad v_{ijk} \,\, := \,\,
{\rm vol}\bigl( {\rm conv}(x_i, x_j, x_k) \bigr) \,\, = \,\,
{\rm det}\begin{pmatrix} 1 & 1 & 1 \\ x_i & x_j & x_k \end{pmatrix} 
\,\,\quad {\rm for}\, \,\, \,1 \leq i < j < k \leq 6. \quad
\end{equation}
The configuration $X$ has $14$ regular triangulations. These come in
three symmetry classes: six triangulations like
 $\Delta = \{123,134,145,156\}$, six triangulations like
 $\,\Delta' = \{123,134,146,456\}$, and two triangulations like
 $\,\Delta'' = \{123,135, 156,345\} $.
 The corresponding GKZ vectors are
$$ \begin{matrix}
z^{\Delta} &=& \bigl(\,v_{123}+v_{134}+v_{145}+v_{156}\,,\,
v_{123}\,,\,v_{123}+v_{134}\,,\,v_{134}+v_{145}\,,\,v_{145}+v_{156}\,,v_{156}\, \bigr)  ,\\
z^{\Delta'} &=&
\bigl(\,v_{123}+v_{134}+v_{146}\,,\, v_{123}\,,\, v_{123}+v_{134}\,,\, v_{134}+v_{146}+v_{456}\,,
\, v_{456}\,, \,v_{146}+v_{456}\, \bigr), \\
z^{\Delta'} &=&
\bigl(\, v_{123}+v_{135}+v_{156} \,,\, v_{123}\,, \, v_{123}+v_{135}+v_{345}\,,\,
v_{345}\,, \,v_{135}+v_{156}+v_{345}\,, \,v_{156}\, \bigr) ,
\end{matrix}
$$
as defined in (\ref{eq:GKZvector}).
The secondary polytope $\Sigma(X)$ is the convex hull of these $14$ points in $\RR^6$.
This is a simple $3$-polytope with $14$ vertices, $21$ edges and $9$ facets,
shown in Figure \ref{fig:associahedron}. This polytope is known as the {\em associahedron}. It  has $45 = 14+21+9+1$ faces in total, one for each of the $45$ polyhedral
subdivisions of $X$.  These are the supports of the functions $h_{X,y}$.

For example, the edge of $\Sigma(X)$ that connects $z^{\Delta}$ and $z^{\Delta'}$
represents the subdivision $\{123,134,1456\}$, with two triangles and one quadrangle.
The smallest face containing $\{z^{\Delta},z^{\Delta'},z^{\Delta''}\}$ is two-dimensional.
It is a pentagon, encoding the subdivision $\{123, 13456\}$.

The Samworth body
$\mathcal{S}(X)$ is full-dimensional in $\RR^6$. Its boundary
is stratified into $45$ pieces, one for each subdivision of $X$.
For any given $w \in \RR^6$, the optimal solution $y^*$ to~(\ref{eq:constrained})
lies in precisely one of these $45$ strata, depending on the shape
of the optimal density  $f_{X,y^*}$.

Algebraically, we can find $y^*$  by computing
the maximum among $14$ expressions like
\begin{equation}
\label{eq:maximumamong14}
 \begin{matrix} w_1 y_1 + w_2 y_2 +  \cdots + w_6 y_6
 & - & v_{123}\cdot \bigl( \frac{{\rm exp}(y_1)}{(y_1-y_2)(y_1-y_3)}+
 \frac{{\rm exp}(y_2)}{(y_2 - y_1)(y_2 - y_3)}+
 \frac{{\rm exp}(y_3)}{(y_3 - y_1)(y_3 - y_2)}\bigr) \smallskip \\
  & - & v_{134}\cdot \bigl( \frac{{\rm exp}(y_1)}{(y_1-y_3)(y_1-y_4)}+
 \frac{{\rm exp}(y_3)}{(y_3 - y_1)(y_3 - y_4)}+
 \frac{{\rm exp}(y_4)}{(y_4 - y_1)(y_4 - y_3)}\bigr) \smallskip \\
 & - & v_{145}\cdot \bigl( \frac{{\rm exp}(y_1)}{(y_1-y_4)(y_1-y_5)}+
 \frac{{\rm exp}(y_4)}{(y_4 - y_1)(y_4 - y_5)}+
 \frac{{\rm exp}(y_5)}{(y_5 - y_1)(y_5 - y_4)}\bigr)  \smallskip \\
 & - & v_{156}\cdot \bigl( \frac{{\rm exp}(y_1)}{(y_1-y_5)(y_1-y_6)}+
 \frac{{\rm exp}(y_5)}{(y_5 - y_1)(y_5 - y_6)}+
 \frac{{\rm exp}(y_6)}{(y_6 - y_1)(y_6 - y_5)}\bigr) .
\end{matrix}
\end{equation}
This formula is the objective function in
(\ref{eq:optsecondary}) for
 the triangulation $\Delta = \{123,134,145,156\}$.
 The mathematical properties of this optimization process will be studied in the next sections.
\end{example}

\section{Every Regular Subdivision Arises}

Our goal in this section is to prove Theorem~\ref{thm:converse}.
We begin by examining the function
\begin{equation}
\label{eq:defH}
H\,:\, \RR^d \rightarrow \RR\,,\,\,\,
(u_1,\ldots, u_d) \,\, \mapsto \,\, (-1)^d\frac{1 + u_1^{-1} + \cdots + u_d^{-1}}{u_1 u_2 \cdots u_d }
\,+\,  \sum_{j=1}^d \frac{e^{u_j}}{u_j^2 \prod_{k \not= j} (u_j - u_k)}.
\end{equation}


\begin{proposition}\label{prop:Hformula}
The function $H$ is well-defined on $\RR^d$. It admits the series expansion
\begin{equation}
\label{eq:Hidentity} H(u_1,\dots, u_d) \quad = \quad \sum_{r=0}^\infty\frac{h_r(u_1,\ldots,u_d)}{(r+d+1)!},
\end{equation}
where $h_r$ is the complete homogeneous symmetric polynomial of degree $r$
in $d$ unknowns.
\end{proposition}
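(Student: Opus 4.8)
The plan is to recognize both sides of (\ref{eq:Hidentity}) as coming from the integral $\int_{\sigma}\exp(\ell(t))\,dt$ over a standard simplex, and to match coefficients. First I would set up a convenient model simplex. Let $\sigma_0 = \mathrm{conv}(0, e_1,\ldots,e_d)\subset\RR^d$, the standard simplex, whose volume is $1/d!$. Choose an affine-linear function $\ell$ with $\ell(0)=0$ and $\ell(e_j)=u_j$ for $j=1,\ldots,d$; that is, $\ell(t)=u_1t_1+\cdots+u_dt_d$. Lemma~\ref{lem:integral}, applied with $x_0=0$, $y_0=0$, $x_j=e_j$, $y_j=u_j$, gives one closed-form expression for $\int_{\sigma_0}\exp(\ell(t))\,dt$, namely
$$
\frac{1}{d!}\left( \prod_{k=1}^d(-u_k)^{-1} \;+\; \sum_{j=1}^d \frac{e^{u_j}}{u_j\prod_{k\neq j}(u_j-u_k)}\right).
$$
This is close to the first and third terms of $H$ but not identical; the powers of $u_j$ in the denominators are off by one and there are the extra $u_j^{-1}$ summands in the numerator of the first term of $H$. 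This tells me that $H$ itself is not quite this integral but a closely related one, and the right move is to compute $\int_{\sigma_0}\exp(\ell(t))\,dt$ a second way, by directly expanding the exponential and integrating monomials term by term.

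The direct computation goes as follows. Expanding, $\int_{\sigma_0}\exp(\ell(t))\,dt = \sum_{m=0}^\infty \frac{1}{m!}\int_{\sigma_0}(u_1t_1+\cdots+u_dt_d)^m\,dt$. Using the standard Dirichlet integral $\int_{\sigma_0} t_1^{a_1}\cdots t_d^{a_d}\,dt = \frac{a_1!\cdots a_d!}{(a_1+\cdots+a_d+d)!}$ and the multinomial theorem, the degree-$m$ term works out to $\frac{1}{(m+d)!}\sum_{|a|=m} u^a = \frac{h_m(u_1,\ldots,u_d)}{(m+d)!}$, since $h_m$ is exactly the sum of all monomials of degree $m$. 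Hence
$$
\int_{\sigma_0}\exp(\ell(t))\,dt \;=\; \sum_{m=0}^\infty \frac{h_m(u_1,\ldots,u_d)}{(m+d)!}.
$$
Comparing with the claimed series $\sum_r h_r/(r+d+1)!$, I see $H$ should equal a shifted/integrated version of this integral. Concretely, introduce a scaling parameter: consider $G(s) := \int_{\sigma_0}\exp(s\,\ell(t))\,dt = \sum_{m\ge0} s^m h_m(u)/(m+d)!$, and note $\int_0^1 G(s)\,ds = \sum_{m\ge0} h_m(u)/((m+1)(m+d)!)$ — still not right. The cleaner route: observe that both candidate closed forms (Lemma~\ref{lem:integral} form and the term-by-term form) are valid for $\int_{\sigma_0}\exp(\ell(t))dt$, so they are equal as meromorphic functions of $u$; then $H(u)$ is obtained from the Lemma~\ref{lem:integral} form by the substitution/bookkeeping that produces the extra $u_j^{-1}$ and the squared denominators, which I will identify as $\int_0^1 \left(\int_{\sigma_0}\exp(s\ell(t))\,dt\right)ds$ evaluated via Lemma~\ref{lem:integral} inside the integral. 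Carrying the $s$-integral through the Lemma~\ref{lem:integral} expression: $\int_0^1 \prod_k(-su_k)^{-1}ds$ diverges, so in fact the correct identification is $\int_0^1 s\bigl(\int_{\sigma_0}\exp(s\ell)dt\bigr)ds$ or a similar weight — I would pin down the exact weight $w(s)$ by demanding that $\int_0^1 w(s)s^m ds \cdot \tfrac{1}{(m+d)!} = \tfrac{1}{(m+d+1)!}$, i.e. $\int_0^1 w(s)s^m ds = \tfrac{1}{m+d+1}$, which forces $w(s)=s^d$. Thus the identity to prove is
$$
H(u_1,\ldots,u_d) \;=\; \int_0^1 s^d\!\int_{\sigma_0}\exp\!\bigl(s(u_1t_1+\cdots+u_dt_d)\bigr)\,dt\,ds,
$$
and the right-hand side equals $\sum_r h_r(u)/(r+d+1)!$ by the term-by-term computation above, while it equals the expression (\ref{eq:defH}) by plugging Lemma~\ref{lem:integral} into the $s$-integrand and computing the elementary one-variable integrals $\int_0^1 s^d\cdot s^{-d}\,ds = 1$ for the polynomial-in-$1/s$ part (explaining the first term of $H$ and its $u_j^{-1}$ corrections coming from $\int_0^1 s^d e^{su_j}ds$ expanded appropriately) and $\int_0^1 s^d e^{su_j}\,ds$ giving, after the $d$-fold integration by parts implicit in the rational prefactor, the $e^{u_j}/(u_j^2\prod_{k\neq j}(u_j-u_k))$ terms together with the polar part.

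For well-definedness: each individual summand in (\ref{eq:defH}) has poles along the hyperplanes $u_j=0$ and $u_i=u_j$, but the integral representation is manifestly an entire function of $u\in\CC^d$ (dominated convergence, the integrand is jointly continuous), so the apparent singularities are removable; equivalently, the power-series side (\ref{eq:Hidentity}) converges for all $u$ since $|h_r(u)|\le \binom{r+d-1}{d-1}\|u\|_\infty^r$ and the factorial in the denominator dominates, giving an entire function. So once the main identity is established, well-definedness is immediate. The main obstacle I anticipate is getting the bookkeeping of the one-variable integrals $\int_0^1 s^d e^{su_j}\,ds$ exactly right — in particular tracking how repeated integration by parts (or the residue manipulation behind Lemma~\ref{lem:integral}) redistributes factors of $u_j$ between the polar term and the exponential terms, so that the $+u_j^{-1}$ pieces land precisely in the first term of (\ref{eq:defH}) and the denominators become $u_j^2$ rather than $u_j$. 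A cleaner way to sidestep this, which I would try first, is to avoid closed forms entirely: prove the identity purely at the level of formal power series in $u$ by showing both sides satisfy the same recursion. Multiply (\ref{eq:defH}) through to clear denominators and verify that the resulting symmetric-polynomial identity matches the generating-function identity $\sum_r h_r t^r = \prod_k (1-u_k t)^{-1}$ evaluated against the kernel $\sum_r t^r/(r+d+1)!$; this reduces the whole proposition to a finite check in the ring of symmetric functions plus a convergence remark, and is the approach I would ultimately write up.
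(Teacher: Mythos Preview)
Your integral-representation idea is sound and does lead to a correct proof, but it is genuinely different from the paper's argument, and your write-up stops short of the one concrete computation that makes it work.

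\medskip
\noindent\textbf{Comparison with the paper.} The paper never introduces an auxiliary integral. It substitutes the Taylor series of each $e^{u_j}$ directly into the definition~(\ref{eq:defH}), obtaining sums of the form $\sum_j u_j^{r}/\prod_{k\neq j}(1-u_k/u_j)$; for $r\ge 0$ this is $h_r(u)$ by the divided-difference/Brion identity, for $r\in\{-d+1,\dots,-1\}$ it vanishes, and for $r\in\{-d-1,-d\}$ it cancels the rational first term of $H$ (via Ehrhart reciprocity). Your approach instead realizes $H$ as $\int_0^1 s^d\int_{\sigma_0}\exp(s\,u\!\cdot\! t)\,dt\,ds$, then matches the series side by the Dirichlet integral and the closed-form side by Lemma~\ref{lem:integral}. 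A bonus of your route: the substitution $t'=st$ and Fubini turn your double integral into $\int_{\sigma_0}(1-\sum_i t_i)\exp(u\!\cdot\! t)\,dt$, so you get Proposition~\ref{prop:magic} for free.

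\medskip
\noindent\textbf{Where your proposal has gaps.} First, a normalization slip: in Lemma~\ref{lem:integral} the symbol ${\rm vol}$ is the \emph{normalized} volume ($d!$ times Lebesgue, cf.~(\ref{eq:areas})), so ${\rm vol}(\sigma_0)=1$, not $1/d!$; with your $1/d!$ the two sides would disagree by that factor. Second, the ``bookkeeping'' you flag as the obstacle is actually a two-line calculation, and the alternative formal-power-series plan you retreat to is too vague to count as a proof. Here is what you are missing. With ${\rm vol}(\sigma_0)=1$, Lemma~\ref{lem:integral} gives
\[
s^d\!\int_{\sigma_0}\!e^{s\,u\cdot t}\,dt \;=\; \frac{(-1)^d}{\prod_k u_k}\;+\;\sum_{j=1}^d\frac{e^{su_j}}{u_j\prod_{k\neq j}(u_j-u_k)} ,
\]
so $\int_0^1(\cdots)ds = \dfrac{(-1)^d}{\prod u_k}+\sum_j\dfrac{e^{u_j}-1}{u_j^2\prod_{k\neq j}(u_j-u_k)}$. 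The $e^{u_j}$-pieces already match the exponential sum in~(\ref{eq:defH}); the remaining ``$-1$'' pieces match the $u_j^{-1}$ corrections because
\[
-\sum_{j=1}^d\frac{1}{u_j^{2}\prod_{k\neq j}(u_j-u_k)}\;=\;\frac{(-1)^d\sum_j u_j^{-1}}{\prod_k u_k},
\]
which follows from the residue theorem applied to $1/(x^2\prod_k(x-u_k))$. No integration by parts is needed. Once you supply this identity and fix the volume normalization, your argument is complete.
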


\begin{proof}
We substitute the Taylor expansion of the exponential function in the sum on the right hand side of (\ref{eq:defH}).
This sum then becomes
$$\sum_{j=1}^d \frac{e^{u_j}}{u_j^2 \prod_{k \not= j} (u_j - u_k)} \quad= \quad\sum_{\ell=0}^\infty \frac{1}{\ell !} \sum_{j=1}^d \frac{u_j^{\ell-2}}{\prod_{k \not= j} (u_j - u_k)}
\quad
 $$
$$ = \quad
 \sum_{\ell=0}^\infty \frac{1}{\ell !} \sum_{j=1}^d \frac{u_j^{\ell-d-1}}{\prod_{k \not= j} (1-u_k/u_j )}
\quad = \quad
 \sum_{r=-d-1}^\infty \frac{1}{( r+d+1)!} \sum_{j=1}^d 
 \frac{u_j^r}{\prod_{k \not= j} (1-u_k/u_j )}
$$
For nonnegative values of the summation index $r = \ell - d- 1$, the inner summand 
equals $h_r(u_1,\ldots,u_d)$, by
Brion's Theorem \cite[Theorem~12.13]{MS}.
For negative values of $r$, we use Ehrhart Reciprocity,
in the form of \cite[Lemma 12.15, eqn~(12.7)]{MS},
as seen in \cite[Example 12.14]{MS}.
The two terms for $r \in \{-d-1,-d\}$
cancel with the left summand on the right hand side of (\ref{eq:defH}).
The terms for $r \in \{-d+1,\ldots,-2,-1\}$ are zero.
This implies~(\ref{eq:Hidentity}).
\end{proof}

We shall derive a useful integral representation of our function $H$.
What follows is a Lebesgue integral over the standard 
simplex $\,\Sigma_d = \{(y_1,\dots, y_d) \in \RR^d: \,y_i \geq 0 ,\, \sum_{i}y_i \leq 1\}$.

\begin{proposition}\label{prop:magic}
The function $H$ can be expressed as the following integral:
\begin{align}\label{eqn:HIntegralExpression}
H(u_1,\ldots, u_d) \,\,=\, \int_{\Sigma_d}
\left(1 - \sum_{i=1}^d t_i\right)\exp\left(\sum_{i=1}^d u_i t_i\right)\text{d} t_1\dots\text{d} t_d.
\end{align}
\end{proposition}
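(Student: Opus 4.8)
The plan is to compute the integral on the right-hand side of \eqref{eqn:HIntegralExpression} directly and to check that it agrees with the power series \eqref{eq:Hidentity} for $H$ established in Proposition~\ref{prop:Hformula}; since that series converges for every $u\in\RR^d$ and represents $H$, this identification suffices. First I would expand the exponential and integrate term by term. For fixed $u$ the series $\sum_{m\ge 0}\frac{1}{m!}\bigl(\sum_i u_i t_i\bigr)^m$ converges uniformly on the compact simplex $\Sigma_d$ (Weierstrass $M$-test, with majorant $\exp(\sum_i|u_i|)$), and the factor $1-\sum_i t_i$ is bounded there, so
$$\int_{\Sigma_d}\Bigl(1-\sum_{i=1}^d t_i\Bigr)\exp\Bigl(\sum_{i=1}^d u_i t_i\Bigr)\,dt\;=\;\sum_{m=0}^\infty\frac{1}{m!}\int_{\Sigma_d}\Bigl(1-\sum_{i=1}^d t_i\Bigr)\Bigl(\sum_{i=1}^d u_i t_i\Bigr)^m dt.$$
By the multinomial theorem $\bigl(\sum_i u_i t_i\bigr)^m=\sum_{|\alpha|=m}\frac{m!}{\alpha!}\,u^\alpha t^\alpha$, where $\alpha$ ranges over $\ZZ_{\ge 0}^d$ and $\alpha!=\prod_i\alpha_i!$, so everything reduces to the monomial moments $\int_{\Sigma_d}(1-\sum_i t_i)\,t^\alpha\,dt$.

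The computational core is the Dirichlet integral
$$\int_{\Sigma_d}\prod_{i=1}^d t_i^{a_i-1}\Bigl(1-\sum_{i=1}^d t_i\Bigr)^{a_{d+1}-1}dt\;=\;\frac{\Gamma(a_1)\cdots\Gamma(a_{d+1})}{\Gamma(a_1+\cdots+a_{d+1})},$$
valid for $a_i>0$ and provable by an easy induction on $d$ from the one-dimensional Beta integral. Taking $a_i=\alpha_i+1$ for $i\le d$ and $a_{d+1}=2$ yields $\int_{\Sigma_d}(1-\sum_i t_i)\,t^\alpha\,dt=\alpha!/(|\alpha|+d+1)!$. Substituting this and cancelling, using $\frac{1}{m!}\cdot\frac{m!}{\alpha!}\cdot\alpha!=1$, the $m$-th term becomes $\sum_{|\alpha|=m}u^\alpha/(m+d+1)!=h_m(u_1,\dots,u_d)/(m+d+1)!$, because $h_m$ is precisely the sum of all monomials of degree $m$. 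Summing over $m$ reproduces \eqref{eq:Hidentity}, which completes the proof.

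I do not anticipate a serious obstacle: the only delicate points are the interchange of summation and integration, handled by the uniform bound above on the finite-measure set $\Sigma_d$, and the Dirichlet moment formula, which is classical. An alternative route avoids the series entirely: integrating out the last coordinate gives $\int_{\Sigma_d}(1-\sum_{i=1}^d t_i)\exp(\sum_{i=1}^d u_i t_i)\,dt=\int_{\Sigma_{d+1}}\exp(\sum_{i=1}^d u_i t_i)\,dt$, an integral of an exponential of an affine function over the simplex $\Sigma_{d+1}={\rm conv}(0,e_1,\dots,e_{d+1})$, to which Lemma~\ref{lem:integral} applies. The mild subtlety there is that the vertices $0$ and $e_{d+1}$ receive the same value $0$, so one must pass to the limit of the (manifestly regular) right-hand side of Lemma~\ref{lem:integral} as those two values coincide; after simplification one recovers exactly the closed form \eqref{eq:defH}.
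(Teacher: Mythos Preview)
Your argument is correct. Expanding the exponential, integrating monomials via the Dirichlet formula, and recognizing $\sum_{|\alpha|=m}u^\alpha=h_m(u)$ recovers the series \eqref{eq:Hidentity} exactly as you say; the interchange of sum and integral is justified by uniform convergence on the compact simplex.

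The paper takes a markedly different route. It identifies $h_r$ with the zonal polynomial $Z_{(r)}$, uses the vanishing $[1]_\lambda=0$ for partitions with more than one part to extend the sum in \eqref{eq:Hidentity} to all partitions, and thereby recognizes $H$ as $\frac{1}{(d+1)!}\,{}_1F_1(1;d+2;u_1,\ldots,u_d)$, a confluent hypergeometric function of matrix argument; the integral representation \eqref{eqn:HIntegralExpression} is then read off from a known formula for ${}_1F_1$ in \cite{GR}. Your proof is more elementary and self-contained: it needs only the Dirichlet integral and the definition of $h_m$, avoiding the machinery of zonal polynomials and spherical functions entirely. What the paper's approach buys is a conceptual placement of $H$ within an established special-functions framework, which may be useful for further identities; what yours buys is a direct, dependency-free verification that any reader can check line by line. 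Your alternative route via $\Sigma_{d+1}$ and Lemma~\ref{lem:integral} is also sound, with the limiting argument you describe handling the repeated vertex value.
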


\begin{proof}
The complete homogeneous symmetric polynomial $h_r$ equals the Schur polynomial
$s_{(r)}$  corresponding to the partition $\lambda = (r)$. By formula (2.11) in \cite{GR} we have $s_{(r)} = Z_{(r)}$, where $Z_\lambda(u_1,\dots, u_d)$ is the {\em zonal polynomial}, or {\em spherical function} \cite{GR}. Therefore, we conclude 
$$H(u_1,\ldots, u_d) \,=\,\sum_{r=0}^{\infty}\frac{Z_{(r)}(u_1,\ldots, u_d)}{(r+d+1)!}
\,=\, \frac1{(d+1)!}\sum_{r=0}^{\infty}
\frac{Z_{(r)}(u_1,\ldots, u_d)\cdot [1]_{(r)}}{[d+2]_{(r)} \cdot r!},$$
where $[a]_\lambda = \prod_{j=1}^{m}(a-j+1)_{\lambda_j}$ for a partition $\lambda = (\lambda_1,\dots, \lambda_m)$, and $(a)_s = a(a+1)\cdots(a+s-1)$. In particular, $[1]_{(r)} = r!$, and $[1]_{\lambda} = 0$ if $\lambda$ has more than one nonzero part.  Therefore, 
$$ H(u_1,\ldots, u_d) \,\,= \,\, \frac1{(d+1)!}\sum_{\text{all partitions } \lambda} 
\frac{Z_\lambda(u_1,\ldots, u_d) \cdot [1]_\lambda}{[d+2]_\lambda \cdot |\lambda|!}.$$
By \cite[(4.14)]{GR}, this can be written in terms
of the confluent hypergeometric function of matrix argument ${}_1F_1$:
$$H(u_1,\ldots, u_d) \,\,= \,\,\frac1{(d+1)!} \cdot \, {}_1F_1(1;d+2;\diag(u_1,\ldots,u_d)) .$$
The right hand side has the desired integral representation (\ref{eqn:HIntegralExpression}),
by \cite[equation (5.14)]{GR}.
\end{proof}

\begin{corollary} \label{cor:positive}
 The function $H$ is positive, increasing in each variable, and convex.
\end{corollary}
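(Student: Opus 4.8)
The plan is to read off all three properties of $H$ directly from the integral representation in Proposition~\ref{prop:magic}, rather than from the series in Proposition~\ref{prop:Hformula}. Recall that
$$H(u_1,\ldots,u_d) \,=\, \int_{\Sigma_d} \Bigl(1 - \sum_{i=1}^d t_i\Bigr)\exp\Bigl(\sum_{i=1}^d u_i t_i\Bigr)\,\text{d}t_1\cdots\text{d}t_d,$$
and observe that on the interior of the standard simplex $\Sigma_d$ the weight $g(t) := 1 - \sum_i t_i$ is strictly positive, while the exponential factor is positive everywhere. The integrand is therefore a positive function on a set of positive measure, so $H > 0$ pointwise; this gives the first claim.

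For monotonicity, I would differentiate under the integral sign, which is justified since the integrand and its $u$-derivatives are continuous and bounded on the compact set $\Sigma_d$. We get
$$\frac{\partial H}{\partial u_k}(u_1,\ldots,u_d) \,=\, \int_{\Sigma_d} t_k\Bigl(1 - \sum_{i=1}^d t_i\Bigr)\exp\Bigl(\sum_{i=1}^d u_i t_i\Bigr)\,\text{d}t_1\cdots\text{d}t_d.$$
On the interior of $\Sigma_d$ the factor $t_k(1-\sum_i t_i)$ is strictly positive, so this integral is strictly positive, proving that $H$ is (strictly) increasing in each variable.

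For convexity, the cleanest route is to note that for each fixed $t \in \Sigma_d$ the map $u \mapsto \exp(\sum_i u_i t_i)$ is convex on $\RR^d$, being the exponential of a linear functional; multiplying by the fixed nonnegative constant $g(t) = 1 - \sum_i t_i \geq 0$ preserves convexity, and a nonnegative combination (here, an integral against the Lebesgue measure on $\Sigma_d$) of convex functions is convex. Hence $H$ is convex. Alternatively one could compute the Hessian $\frac{\partial^2 H}{\partial u_k \partial u_\ell} = \int_{\Sigma_d} t_k t_\ell\, g(t)\exp(\sum_i u_i t_i)\,\text{d}t$ and check that for any $\xi \in \RR^d$ the quadratic form $\sum_{k,\ell}\xi_k\xi_\ell \frac{\partial^2 H}{\partial u_k \partial u_\ell} = \int_{\Sigma_d}(\sum_k \xi_k t_k)^2 g(t)\exp(\sum_i u_i t_i)\,\text{d}t \geq 0$. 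There is essentially no obstacle here; the only point requiring a word of care is the differentiation under the integral sign, and that is routine because everything in sight is smooth and the domain $\Sigma_d$ is compact.
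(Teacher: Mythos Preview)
Your proposal is correct and follows essentially the same approach as the paper: the paper also argues directly from the integral representation in Proposition~\ref{prop:magic}, noting that the integrand is nonnegative (giving $H>0$), that its $u_i$-derivative keeps a positive integrand (giving monotonicity), and that the integrand is convex in $u$ (giving convexity of $H$). Your version is simply more explicit about the justification for differentiating under the integral sign and about strict positivity on the interior of $\Sigma_d$.
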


\begin{proof}  The integrand in \eqref{eqn:HIntegralExpression} is
nonnegative. Hence, $H(u_1,\ldots, u_d) > 0$ for all $(u_1,\ldots, u_d)\in\mathbb R^d$. After taking derivatives with respect to $u_i$, the integrand remains positive.
 Therefore, $H$ is increasing in $u_i$. Finally, the integrand is a convex function,
and hence so is $H$.
\end{proof}

We now embark towards the proof of Theorem~\ref{thm:converse}.
Recall that a vector $y \in \RR^n$ is relevant if $h_{X,y}(x_i) = y_i$
for all $i$, i.e.~the regular subdivision of $X$ induced by $y$ uses each point $x_i$.

\begin{lemma}\label{lem:weightsAnyDim} Fix a configuration $X$ of $n$ points in $\mathbb R^d$.
For any relevant $y^* \in \RR^n$ that satisfies
$\int_{\RR^d} f_{X,y^*}(t)dt = 1$,  there are weights $w \in \RR^n_{> 0}$ 
such that $y^* $ is the optimal solution to \eqref{eq:ourproblem}-\eqref{eq:unconstrained}.
\end{lemma}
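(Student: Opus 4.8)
The plan is to exhibit explicit weights $w$ for which $y^*$ solves the unconstrained problem \eqref{eq:unconstrained}, using the first-order optimality conditions. Since the objective $\Phi(y) = w\cdot y - \int_P \exp(h_{X,y}(t))\,dt$ is concave (the integral term is convex, as established in the proof of Proposition on the Samworth body) and $y^*$ is relevant with $\int_P \exp(h_{X,y^*}(t))\,dt = 1$, it suffices to choose $w$ so that $w = \nabla_y \left(\int_P \exp(h_{X,y}(t))\,dt\right)\big|_{y=y^*}$, provided this gradient has strictly positive entries and sums to $1$. The sum-to-one condition will come for free: differentiating the identity $\int_P \exp(h_{X,y}(t))\,dt$ in the direction $(1,1,\ldots,1)$ simply multiplies the integrand by a constant, giving $\sum_i w_i = \int_P \exp(h_{X,y^*}(t))\,dt = 1$. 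So the real content is that each partial derivative $w_i := \partial_{y_i}\int_P \exp(h_{X,y}(t))\,dt\big|_{y^*}$ is strictly positive, and that this gradient is well-defined (the map $y \mapsto h_{X,y}$ is only piecewise smooth, but at a relevant $y^*$ the directional derivatives exist).

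First I would reduce to a fixed regular triangulation $\Delta$ refining the subdivision induced by $y^*$, so that by Theorem~\ref{thm:IntegralFormula} the integral is a smooth function of $y$ near $y^*$ on the relevant region (taking a limit of perturbed $y$ within the secondary cone of $\Delta$ if $y^*$ lies on a wall). Then I would compute $w_i = \partial_{y_i}$ of the expression in Theorem~\ref{thm:IntegralFormula}. The key point is to recognize the resulting per-simplex contribution as (a volume multiple of) the function $H$ from \eqref{eq:defH}: for a simplex $\sigma = \mathrm{conv}(x_{i}, x_{i_1}, \ldots, x_{i_d})$ containing $x_i$, set $u_k = y^*_{i_k} - y^*_i$; then the derivative with respect to $y_i$ of $\mathrm{vol}(\sigma)\sum_{p\in\sigma}\exp(y_p)/\prod_{q\in\sigma\setminus p}(y_p - y_q)$ evaluates, after clearing denominators and matching the rational-plus-exponential form, to $\mathrm{vol}(\sigma)\cdot e^{y^*_i} \cdot H(u_1,\ldots,u_d)$. (This is exactly the algebraic identity \eqref{eq:defH} read backwards — the $(-1)^d$ rational term is the contribution of $\partial_{y_i}$ acting on the $\exp(y_i)$-coefficient itself, and the sum $\sum_j e^{u_j}/(u_j^2\prod_{k\neq j}(u_j-u_k))$ is the contribution from the other vertices.) Summing over all $\sigma \ni x_i$ gives $w_i = e^{y^*_i}\sum_{\sigma \ni x_i}\mathrm{vol}(\sigma)\,H(u^\sigma)$.

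With this formula in hand, positivity of $w_i$ is immediate from Corollary~\ref{cor:positive}, which says $H > 0$ everywhere on $\RR^d$ (and is well-defined there, by Proposition~\ref{prop:Hformula}, so the apparent poles of the summands in Theorem~\ref{thm:IntegralFormula} along diagonals $y_p = y_q$ are removable — this is why relevance of $y^*$, as opposed to genericity, suffices). Each $x_i$ lies in at least one full-dimensional cell of $\Delta$, so the sum is nonempty and $w_i > 0$. Finally I would verify concavity gives sufficiency: $\Phi$ concave plus $\nabla\Phi(y^*) = 0$ forces $y^*$ to be a global maximizer of \eqref{eq:unconstrained}, hence by Proposition~\ref{prop:unconstrained} the optimal solution of \eqref{eq:ourproblem}–\eqref{eq:constrained}.

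The main obstacle I anticipate is the boundary/regularity issue: when $y^*$ lies on a wall of the secondary fan, the function $y \mapsto \int_P\exp(h_{X,y}(t))\,dt$ is not differentiable at $y^*$, only directionally differentiable, and different refining triangulations $\Delta$ give different smooth extensions. I would handle this by noting that relevance of $y^*$ means the \emph{value} $h_{X,y^*}(x_i) = y^*_i$ is unambiguous, and that the integral representation of $H$ in Proposition~\ref{prop:magic} shows the per-simplex formula $\mathrm{vol}(\sigma)e^{y^*_i}H(u^\sigma)$ depends continuously on $y^*$ and is intrinsic to $\sigma$; summing over any triangulation $\Delta$ refining the $y^*$-subdivision yields the same $w$ because the contributions are additive over a common refinement and volumes add. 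Thus $w$ is well-defined independent of the choice of $\Delta$, and the subgradient of the convex integral term at $y^*$ contains this $w$ — which is all that is needed for the optimality argument. The one remaining check is that the directional-derivative computation really does collapse to the single value $w$ in every direction, which follows once one sees that $H$ has no "wall-crossing" discontinuity, i.e. it is a single smooth function on all of $\RR^d$.
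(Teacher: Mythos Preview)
Your approach matches the paper's: fix a refining triangulation $\Delta$, differentiate the formula of Theorem~\ref{thm:IntegralFormula}, recognize the result as $w_k = e^{y^*_k}\sum_{\sigma\in\Delta,\,k\in\sigma}\mathrm{vol}(\sigma)\,H(\{y^*_i-y^*_k:i\in\sigma\setminus k\})$, and invoke Corollary~\ref{cor:positive} for positivity. (Your remark that $\sum_k w_k = 1$ follows by differentiating in the direction $(1,\ldots,1)$ is a nice touch the paper leaves implicit.)

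However, your final paragraph contains a genuine error. You assert that ``summing over any triangulation $\Delta$ refining the $y^*$-subdivision yields the same $w$,'' and that the directional derivative ``collapses to the single value $w$ in every direction.'' This is false: distinct refining triangulations produce genuinely different vectors $w^\Delta$. Theorem~\ref{thm:normalcone} and Example~\ref{ex_4points_plane} make this explicit --- the set of admissible weights is the \emph{cone} spanned by the $w^{\Delta_i}$, and Corollary~\ref{cor:cccc} shows that for $y^*=(c,\ldots,c)$ this cone is the full cone over the secondary polytope, not a ray. The function $H$ is not linear, so there is no additivity making $\sum_{\sigma\ni k}\mathrm{vol}(\sigma)H(u^\sigma)$ triangulation-independent; and two triangulations of a non-simplicial cell admit no common simplicial refinement, so your ``common refinement'' argument cannot even be formulated.

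The fix is simpler than what you attempt. For any \emph{single} choice of $\Delta$, the smooth convex function $F_\Delta(y)=\int_P\exp(g_{\Delta,y}(t))\,dt$ (with $g_{\Delta,y}$ the $\Delta$-piecewise-linear interpolant, linear in $y$ for each $t$) satisfies $F_\Delta\le F$ everywhere with equality at $y^*$, because the concave tent $h_{X,y}$ dominates any piecewise-linear interpolant through the same values. Hence $\nabla F_\Delta(y^*)\in\partial F(y^*)$, and taking $w=w^\Delta$ for that one $\Delta$ gives $0\in\partial\bigl(F(\cdot)-w\cdot(\cdot)\bigr)(y^*)$; convexity then forces $y^*$ to be the global optimum. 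No uniqueness of $w$ across triangulations is needed --- and indeed it does not hold.
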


\begin{proof}
We use the formulation (\ref{eq:optsecondary}) which is equivalent to
 (\ref{eq:ourproblem}),  (\ref{eq:constrained}), and (\ref{eq:unconstrained}).
 Let $\Delta$ be any regular triangulation that refines the regular subdivision given by $y$.
 In other words, we choose $\Delta$ so that (\ref{eq:intislinear}) is maximized.
 The objective function in Corollary~\ref{cor:optsecondary} takes the form
 $$S(y_1,\dots, y_n) \,\,\,= \,\,\,w\cdot y - \sum_{i=1}^n\exp(y_i)\sum_{\sigma\in\Delta,\atop i\in\sigma}\frac{\text{vol}(\sigma)}{\prod_{j\in \sigma\setminus i }(y_i-y_j)}.$$

Consider the partial derivative of the objective function $S$ with respect to the unknown~$y_k$:
\begin{align*}
\frac{\partial S}{\partial y_k} \,\,\,= \,\,\,w_k \,\, - \,
&\sum_{\sigma\in\Delta,\atop k\in\sigma}\text{vol}(\sigma)\exp(y_k)\frac1{\prod_{j\in \sigma\setminus k }(y_k-y_j)}\left(1 - \sum_{j\in \sigma\setminus k }\frac{1}{(y_k - y_j)}\right) \\
-&\, \sum_{\sigma\in\Delta,\atop k\in\sigma}\text{vol}(\sigma)\sum_{j\in \sigma\setminus k } \exp(y_j) \frac1{\prod_{i\in \sigma\setminus j } (y_j - y_i)} \frac1{(y_j - y_k)}.
\end{align*}
Using the formula (\ref{eq:defH}) for the symmetric function $H(u_1,\ldots,u_d)$, this can be rewritten as
$$
\frac{\partial S}{\partial y_k} \,\,\,= \,\,\,w_k \,-\,
\sum_{\sigma\in\Delta,\atop k\in\sigma} \text{vol}(\sigma)\exp(y_k)H(\{ y_i - y_k  : 
i\in\sigma \backslash k\}).
$$
We now consider the specific given vector $y^* \in \RR^n$, and we use it to define
\begin{align}\label{weightsFormula}
w_k \,\,= \,\,\sum_{\sigma\in\Delta,\atop k\in\sigma} \text{vol}(\sigma)\exp(y^*_k)H(\{ 
y^*_i - y^*_k : i\in\sigma \backslash k \}).
\end{align}
 By Corollary \ref{cor:positive}, the vector $w=(w_1,\ldots,w_n)$
is well-defined and has positive coordinates. Consider now our
estimation problem (\ref{eq:ourproblem}) for that $w \in \RR^n_{>0}$.
By construction, the gradient vector of $S$ vanishes at $y^*$. Furthermore, recall that the choice
of the triangulation $\Delta$ was arbitrary, provided $\Delta$ refines the subdivision of $y$.
This ensures that  all subgradients of the objective function in
(\ref{eq:unconstrained}) vanish. Since this function is strictly convex,
as shown in \cite{CSS}, we conclude 
that the given  $y^*$ is the unique optimal solution for the choice of weights 
in (\ref{weightsFormula}).
\end{proof}

We note that the function $H$ and Lemma~\ref{lem:weightsAnyDim} are
quite interesting even in dimension one.

\begin{example} \label{ex:d=1} \rm Let $d=1$. So, we here examine
log-concave density estimation for $n$ samples $x_1 < x_2 < \cdots < x _n$ on the real line.
The function we defined in (\ref{eq:defH}) has the representations
$$ H(u) \,\,=\,\, \frac{e^u - u - 1}{u^2} \,\,=\,\,\int_{0}^1 (1-y) e^{uy} dy \,\,=\,\,
\frac{1}{2} + \frac{1}{6} u + \frac{1}{24} u^2 + \frac{1}{120} u^3 + \cdots .$$
A vector $y^* \in \RR^n$ is relevant if and only if
\begin{equation}
\label{eq:relevant}
{\rm det}
\begin{pmatrix}
1 & 1 & 1 \\
x_{i-1} & x_i  & x_{i+1} \\
y^*_{i-1} & y^*_i & y^*_{i+1}
\end{pmatrix} \,\leq \, 0 \,
\,\quad \hbox{for} \quad i=2,3,\ldots,n-1.
\end{equation}
The desired vector $w \in \RR^n_{>0}$ is defined by the formula in
(\ref{weightsFormula}). The $k$-th coordinate of $w$~is
$$w_k = \begin{cases} (x_2 - x_1)e^{y^*_1}H(y^*_2 - y^*_1)& {\rm if}\,\, k = 1,\\
(x_k - x_{k-1})e^{y^*_k}H(y^*_{k-1} - y^*_k) + (x_{k+1} - x_k)e^{y^*_k}H(y^*_{k+1} - y^*_k)&
{\rm if}\,\,   2\leq k \leq n-1,\\
(x_n-x_{n-1})e^{y^*_n}H(y^*_{n-1} - y^*_n)&  {\rm if}\,\, k = n.
\end{cases}$$
If we now further assume that $f_{X,y^*} = {\rm exp}(h_{X,y^*})$ is a density,
i.e.~$\int_{-\infty}^\infty f_{X,y^*}(t)dt = 1$, then $f_{X,y^*}$ is the
unique log-concave density that maximizes the likelihood function for $(X,w)$.
\end{example}

\begin{example} \label{ex:H2}
 \rm For $d=2$, our symmetric convex function $H$ has the form
$$H(u, v) \,=\,  \frac{1}{uv} + \frac{1}{u^2 v} + \frac{1}{u v^2}
+ \frac{e^u}{u^2(u{-}v)} + \frac{e^v}{v^2(v{-}u)} \,= \,
\frac{1}{6} + \frac{1}{24}(u+v) + \frac{1}{120}(u^2 + uv + v^2) + \cdots.
$$
For planar configurations $X$, we use this function to map 
each point $y^*$ in the boundary of
the Samworth body $\mathcal{S}(X)$ to a  hyperplane
$w \in \partial \mathcal{S}(X)^*$ that is tangent to $\partial \mathcal{S}(X)$ at $y^*$.
\end{example}

The set of all vectors $w \in \RR^n$ that lead to a desired optimal solution 
$y^* \in \partial \mathcal{S}(X)$
is a convex polyhedral cone in $\RR^n$. The following theorem characterizes 
that convex cone.

\begin{theorem} \label{thm:normalcone}
Fix a vector $y^* \in \RR^n$ that is relevant for $X$. Let
$\Delta_1,\Delta_2,\ldots, \Delta_m$ be all the regular triangulations of $X$
that refine the subdivision of $X$ given by $y^*$, and let
$w^{\Delta_i} \in \RR^n_{>0}$ be the vector defined 
by (\ref{weightsFormula}) for $\Delta_i$. Then, a vector $w \in \RR^n_{>0} $ lies in
the convex cone that is spanned by $\,w^{\Delta_1},w^{\Delta_2}, \ldots, w^{\Delta_m}\,$
if and only if $\,y^*$ is the optimal solution for 
(\ref{eq:ourproblem}),(\ref{eq:constrained}),(\ref{eq:unconstrained}),(\ref{eq:optsecondary}).
\end{theorem}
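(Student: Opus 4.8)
The plan is to show that the set of admissible weight vectors $w$ is exactly the cone spanned by the $w^{\Delta_i}$, by combining the variational characterization already developed with the polyhedral structure of the secondary fan. First I would recall the content of Corollary~\ref{cor:optsecondary}: for a regular triangulation $\Delta$ with secondary cone containing $y$, the objective function is the smooth function $S_\Delta(y) = w\cdot y - \sum_{\sigma\in\Delta}\sum_{i\in\sigma}\frac{\mathrm{vol}(\sigma)\exp(y_i)}{\prod_{j\in\sigma\setminus i}(y_i-y_j)}$, and these agree on overlaps because the integral $\int_P \exp(h_{X,y}(t))\,dt$ does not depend on the chosen refining triangulation. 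The global objective $G(y) = w\cdot y - \int_P \exp(h_{X,y}(t))\,dt$ is concave (indeed strictly concave on the relevant locus, as used in Lemma~\ref{lem:weightsAnyDim}), so $y^*$ is the unique maximizer in (\ref{eq:unconstrained}) if and only if $0$ lies in the superdifferential $\partial(-G)(y^*)$, equivalently $w$ lies in the subdifferential of the convex function $y\mapsto\int_P\exp(h_{X,y}(t))\,dt$ at $y^*$.

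The key step is then to identify that subdifferential. The function $I(y) := \int_P\exp(h_{X,y}(t))\,dt$ is convex and piecewise smooth: on the (closed) secondary cone $C_{\Delta_i}$ of each triangulation $\Delta_i$ refining the subdivision of $y^*$, it coincides with the smooth function $I_{\Delta_i}(y) = \sum_{\sigma\in\Delta_i}\sum_{j\in\sigma}\frac{\mathrm{vol}(\sigma)\exp(y_j)}{\prod_{k\in\sigma\setminus j}(y_j-y_k)}$, whose gradient at $y^*$ is precisely $w^{\Delta_i}$ by the computation in the proof of Lemma~\ref{lem:weightsAnyDim} (that computation shows $\partial I_{\Delta_i}/\partial y_k = \sum_{\sigma\ni k}\mathrm{vol}(\sigma)\exp(y^*_k)H(\{y^*_i-y^*_k: i\in\sigma\setminus k\})$, which is the $k$-th coordinate of $w^{\Delta_i}$). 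Because $y^*$ lies in the relative interior of the intersection $\bigcap_i C_{\Delta_i}$ — this intersection is exactly the secondary cone of the coarser subdivision induced by $y^*$, and the $C_{\Delta_i}$ are the full-dimensional cones of the secondary fan containing it — and the $C_{\Delta_i}$ tile a full neighbourhood of $y^*$, a standard fact about subdifferentials of convex functions that are piecewise-$C^1$ with respect to a polyhedral subdivision gives $\partial I(y^*) = \mathrm{conv}\{\nabla I_{\Delta_1}(y^*),\ldots,\nabla I_{\Delta_m}(y^*)\} = \mathrm{conv}\{w^{\Delta_1},\ldots,w^{\Delta_m}\}$.

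Finally I would translate "subdifferential" into "cone" by accounting for the normalization. A general $w\in\RR^n_{>0}$ need not sum to $1$, whereas the $w^{\Delta_i}$ from (\ref{weightsFormula}) are the specific vectors corresponding to the relevant point $y^*$ with $\int f_{X,y^*}=1$; replacing $w$ by a positive scalar multiple $\lambda w$ rescales the optimizer's "height" (shifts $y^*$ by $\log\lambda$ in the relevant chart, cf. the argument in Proposition~\ref{prop:unconstrained}) without changing the induced subdivision. Hence the positive-scaling freedom upgrades the convex hull $\mathrm{conv}\{w^{\Delta_1},\ldots,w^{\Delta_m}\}$ to the convex cone it spans; intersecting with $\RR^n_{>0}$ (which contains all the $w^{\Delta_i}$ by Corollary~\ref{cor:positive}) gives the stated equivalence. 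The main obstacle I anticipate is the middle step: one must carefully verify that $I$ really is differentiable along each secondary cone up to its boundary and that the $C_{\Delta_i}$ containing $y^*$ cover a neighbourhood of $y^*$, so that the "convex hull of the gradients of the pieces" formula for the subdifferential genuinely applies — in particular ruling out spurious subgradients coming from lower-dimensional cones of the secondary fan through $y^*$. This is where the relevance hypothesis on $y^*$ and the precise matching between cells of the secondary fan and regular subdivisions (reviewed in Section~2) do the real work.
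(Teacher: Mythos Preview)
Your approach is essentially the same as the paper's: both identify the admissible $w$ as the subdifferential (equivalently, the cone of subgradients) of the convex function $I(y)=\int_P\exp(h_{X,y}(t))\,dt$ at $y^*$, and both note that the gradients of the smooth pieces $I_{\Delta_i}$ at $y^*$ are the vectors $w^{\Delta_i}$ from (\ref{weightsFormula}). The paper's proof is extremely terse (three sentences, citing \cite{CSS}), whereas you spell out the piecewise-smooth subdifferential computation and the cone/convex-hull normalization more carefully; your flagged obstacle---that the secondary cones $C_{\Delta_i}$ must tile a full neighbourhood of $y^*$ so that no spurious subgradients appear---is exactly the point the paper leaves implicit.
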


\begin{proof} This follows from the fact that the cone of subgradients at each $y^*$ is convex, and, the gradients for each triangulation on which $h_{X,y^*}$ is linear
are also subgradients at $y^*$; cf.~\cite{CSS}. We can take any convex combination of these subgradients to obtain another subgradient.
\end{proof}

\begin{example}[$n{=}4,d{=}2$]  \rm
\label{ex_4points_plane}
Fix four points $x_1,x_2, x_3, x_4$ in counterclockwise convex position in $\RR^2$.
These admit two regular triangulations, $\Delta_1 = \{124,234\}$ and
$\Delta_2 = \{123,134\}$. Consider any $y \in \RR^4$
with $\int_{\RR^2} f_{X,y}(t) dt = 1$.
 The vector
$w^{\Delta_1} \in \RR^4$ has coordinates 
\begin{align*}
w_1^{\Delta_1} &\,\,=\,\, v_{124}e^{y_1}H(y_2-y_1,y_4-y_1)\\
w_2^{\Delta_1} &\,\,=\,\, v_{124}e^{y_2}H(y_1-y_2, y_4-y_2)
       + v_{234}e^{y_2}H(y_3-y_2, y_4-y_2)\\
w_3^{\Delta_1} &\,\,=\,\, v_{234}e^{y_3}H(y_2-y_3, y_4-y_3)\\
w_4^{\Delta_1} &\,\,=\,\, v_{124}e^{y_4}H(y_1-y_4, y_2-y_4) 
+ v_{234}e^{y_4}H(y_2-y_4, y_3-y_4).
\end{align*}
Here $v_{ijk}$ denotes the triangle area in (\ref{eq:areas}).
Similarly, the vector $w^{\Delta_2}$ has coordinates
\begin{align*}
w_1^{\Delta_2} &\,\,=\,\, v_{123}e^{y_1}H(y_2-y_1, y_3-y_1)
 + v_{134}e^{y_1}H(y_3-y_1, y_4-y_1)\\
w_2^{\Delta_2} &\,\,=\,\, v_{123}e^{y_2}H(y_1-y_2, y_3-y_2)\\
w_3^{\Delta_2} &\,\,=\,\, v_{123}e^{y_3}H(y_1-y_3, y_2-y_3) 
+ v_{134}e^{y_3}H(y_1-y_3, y_4-y_3)\\
w_4^{\Delta_2} &\,\,=\,\, v_{134}e^{y_4}H(y_1-y_4, y_3-y_4).
\end{align*}
In these formulas, the bivariate function $H$ can be evaluated as in Example \ref{ex:H2}.

We now distinguish three cases for $y$, depending on the sign of the $4 \times 4$-determinant
\begin{equation}
\label{eq:tetrahedron}
{\rm det} \begin{pmatrix}
1 & 1 & 1 & 1 \\
x_1 & x_2 & x_3 & x_4 \\
y_1 & y_2 & y_3 & y_4
\end{pmatrix}.
\end{equation}
If (\ref{eq:tetrahedron}) is positive then
$y$ induces the triangulation $\Delta_1$.
In that case, $y$ is the unique
solution to our optimization problem whenever $w$ is any positive
multiple of $w^{\Delta_1}$.
If (\ref{eq:tetrahedron}) is negative then
$y$ induces $\Delta_2$ and it is the unique
solution whenever $w$ is a positive multiple of $w^{\Delta_2}$.
Finally, suppose (\ref{eq:tetrahedron}) is zero, so $y$
induces the trivial subdivision $1234$. If $w$ is any vector
in the cone spanned by $w^{\Delta_1}$ and $w^{\Delta_2}$ in $\RR^4$
then  $y$ is the optimal solution for 
(\ref{eq:ourproblem}),(\ref{eq:constrained}),(\ref{eq:unconstrained}),(\ref{eq:optsecondary}).
\end{example}

We next observe what happens in Theorem \ref{thm:normalcone}
when all coordinates of $y^*$ are equal.

\begin{corollary} \label{cor:cccc}
Fix the constant vector $y^* = (c,c,\ldots,c)$, where
 $c = - {\rm log}({\rm vol}(P))$, so as to ensure
 that $\int_{\RR^d} f_{X,y^*}(t) dt = 1$.
 For any regular triangulation~$\Delta_i$, the weight
vector in (\ref{weightsFormula}) is a constant multiple
of the GKZ vector in
(\ref{eq:GKZvector}). More precisely, we have $w^{\Delta_i} = \frac{e^c}{(d+1)!} \cdot z^{\Delta_i}$.
Hence $y^*$ is the optimal solution for any $w$ in the
cone over the secondary polytope $\Sigma(X)$.
\end{corollary}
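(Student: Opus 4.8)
\textbf{Proof proposal for Corollary~\ref{cor:cccc}.}
The plan is to evaluate the formula (\ref{weightsFormula}) at the constant vector $y^* = (c,\dots,c)$ and recognize the result as a multiple of the GKZ vector. First I would observe that, for every maximal simplex $\sigma = \{x_k, x_{j_1}, \dots, x_{j_d}\}$ of any regular triangulation $\Delta_i$, all the differences $y^*_i - y^*_k$ vanish, so the relevant argument of $H$ is the zero vector in $\RR^d$. Thus the only input I need is the value $H(0,0,\dots,0)$. From the series expansion in Proposition~\ref{prop:Hformula}, every $h_r(0,\dots,0)$ is zero for $r \geq 1$ while $h_0 = 1$, so $H(0,\dots,0) = 1/(d+1)!$. (Equivalently, one reads this off the integral representation (\ref{eqn:HIntegralExpression}): $\int_{\Sigma_d}(1-\sum t_i)\,\mathrm{d}t = \frac{1}{d!} - \frac{1}{(d+1)!} = \frac{1}{(d+1)!}$, using $\mathrm{vol}(\Sigma_d) = 1/d!$ and the standard Dirichlet integral $\int_{\Sigma_d}\sum t_i\,\mathrm{d}t = \frac{d}{(d+1)!}$.)

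Substituting $H(\{y^*_i - y^*_k : i \in \sigma\setminus k\}) = \frac{1}{(d+1)!}$ and $\exp(y^*_k) = e^c$ into (\ref{weightsFormula}) gives
$$w^{\Delta_i}_k \,\,=\,\, \sum_{\sigma \in \Delta_i,\, k \in \sigma} \mathrm{vol}(\sigma)\cdot e^c \cdot \frac{1}{(d+1)!} \,\,=\,\, \frac{e^c}{(d+1)!}\sum_{\sigma \in \Delta_i,\, x_k \in \sigma}\mathrm{vol}(\sigma) \,\,=\,\, \frac{e^c}{(d+1)!}\, z^{\Delta_i}_k,$$
where the last equality is exactly the definition (\ref{eq:GKZvector}) of the GKZ vector. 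Hence $w^{\Delta_i} = \frac{e^c}{(d+1)!}\, z^{\Delta_i}$ as claimed. I should also check the normalization claim: with $h_{X,y^*}$ identically equal to $c$ on $P$, we have $\int_{\RR^d} f_{X,y^*}(t)\,\mathrm{d}t = \int_P e^c\,\mathrm{d}t = e^c\,\mathrm{vol}(P)$, which equals $1$ precisely when $c = -\log(\mathrm{vol}(P))$; and such a constant vector is trivially relevant since $h_{X,y^*}(x_i) = c = y^*_i$ for all $i$.

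Finally, the statement about optimality follows by invoking Theorem~\ref{thm:normalcone}. The constant vector $y^*$ induces the trivial subdivision of $X$, which every regular triangulation $\Delta_1,\dots,\Delta_m$ of $X$ refines, so the $w^{\Delta_i}$ appearing in Theorem~\ref{thm:normalcone} run over $\frac{e^c}{(d+1)!}$ times \emph{all} the GKZ vectors of $X$. Since the GKZ vectors are the vertices of $\Sigma(X)$, the cone they span is the cone over $\Sigma(X)$; scaling by the positive constant $\frac{e^c}{(d+1)!}$ does not change this cone. Therefore $y^*$ is the optimal solution to (\ref{eq:ourproblem}),(\ref{eq:constrained}),(\ref{eq:unconstrained}),(\ref{eq:optsecondary}) for exactly those $w$ lying in the cone over $\Sigma(X)$. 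There is no real obstacle here; the only mild subtlety is confirming $H(0,\dots,0) = 1/(d+1)!$, which is immediate from either representation in Section~3.
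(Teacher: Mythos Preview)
Your argument is correct and follows exactly the paper's own proof: evaluate $H$ at the origin via Proposition~\ref{prop:Hformula} to get $1/(d{+}1)!$, substitute into (\ref{weightsFormula}) to recover a scalar multiple of the GKZ vector, and then invoke Theorem~\ref{thm:normalcone}. One trivial slip in your parenthetical aside: the displayed chain should read $\frac{1}{d!} - \frac{d}{(d+1)!} = \frac{1}{(d+1)!}$, consistent with the Dirichlet integral you quote; as written, $\frac{1}{d!} - \frac{1}{(d+1)!}$ equals $\frac{d}{(d+1)!}$.
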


\begin{proof} The constant term of the series expansion in
Proposition \ref{prop:Hformula} equals
$$ H(0,0,\ldots,0) \,\,= \,\, \frac{1}{(d+1)!} . $$
This implies that the sum in
(\ref{weightsFormula}) simplifies to 
$\frac{e^c}{(d+1)!}$
times the sum in (\ref{eq:GKZvector}).
The last statement follows from Theorem \ref{thm:normalcone}
because the cone over $\sigma(X)$ is spanned by
all GKZ vectors $z^\Delta$.
\end{proof}

We shall now prove the result that was stated in the Introduction.

\begin{proof}[Proof of Theorem~\ref{thm:converse}]
Let $\Delta_1, \ldots,\Delta_m$ be all regular triangulations
that refine a given subdivision $\Delta$.
To underscore the dependence on $y$, we write
$w^{\Delta_i}_y$ for the vector defined in (\ref{weightsFormula}).
Let $\mathcal{C}_\Delta$ denote the secondary cone of $\Delta$.
This is the normal cone to $\Sigma(X)$ at the face
with vertices $z^{\Delta_1},\ldots, z^{\Delta_m}$.
In particular, we have
$\,\dim (\text{span} (z^{\Delta_1},\dots, z^{\Delta_m})) = n - \dim (\mathcal{C}_\Delta)$.

For $y \in \RR^n$ we abbreviate $N(y) = 
\dim (\text{span} (w^{\Delta_1}_{y}, \dots, w^{\Delta_m}_{y}))$.
The closure of the cone $\mathcal{C}_\Delta$ contains 
the constant vector $y^0 = (c,c,\ldots,c)$,
where $c = - {\rm log}({\rm vol}(P))$.
Corollary \ref{cor:cccc} implies that  $N(y_0) = n - \dim (\mathcal{C}_\Delta)$.
The matrix $(w^{\Delta_1}_{y}, \dots, w^{\Delta_m}_{y})$
depends analytically on the parameter $y$. Its rank
 is an upper semicontinuous function of~$y$.
 Thus, there exists an open ball $\hat{\mathcal{B}}$ in $\RR^n$
that contains $y_0$ and such that
 $N(y)\geq n - \dim (\mathcal{C}_\Delta)$ for every $y\in\hat{\mathcal B}$. 
  Now, let $\mathcal B = \mathcal{C}_\Delta \cap \hat{\mathcal B}$. 
  The set $\mathcal B$ is full-dimensional in $\mathcal{C}_\Delta$,  and 
  $N(y)\geq n  - \dim(\mathcal{C}_\Delta)$ for all $y \in \mathcal{B}$.

For each $ y \in \mathcal{B}$ we consider the convex cone 
in Theorem \ref{thm:normalcone}, which consists of all weight
vectors $w$ for which the optimum occurs at $y$.  We denote it by
$\,{\rm cone}(w^{\Delta_1}_{y}, \ldots, w^{\Delta_m}_{y})$.
These convex cones are pairwise disjoint as $y$ runs over $\mathcal{B}$,
and they depend analytically on $y$. Since the dimension
of each cone is at least $n-{\rm dim}(\mathcal{B}) $, it follows that the 
semi-analytic set
\begin{equation}
\label{eq:fulldimset}
 \bigcup_{y \in \mathcal{B}} {\rm cone}(w^{\Delta_1}_{y}, \dots, w^{\Delta_m}_{y}). 
 \end{equation}
is full-dimensional in $\RR^n$.
By Theorem \ref{thm:normalcone}, for each $w$ in the set (\ref{eq:fulldimset}),
the optimal solution 
$\hat f$ to (\ref{eq:ourproblem}) is a piecewise log-linear function whose
regions of linearity are the cells of~$\Delta$.
\end{proof}

We conclude this section with the following open problem.

\begin{problem} \label{prob:dim}

Show that the rank $N(y)$ of the matrix $(w^{\Delta_1}_{y}, \ldots, w^{\Delta_m}_{y})$ is the same
for all vectors $y$ that induce the regular subdivision $\Delta$, namely we have
 $\,N(y) = n-{\rm dim}(\mathcal{C}_\Delta)$.
\end{problem}

For the proof of Theorem~\ref{thm:converse}, it was sufficient
to only have this constant-dimension property for all $y$ in a
relatively open subset $\mathcal{B}$ of
the secondary cone $\mathcal{C}_\Delta$.
Problem \ref{prob:dim} puts forth the conjecture that this property
holds throughout the entire secondary cone  $\mathcal{C}_\Delta$.

\section{The Samworth Body}

The maximum likelihood problem studied in this paper is
a linear optimization problem over a convex set. We named that convex set the Samworth body, in
recognition of the contributions made by Richard Samworth and his collaborators 
\cite{CGS, CSS}. In what follows we explore the geometry of the Samworth body.
We begin with the following explicit formula:

\begin{corollary}
The Samworth body of a given configuration $X$ of $\,n$ points in $\RR^d$ equals
\begin{equation} \label{eq:samformula}
\! \mathcal{S}(X) \,\, = \,\,
\biggl\{ (y_1,\ldots,y_n) \in \RR^n \,:\,
\sum_{\sigma \in \Delta}
\sum_{i \in \sigma}
\frac{{\rm vol}(\sigma) \cdot {\rm exp}(y_i)} 
{\prod_{j \in \sigma \backslash i} (y_i-y_j)} \leq 1  \,\,\,
\hbox{for all $\Delta$ that refine $y\,$} \biggr\}.
\end{equation}
This is a closed convex subset of $\,\RR^n$.
In the defining condition
we mean that  $\Delta$ runs over all regular triangulations that 
refine the regular polyhedral subdivision of $X$ specified by $y$.
\end{corollary}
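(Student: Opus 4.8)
The plan is to derive \eqref{eq:samformula} by combining the original definition \eqref{eq:samdef} of the Samworth body with the integral formula of Theorem~\ref{thm:IntegralFormula}, then packaging the result correctly over the stratification of $\RR^n$ by secondary cones. First I would fix an arbitrary $y = (y_1,\ldots,y_n) \in \RR^n$ and recall that $y$ determines a regular polyhedral subdivision of $X$, namely the collection of regions of linearity of the tent function $h_{X,y}$. For any regular triangulation $\Delta$ that refines this subdivision, the tent function $h_{X,y}$ restricted to $P$ is exactly the piecewise-affine function on $\Delta$ whose value at each $x_i$ equals $y_i$ (note each vertex of each cell of $\Delta$ is some $x_i$, since $\Delta$ refines a subdivision of the configuration $X$). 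Therefore Theorem~\ref{thm:IntegralFormula} applies and yields
\[
\int_P \exp(h_{X,y}(t))\,dt \;=\; \sum_{\sigma \in \Delta}\sum_{i \in \sigma}\frac{\mathrm{vol}(\sigma)\cdot \exp(y_i)}{\prod_{j \in \sigma \backslash i}(y_i - y_j)}.
\]
A small point worth flagging: if $y$ is not relevant, some $x_i$ may lie strictly below the tent and fail to be a vertex of any cell of $\Delta$; but then $h_{X,y}$ still agrees with the piecewise-affine interpolant on $\Delta$ using the \emph{lifted} values $h_{X,y}(x_i)$, and the theorem is still applied with those values, so the right-hand side of \eqref{eq:samformula} as written (using the raw $y_i$) must be interpreted with this caveat, or one restricts to the relevant locus; I would state this explicitly.

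With that identity in hand, the definition $\mathcal{S}(X) = \{ y : \int_P \exp(h_{X,y}(t))\,dt \leq 1\}$ immediately gives that $y \in \mathcal{S}(X)$ if and only if the displayed sum is at most $1$ for \emph{some}, equivalently \emph{every}, regular triangulation $\Delta$ refining the subdivision induced by $y$ — the value of the integral does not depend on which refinement one picks, since it is an intrinsic quantity. This is precisely the condition written in \eqref{eq:samformula}, so the set equality follows. The closedness and convexity are not new: they were already established in the Proposition stating that $\mathcal{S}(X)$ is a full-dimensional closed convex set, so I would simply cite that, or alternatively note that $y \mapsto \int_P \exp(h_{X,y}(t))\,dt$ is continuous and convex (as shown there) so its sublevel set is closed and convex.

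The only genuinely delicate step — and the one I expect to be the main obstacle to a fully rigorous write-up — is the well-definedness of the right-hand side across the strata: one must check that the expression $\sum_{\sigma}\sum_{i}\mathrm{vol}(\sigma)\exp(y_i)/\prod_{j}(y_i-y_j)$, which individually has apparent poles when $y_i = y_j$ for $i,j$ in a common simplex, actually extends continuously and gives the same value for every admissible $\Delta$; this is exactly the content of Lemma~\ref{lem:integral} and Theorem~\ref{thm:IntegralFormula} (the integral is finite and triangulation-independent), so the resolution is to invoke those results rather than manipulate the rational expressions directly. Everything else is bookkeeping: quantifying over $\Delta$ refining $y$, and observing that the constraint is vacuously consistent. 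I would therefore keep the proof short, essentially: ``Apply Theorem~\ref{thm:IntegralFormula} to rewrite the integral in \eqref{eq:samdef}; since the integral is independent of the chosen refinement $\Delta$, the sublevel-set condition becomes the stated system of inequalities; closedness and convexity were proved in the Proposition above.''
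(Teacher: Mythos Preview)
Your proposal is correct and follows essentially the same approach as the paper's own proof, which simply says the formula is a reformulation of the definition \eqref{eq:samdef} via Theorem~\ref{thm:IntegralFormula} and Corollary~\ref{cor:optsecondary}, with closedness and convexity cited from Theorem~\ref{thm:samworth}. Your write-up is more careful in flagging the two technical issues (irrelevant $y$ and the apparent poles when coordinates coincide), both of which the paper leaves implicit; these caveats are apt but do not change the underlying argument.
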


\begin{proof}
This is a reformulation of the definition (\ref{eq:samdef}) using
the formulas in Theorem \ref{thm:IntegralFormula}
and Corollary \ref{cor:optsecondary}. Closedness
and strict convexity of $\mathcal{S}(X)$ were noted  in Theorem \ref{thm:samworth}.
\end{proof}

Maximization of a linear function $w$ over $\mathcal{S}(X)$
becomes an unconstrained problem via the
Legendre-Fenchel transform as in (\ref{eq:optsecondary}).
By solving this problem for many instances of~$w$,
one can approximate the shape of $\mathcal{S}(X)$.
Indeed, each regular subdivision of $X$ specifies a full-dimensional
subset in the boundary of the dual body $\mathcal{S}(X)^*$, by
Theorem~\ref{thm:converse}. If we choose a direction $w$ at random
in $\RR^n$, then a unique positive multiple $\lambda w$ lies
in $\partial \mathcal{S}(X)^*$, in the stratum associated to
the subdivision of $X$ specified by
the optimal solution $y^* \in \partial \mathcal{S}(X)$.
By evaluating the map $w \mapsto y^*$ many times,
we thus obtain the empirical distribution on the subdivisions,
indicating the proportion of volumes of the strata in  $\partial \mathcal{S}(X)^*$.
In the next example we compute this distribution when the
double sum in (\ref{eq:samformula}) looks like that in
(\ref{eq:maximumamong14}).

\begin{example} \label{ex:associahedron2} \rm
Let $d=2$, $n=6$, and take our configuration $X$
to be the six points $(0,0),(1,0),(2,1),(2,2),(1,2), (0,1)$.
We sampled 100,000 vectors $w$ uniformly from the
simplex $\{w \in \RR^6_{\geq 0} : \sum_{i=1}^6 w_i = 1\}$.
For each $w$, we computed the optimal $y^* \in \RR^6$,
and we recorded the subdivision of $X$ that is the support of $h_{X,y^*}$.
We know from  Example \ref{ex:hexagon} that the secondary polytope
$\Sigma(X)$ is an associahedron, which has $14+21+9+1  = 45$ faces.
We here code each subdivision by a list of length $3,2,1$ or $0$ from  among the diagonal
segments
$$ 13, \,14, \, 15, \,24, \,25,\,26, \,35, \,36,\,46. $$
For instance, the list $13 \,\,14 \,\,15$ encodes the triangulation $\Delta$ in Example~\ref{ex:hexagon}.
The edge connecting the triangulations $\Delta$ and $\Delta'$ from Example~\ref{ex:hexagon} is denoted  $13 \, 14 $.
We write $\emptyset$ for the trivial flat subdivision.
The following table of percentages shows the empirical distribution 
we observed for the $45$ outcomes of our experiment:

\setcounter{MaxMatrixCols}{20}
\begin{small}
$$ \begin{matrix}
\emptyset & 35 \,&\,   46 \,&\,   24 \,&\, 15 \,&\,   13 \,&\,  26  \,&\, 25 \,&\, 14 \,&\, 36  \\
30.5 \,\,&\,\, 5.95 \,\,&\,\,  5.85 \,\,&\,\,  5.84 \,\,&\,\, 5.83 \,\,&
\,\, 5.75 \,\, &\,\, 5.70 \,\,&\,\, 3.91 \,\,&\,\, 3.90 \,\,&\,\, 3.87  \\
\end{matrix}
$$
$$
 \begin{matrix}
 13 \,15 &   26 \, 46 &  15 \, 35 & 13 \, 35 & 24 \, 26 &  24 \, 46 &  13 \, 14 & 35\, 36 &  14 \, 24  & 26 \, 36 & 14 \, 46 &  25 \, 35  & 15 \, 25 \\
 1.23 & 1.21  & 1.21 & 1.20 & 1.16 & 
1.14 & 0.96 & 0.92  & 0.92 & 0.92 & 0.92 &  0.90 & 0.90
 \end{matrix}
 $$
 $$ \begin{matrix}
 25 \,26 & 
 14 \, 15 &
 36 \, 46 & 
 24 \, 25 &
 13 \, 36 &
 13 \,46 &
 26 \, 35 &
 15\, 24 &
 13 \, 14 \, 15 &
  13 \,15 \,35 &
 14 \,24 \,46 &
 24 \,26 \, 46 \\
0.89 &
0.89 &
0.87 &
0.87 &
0.84 &
0.82 &
0.77 &
0.70 &
0.25 &
0.24 & 0.23 & 0.22
 \end{matrix}
 $$
  $$
 \begin{matrix}
  15 \, 25 \, 35 &
 26\, 36 \, 46 &
 13 \, 35 \, 36 &
 24 \, 25 \, 26 &
 13 \, 36 \, 46 &
 25 \, 26\,  35 &
 15 \,24 \, 25 &
 14 \, 15 \, 24 &
 13 \,14 \, 46 &
 26 \, 35\,  36 \\
0.22 &
0.21 &
0.20 &
0.18 &
0.18 &
0.16 &
0.15 &
0.15 & 0.15 & 0.14 
 \end{matrix}
 $$
\end{small}
The entry marked $\emptyset$ reveals that the trivial subdivision
occurs with the highest frequency. This means that
a large portion of the dual boundary $\partial \mathcal{S}(X)^*$ is  flat.
Equivalently, the Samworth body $\mathcal{S}(X)$ has a ``very sharp edge'' along the lineality
space of the secondary~fan.
\end{example}

To get a better understanding of the geometry of the
Samworth body $\mathcal{S}(X)$, at least when $d$ or $n-d$ are small, we 
can also use the algebraic formula in (\ref{eq:samformula})
for explicit computations.

\begin{example} \label{ex:fancyteewurst} \rm
Let $d=3$, $n=6$, and fix the configuration of vertices of a {\em regular octahedron}:
$$ X \, = \,(x_1,x_2,\ldots,x_6) \,=\, \bigl( \,+e_1\,,\,-e_1\,,\,\,+e_2,\,-e_2\,,\,\,+e_3\,,\,-e_3 \,\bigr). $$
Here $e_i$ denotes the $i$th unit vector in $\RR^3$. The secondary polytope
$\Sigma(X)$ is a triangle.  Its edges correspond to the three subdivisions of the
 octahedron  $X$ into
two square-based pyramids,
$\Delta_{1234} = \{12345, 12346\}$,
$\Delta_{1256} = \{12356,12456\}$, and
$\Delta_{3456} = \{13456, 23456\}$.
Its vertices correspond to the three triangulations of $X$, namely
$\Delta_{12} = \{1235, 1236, 1245, 1256\}$,
$\Delta_{34} = \{ 1345, 1346, 2345, 2346 \}$,
and $\Delta_{56} = \{1356, 1456, 2356, 2456\}$.

The normal fan of $\Sigma(X)$, which is the secondary fan of $X$,
has three full-dimensional cones in $\RR^6$.
A vector $y$ in $\RR^6$ selects the triangulation
$\Delta_{ij}$ if $y_i+y_j$ is the uniquely attained
minimum among $\{y_1+y_2,\,y_3+y_4,\,y_5+y_6\}$.
It selects  $\Delta_{1234}$ if $y_1+y_2 = y_3+y_4 < y_5+y_6$,
and it leaves the octahedron unsubdivided when $y$ is
in the lineality space $\,\{y \in \RR^6: y_1+y_2 = y_3+y_4 = y_5+y_6\}$.

The Samworth body $\mathcal{S}(X)$ is defined 
in $\RR^6$ by the following system of three inequalities.
Use the $i$th inequality when the $i$th number in the list
$(y_1{+}y_2, y_3 {+} y_4, y_5 {+} y_6)$ is the smallest:
$$
\begin{matrix}
 \frac{ e^{y_1} (2 y_1-y_6-y_5) (2 y_1-y_4-y_3)}{(y_1{-}y_2)(y_1{-}y_3)(y_1{-}y_5)(y_1{-}y_6)(y_1{-}y_4) }
- \frac{ e^{y_2} (2 y_2-y_6-y_5) (2 y_2-y_4-y_3)}{(y_1{-}y_2)(y_2{-}y_3)(y_2{-}y_5)(y_2{-}y_6)(y_2{-}y_4) }
+ \frac{e^{y_3} (2 y_3-y_6-y_5) }{ (y_1{-}y_3)(y_2{-}y_3)(y_3{-}y_5)(y_3{-}y_6) } \smallskip \\ \,\,
+ \frac{e^{y_4} (2 y_4-y_6-y_5) }{ (y_1{-}y_4)(y_2{-}y_4)(y_4{-}y_5)(y_4{-}y_6) }
- \frac{e^{y_5} (y_4-2 y_5+y_3) }{ (y_1{-}y_5)(y_2{-}y_5)(y_3{-}y_5)(y_4{-}y_5) }
- \frac{e^{y_6} (y_4-2 y_6+y_3) }{ (y_1{-}y_6)(y_2{-}y_6)(y_3{-}y_6)(y_4{-}y_6) } \,\,\leq \,1 \smallskip
\end{matrix} 
$$
$$
\begin{matrix}
  \frac{ e^{y_1} (2 y_1-y_6-y_5) }{ (y_1-y_3)(y_1-y_4)(y_1-y_5)(y_1-y_6) }
+ \frac{ e^{y_2} (2 y_2-y_6-y_5) }{ (y_2-y_3)(y_2-y_4)(y_2-y_5)(y_2-y_6) }
- \frac{ e^{y_3} (2 y_3-y_6-y_5) (y_2-2 y_3+y_1)}{(y_1-y_3)(y_3-y_4)(y_3-y_5)(y_3-y_6)(y_2-y_3) }
\smallskip \\
+ \frac{ e^{y_4} (2 y_4-y_6-y_5) (-2 y_4+y_1+y_2)}{(y_1-y_4)(y_3-y_4)(y_4-y_5)(y_4-y_6)(y_2-y_4) }
- \frac{ e^{y_5} (y_2-2 y_5+y_1) }{ (y_1-y_5)(y_2-y_5)(y_3-y_5)(y_4-y_5) }
- \frac{ e^{y_6} (y_2-2 y_6+y_1) }{ (y_1-y_6)(y_2-y_6)(y_3-y_6)(y_4-y_6) }\,\,\leq \,1 \smallskip
\end{matrix}
$$
$$ \!
\begin{matrix}
\frac{ e^{y_1} (2 y_1-y_4-y_3) }{ (y_1-y_3)(y_1-y_4)(y_1-y_5)(y_1-y_6) }
+ \frac{ e^{y_2} (2 y_2-y_4-y_3) }{ (y_2-y_3)(y_2-y_4)(y_2-y_5)(y_2-y_6) }
- \frac{ e^{y_3} (y_2-2 y_3+y_1) }{ (y_1-y_3)(y_2-y_3)(y_3-y_5)(y_3-y_6) } - \smallskip \\
 \frac{ e^{y_4} (-2 y_4+y_1+y_2) }{ (y_1-y_4)(y_2-y_4)(y_4-y_5)(y_4-y_6) }
{+} \frac{ e^{y_5} (y_4-2 y_5+y_3)   (y_2-2 y_5+y_1)}{(y_1-y_5)(y_3-y_5)(y_5-y_6)(y_4-y_5)(y_2-y_5) }
{-} \frac{ e^{y_6} (y_4-2 y_6+y_3)  (y_2-2 y_6+y_1)}{(y_1-y_6)(y_3-y_6)(y_5-y_6)(y_4-y_6)(y_2-y_6) }
\leq 1
\smallskip
\end{matrix}
$$

The dual convex body $\mathcal{S}(X)^*$ has 
seven strata of faces in its boundary: a $3$-dimensional manifold of $2$-dimensional faces,
corresponding to the trivial subdivision,  three $4$-dimensional manifolds of edges
 corresponding to $\Delta_{1234}, \Delta_{1256}, \Delta_{3456}$,
and three $5$-dimensional manifolds of extreme points, 
corresponding to $\Delta_{12},\Delta_{34},\Delta_{56}$.
Each $2$-dimensional face of $\mathcal{S}(X)^*$ is a triangle, 
like the secondary polytope $\Sigma(X)$.
The dual to this convex set is the Samworth body $\mathcal{S}(X)$, which is
strictly convex. Its boundary is singular along three $4$-dimensional 
strata are formed when two of the three inequalities above are active.
These meet in a highly singular $3$-dimensional stratum which is
formed when all three inequalities are active. 
These singularities of $\partial \mathcal{S}(X)$ exhibit the secondary fan of $X$.
It is instructive to draw a cartoon, in dimension two or three, to visualize the
boundary features of $\mathcal{S}(X)$ and $\mathcal{S}(X)^*$.
\end{example}

\smallskip

Up until this point, the premise of this paper has been that
the configuration $X$ is fixed but the weights $w$ vary.
Example \ref{ex:fancyteewurst} was meant to give an impression of
the corresponding geometry, by describing in an intuitive language how 
a  Samworth body  $\mathcal{S}(X)$ can look like.

However, our premise is at odds with the
perspective of statistics. For a statistician,
the natural setting is to fix unit weights,
$w = \frac{1}{n}(1,1,\ldots,1)$,
and to assume that $X$ consists of $n$ points
that have been sampled from some underlying distribution.
Here, one cares about one distinguished point in
$\partial \mathcal{S}(X)$ and less about the
global geometry of the Samworth body.
Specifically, we  wish to know which
face of $\mathcal{S}(X)^*$ is pierced by the  ray
$\bigl\{ (\lambda,\ldots,\lambda) \,:\, \lambda \geq 0 \bigr\}$.

 \begin{table}[!b]
 \begin{center}
\begin{tabular}{| c c c c | c | c c c c |}
\hline\multicolumn{4}{|c|}{Subdivision: number of} & Convex & Gaussian & Uniform & Circular & Circular \\ 
3-gons & 4-gons & 5-gons & 6-gons & hull & $\mathcal{N}(0,1)$ & $a=0.5$ & $a=0.3$ & $a=0.1$ \\ \hline 
1 &  0 &  0 &  0 &  3&  948 & 533 & 257 & 34 \\
0 &  1   &0  & 0         &4      &8781 & 6719 & 4596 & 1507 \\
0   &0   &1   &0         &5     & 8209 & 9743 & 10554 & 8504 \\
0   &0   &0   &1         &6    &  1475 & 2805 & 4495 & 9887 \\
2   &0   &0   &0         &4      &  8& 3 & 6 & 7 \\
1   &1   &0   &0         &5     &   1 & 2 & 1 & 2 \\
3   &0   &0   &0         &3    &    6 & 2 &  2 & 1 \\
2  & 1   &0   &0         &4   &    39 & 16& 4 & 7 \\
2   &0   &1   &0         &5        &1 & 1 & 0 & 1 \\
1   &2   &0   &0         &5       & 1 & 0 & 1 & 6 \\
4  & 0   &0   &0         &4      &  1 & 0 & 0 & 0 \\
3   &1   &0   &0         &3     & 114 & 38 & 10 & 1 \\ 
3   &0   &1   &0         &4       &39 & 20 & 9 & 2 \\
2   &2   &0   &0         &4      & 59 & 19 & 16 & 9 \\
5   &0   &0   &0         &3     &   3 & 0 & 0 & 0 \\
4   &1   &0   &0         &4       & 1 & 0 & 0 & 0 \\
4   &0   &1   &0         &3      & 90 & 27 & 8 & 1 \\
3   &2   &0   &0         &3     & 120 & 32 & 11 & 0 \\
5   &1   &0   &0         &3    &   50 & 11 & 3 & 0 \\
7   &0   &0   &0         &3    &    2 & 1 & 0 & 0 \\ \hline
\end{tabular}
\end{center}
\caption{\label{tab:caroline} The optimal subdivisions
for six random points in the plane}
\end{table}

\begin{example} \label{ex:Gaussian} \rm
Let $d=2$ and $n=6$ as in Example \ref{ex:associahedron2},
but now with unit weights $w = \frac{1}{6}(1,1,1,1,1,1)$.
We  sample  i.i.d.~points $x_1,\ldots,x_6$ from various distributions $f$ on $\RR^2$,
some log-concave and others not,
and we compare the resulting maximum likelihood densities~$\hat f$.

In what follows, we analyze the case where $f$ is a standard Gaussian distribution or a uniform distribution on the unit disc, and we contrast this to distributions of the form $X=(U_1^a \cos(2\pi U_2), U_1^a \sin(2\pi U_2))$, where $U_1$ and $U_2$ are independent uniformly distributed on the interval $[0,1]$ and $a<0.5$. Such distributions have more mass towards the exterior of the unit disc and are hence not log-concave. For $a=0.5$ this is the uniform distribution on the unit disc. We drew 20,000
samples $X = (x_1,\ldots,x_6)$ from each of these four distributions.

 For each experiment, we recorded the number of vertices of the convex hull of the sample,
we computed the optimal subdivision using {\tt LogConcDEAD},
 and we recorded the shapes of its cells.  Our results are reported in Table \ref{tab:caroline}.
    Each of the four right-most columns shows the number of experiments out of 20,000 that
resulted in a  subdivision as described in the five left-most columns.  
These columns do not add up to 20,000, because we discarded all experiments for which the optimization procedure did not converge due to numerical instabilities.
 
In the vast majority of cases, reported in the first four rows,
the optimal solution $\hat f$ is log-linear. Here the
subdivision is trivial, with only one cell. For instance, the fourth row
is the 30.5\% case in  Example \ref{ex:associahedron2}. In  the last row,
 ${\rm conv}(X)$ is a triangle
and the subdivision is a triangulation that uses all three interior points.
We saw such a triangulation in Example~\ref{ex:octahedron}.
In fact, we constructed the data (\ref{eq:sixpoints})
by modifying one of the examples with seven cells found by sampling from a Gaussian $\mathcal{N}(0,1)$ distribution. Note that the subdivisions resulting from Gaussian samples tend to have more cells than
those from other distributions.
\end{example}

The examples in this section illustrate two different interpretations
of the data set $(X,w)$: either 
the configuration $X$ is fixed and the weight vector $w$ varies, or $w$ is fixed and $X$ varies.
These are two different parametric versions of our optimization problem (\ref{eq:ourproblem}),
(\ref{eq:constrained}),  (\ref{eq:unconstrained}), (\ref{eq:optsecondary}).
This generalizes the interpretation of the
secondary polytope $\Sigma(X)$
seen in \cite[Section 1.2]{DRS}, namely as a   geometric
model for {\em parametric linear programming}.
The vertices of $\Sigma(X)$ represent the various collections of optimal bases 
when the matrix $X$ is fixed and the cost function $w$ varies.
See \cite[Exercise 1.17]{DRS}  for the case $d=2, n=6$, as in
 Examples \ref{ex:hexagon}, \ref{ex:associahedron2} and \ref{ex:Gaussian}.
Of course, it is very interesting to examine what happens when both $X$ and $w$
vary, and to study $\Sigma(X)$ as a function on the space of configurations $X$.
This was done in  \cite{universal}.
The same problem is even more intriguing in the statistical setting
introduced in this paper.

\begin{problem} 
Study the Samworth body as a function $X \mapsto \mathcal{S}(X)$ on
the space of configurations. Understand
log-concave density estimation as a parametric optimization problem.
\end{problem}

This problem has many angles, aspects and subproblems. Here is one of them:

\begin{problem}
For fixed $w$ and a fixed combinatorial type of subdivision $\Delta$,
study the semi-analytic set of all configurations $X$ such that 
$\Delta$ is the optimal subdivision for the data $(X,w)$.
\end{problem}

For instance, suppose we fix the triangulation $\Delta$ seen on the right of 
Figure~\ref{fig:octahedron}. How much can we perturb the configuration
in (\ref{eq:sixpoints}) and retain that $\Delta$ is optimal for unit weights?
For $n{=}6,d{=}2$, give inequalities that characterize
the space of all datasets $(X,w)$ that select $\Delta$.

An ultimate goal of our geometric approach is the design of new
tools for nonparametric statistics.
One aim is the development of test statistics for assessing
whether a given sample comes from a log-concave distribution.
Such tests are important, e.g.~in economics \cite{An1,An2}.

\begin{problem}
Improve the accuracy of existing test statistics for log-concavityy~\cite{Chen_Samworth, Hazelton}
by augmenting these with combinatorial properties
(such as the f-vector) of the observed subdivision $\Delta$.
\end{problem}

The idea is that $\Delta$ is likely to have more cells
when $X$ is sampled from a log-concave distribution.
Hence we might use the f-vector of $\Delta$ as
a test statistic for log-concavity. 
The study of such tests seems related to the
approximation theory of convex bodies developed
by Adiprasito, Nevo and Samper \cite{ANS}.
What does their
``higher chordality'' mean for statistics?

\section{Unit Weights}

In this section we offer an analysis of the uniform weights case.
Example \ref{ex:Gaussian} suggests that the  flat subdivision occurs with overwhelming
probability when the sample size is small. Our main result in this section establishes this flatness
for the smallest non-trivial case $n=d+2$:

\begin{theorem}\label{thm:d+2points}  
Let $X$ be a configuration of $n=d+2$ points that affinely span $\RR^d$. For
$w = \frac{1}{n}(1,\ldots,1)$, the optimal density
$\hat f$ is log-linear, so
the optimal subdivision of $X$ is~trivial.
\end{theorem}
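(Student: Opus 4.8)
The plan is to exploit that $n=d+2$ makes the secondary structure one-dimensional, reduce the assertion to a single inequality about the log-affine density with prescribed mean, and then prove that inequality.

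Since $n=d+2$, the secondary polytope $\Sigma(X)$ has dimension $n-d-1=1$: it is a segment whose two vertices are the only two triangulations $\Delta_1,\Delta_2$ of $X$, and whose relative interior is the trivial subdivision. Let $g=(g_1,\dots,g_n)$ be the circuit of $X$, i.e.~the unique (up to scaling) vector with $\sum_i g_i=0$ and $\sum_i g_i x_i=0$, and put $I_{\pm}=\{i:\pm g_i>0\}$. The lineality space of the secondary fan is $L=g^{\perp}=\{(\ell(x_1),\dots,\ell(x_n)):\ell\text{ affine}\}$, and by Theorem~\ref{thm:samworth} the optimal subdivision is trivial exactly when the optimal point $y^{*}\in\partial\mathcal S(X)$ lies in $L$. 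Because $\langle\mathbf 1,g\rangle=\sum_i g_i=0$, the unit weight vector $\frac1n\mathbf 1$ already lies in $L$; so maximizing the unconstrained objective~\eqref{eq:unconstrained} over $y\in L$ has solution $y^{*}_L$, recording the values at the $x_i$ of the unique log-affine density $f^{*}=\exp(\ell^{*})$ on $P$ whose mean equals the sample centroid $\bar x=\frac1n\sum_i x_i$. It remains to prove that $y^{*}=y^{*}_L$.

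By Theorem~\ref{thm:normalcone}, $y^{*}_L$ is optimal for $w=\frac1n\mathbf 1$ if and only if $\frac1n\mathbf 1\in\operatorname{cone}\!\bigl(w^{\Delta_1}_{y^{*}_L},w^{\Delta_2}_{y^{*}_L}\bigr)$, the $w^{\Delta_i}_{y^{*}_L}$ being given by~\eqref{weightsFormula}. Using the integral formula for $H$ in Proposition~\ref{prop:magic}, one rewrites $\bigl(w^{\Delta_i}_{y^{*}_L}\bigr)_k=\int_P\phi^{\Delta_i}_k(t)\,f^{*}(t)\,dt$, where $\phi^{\Delta_i}_k$ is the piecewise-linear hat function of $x_k$ on the triangulation $\Delta_i$; these are nonnegative and sum to $1$, so each $w^{\Delta_i}_{y^{*}_L}$ is a probability vector on $\{x_1,\dots,x_n\}$ with barycenter $\bar x$. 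As $n=d+2$, the kernel of the $(d+1)\times n$ matrix with rows $(1,\dots,1)$ and the coordinates of $x_1,\dots,x_n$ is the line $\RR g$, so $w^{\Delta_i}_{y^{*}_L}=\frac1n\mathbf 1+c_i\,g$ for scalars $c_1,c_2$; since all three vectors sum to $1$, the condition becomes that $0$ lie between $c_1$ and $c_2$.

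Pairing $w^{\Delta_i}_{y^{*}_L}=\frac1n\mathbf 1+c_i g$ with $g$ gives $c_i\|g\|^2=\langle g,w^{\Delta_i}_{y^{*}_L}\rangle=\int_P\Phi^{\Delta_i}(t)f^{*}(t)\,dt$, where $\Phi^{\Delta_i}$ is the piecewise-linear interpolant of $g$ over $\Delta_i$; one checks $\Phi^{\Delta_2}=h_{X,g}$ (the upper hull of $(x_1,g_1),\dots,(x_n,g_n)$) and $\Phi^{\Delta_1}=-h_{X,-g}$. Hence the theorem is equivalent to
$$\int_P h_{X,g}(t)\,f^{*}(t)\,dt\ \ge\ 0\qquad\text{and}\qquad\int_P h_{X,-g}(t)\,f^{*}(t)\,dt\ \ge\ 0,$$
and by the symmetry $g\leftrightarrow-g$ it suffices to prove the first; moreover if $g$ has only one negative coordinate then $\Delta_2$ is the simplex omitting that vertex, the corresponding coordinate of $w^{\Delta_2}_{y^{*}_L}$ vanishes, and the inequality is automatic --- so the content is the case $|I_-|\ge2$, $|I_+|\ge1$, where $h_{X,g}$ interpolates $g$ at every $x_i$. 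To prove $\int_P h_{X,g}f^{*}\ge0$ I would write $h_{X,g}=\min_{j\in I_-}G_j$ as the minimum of its affine pieces on the cells $\sigma_j={\rm conv}(X\setminus\{x_j\})$ of $\Delta_2$; since $g$ annihilates affine functions, $G_j(x_j)=\|g\|^2/|g_j|-|g_j|>0$ and $G_j(\bar x)=\|g\|^2/(n|g_j|)>0$, while $\int_P h_{X,g}f^{*}=\sum_{j\in I_-}m_j\,G_j(b_j)$ with $m_j=\int_{\sigma_j}f^{*}$ and $b_j$ the $f^{*}$-barycenter of $\sigma_j$, so that $\sum_j m_j=1$ and $\sum_j m_j b_j=\bar x$. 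The remaining task is to show that the way $f^{*}$ spreads its mass among the cells forces $\sum_j m_j G_j(b_j)\ge0$; this is where one must use that $f^{*}$ is \emph{log-affine}, via Lemma~\ref{lem:integral} together with the convexity and monotonicity of $H$ (Corollary~\ref{cor:positive}). Equivalently --- and perhaps more cleanly --- one differentiates the objective of~\eqref{eq:unconstrained} at $\ell^{*}$ along the ray $\ell^{*}+s\,h_{X,g}$, $s\ge0$ (the upper hull is unchanged under affine shears and scales under $s\ge0$), which turns the claim into $\mathbb E_{f^{*}}[h_{X,g}]\ge\frac1n\sum_i h_{X,g}(x_i)$, a convex-order comparison of $f^{*}$ with the empirical measure $\frac1n\sum_i\delta_{x_i}$ against the tent function of the circuit. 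I expect this last inequality to be the main obstacle: it uses both hypotheses essentially --- it fails when $f^{*}$ is replaced by an arbitrary density with mean $\bar x$ (once $P$ is not a simplex), and it fails when $h_{X,g}$ is replaced by a general concave function --- so the argument cannot avoid both the log-affinity of $f^{*}$ and the special shape of $h_{X,g}$. Once it is in hand, the steps above place $\frac1n\mathbf 1$ in the normal cone of $\mathcal S(X)$ at $y^{*}_L\in L$, so $\hat f=f^{*}$ is log-linear and the optimal subdivision of $X$ is trivial.
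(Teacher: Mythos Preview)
Your reduction is correct and rather elegant: exploiting that $n=d+2$ makes the secondary fan one-dimensional, you reduce the theorem to the single integral inequality
\[
\int_P h_{X,g}(t)\,f^{*}(t)\,dt \ \ge\ 0,
\]
where $g$ is the circuit of $X$ and $f^{*}=e^{\ell^{*}}$ is the unique log-affine density on $P$ with mean $\bar x$. The identification $w^{\Delta_i}_k=\int_P\phi^{\Delta_i}_k f^{*}$ via Proposition~\ref{prop:magic}, and the observation that all three vectors $\tfrac1n\mathbf 1$, $w^{\Delta_1}_{y^*_L}$, $w^{\Delta_2}_{y^*_L}$ lie on the affine line $\tfrac1n\mathbf 1+\RR g$, are both sound.

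However, there is a genuine gap, which you yourself flag: the inequality above is never proved. Your two proposed routes --- decomposing $h_{X,g}=\min_{j\in I_-}G_j$ and controlling the cell barycenters, or interpreting the inequality as a convex-order comparison between $f^{*}$ and the empirical measure --- are only sketches, and as you note, Jensen goes the wrong way (concavity of $h_{X,g}$ gives $\mathbb E_{f^{*}}[h_{X,g}]\le h_{X,g}(\bar x)$, an upper bound, not the needed lower bound). So the heart of the matter is left open.

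The paper's proof takes a different and more direct path that sidesteps this inequality entirely. Rather than verifying the normal-cone condition at the flat point $y^{*}_L$, it argues by contradiction: suppose the optimal $y^{*}$ induces one of the two triangulations, say~(ii). Then the inequality defining the secondary cone forces $y^{*}_1>y^{*}_{k+1}$ for some $1\in I_{+}$, $k{+}1\in I_{-}$. The key step is a differential identity for the function $H$ (Lemma~\ref{lem:surprise}),
\[
\frac{\partial H}{\partial x_1}(x_1,\dots,x_d)\ =\ \frac{e^{x_1}H(-x_1,x_2-x_1,\dots,x_d-x_1)-H(x_1,\dots,x_d)}{x_1},
\]
which, combined with the positivity of $\partial H/\partial x_1$ from Corollary~\ref{cor:positive}, yields for each simplex $\sigma=\mathcal D\setminus\{j\}$ of~(ii) the pointwise comparison
\[
e^{y_1}H(y_\ell-y_1:\ell\in\sigma)\ >\ e^{y_{k+1}}H(y_\ell-y_{k+1}:\ell\in\sigma).
\]
Summing over $j\in I_{-}$ with the volume weights and using~\eqref{weightsFormula} gives $w_1>w_{k+1}$, contradicting $w=\tfrac1n\mathbf 1$.

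In short: your framework reformulates the problem as an inequality about the log-affine density at the flat point, whereas the paper works at a hypothetical non-flat optimum and uses the identity in Lemma~\ref{lem:surprise} to extract a pairwise weight inequality directly. The paper's route is complete; yours would become so if you could establish the integral inequality, and Lemma~\ref{lem:surprise} is likely the missing ingredient there as well.
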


We shall use the following lemma, which can be derived by a direct computation.

\begin{lemma}\label{lem:surprise} The symmetric function $H$ in 
Section 4 satisfies the differential equation
$$\frac{\partial H}{\partial x_1}
(x_1,\dots, x_d) \quad = \quad \frac{e^{x_1}H(-x_1, x_2-x_1,\dots, x_d-x_1) - H(x_1,\dots, x_d)}{x_1}.$$
\end{lemma}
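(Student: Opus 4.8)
The plan is to verify the differential equation directly from the integral representation in Proposition~\ref{prop:magic}, which expresses
$$H(x_1,\ldots,x_d) \,=\, \int_{\Sigma_d}\Bigl(1-\sum_{i=1}^d t_i\Bigr)\exp\Bigl(\sum_{i=1}^d x_i t_i\Bigr)\,dt_1\cdots dt_d.$$
Differentiating under the integral sign with respect to $x_1$ brings down a factor of $t_1$, so
$$\frac{\partial H}{\partial x_1}(x_1,\ldots,x_d) \,=\, \int_{\Sigma_d} t_1\Bigl(1-\sum_{i=1}^d t_i\Bigr)\exp\Bigl(\sum_{i=1}^d x_i t_i\Bigr)\,dt.$$
The idea is to recognize the right-hand side of the claimed identity as an application of integration by parts (or equivalently, the divergence theorem) in the variable $t_1$ on the simplex, and to match the two sides.

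First I would integrate by parts in $t_1$. Write the integrand of $\partial H/\partial x_1$ as $t_1 g(t) e^{x_1 t_1}$ where $g(t) = (1-\sum_i t_i)e^{\sum_{i\geq 2}x_i t_i}$, and for each fixed $(t_2,\ldots,t_d)$ in the $(d-1)$-simplex the variable $t_1$ ranges over $[0,\,1-t_2-\cdots-t_d]$. Using $\partial_{t_1}(e^{x_1t_1}) = x_1 e^{x_1 t_1}$, one gets
$$x_1\int t_1 g(t)e^{x_1t_1}\,dt_1 \,=\, \bigl[t_1 g(t)e^{x_1t_1}\bigr]_{t_1=0}^{t_1=1-\sum_{i\geq2}t_i} - \int \partial_{t_1}\!\bigl(t_1 g(t)\bigr)\,e^{x_1t_1}\,dt_1.$$
The boundary term at $t_1=0$ vanishes, and at the upper endpoint $g$ vanishes because $1-\sum_i t_i = 0$ there; so the boundary contribution is zero. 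Now $\partial_{t_1}(t_1 g(t)) = g(t) + t_1\,\partial_{t_1}g(t) = g(t) - t_1\,e^{\sum_{i\geq 2}x_it_i}$ (the $t_1$-derivative of $1-\sum t_i$ is $-1$). Substituting back and dividing by $x_1$ expresses $\partial H/\partial x_1$ as $\tfrac1{x_1}$ times $\bigl(-\!\int g(t)e^{x_1t_1}dt + \int t_1 e^{\sum_{i\geq2}x_it_i}e^{x_1t_1}dt\bigr)$, where the first integral is exactly $-H(x_1,\ldots,x_d)$.

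It remains to identify the second integral, $\int_{\Sigma_d} t_1\exp(\sum_{i=1}^d x_it_i)\,dt$, with $e^{x_1}H(-x_1,x_2-x_1,\ldots,x_d-x_1)$. The natural move is the change of variables $s_1 = 1 - t_1 - t_2 - \cdots - t_d$, $s_i = t_i$ for $i\geq 2$, which is a volume-preserving affine bijection of the standard simplex to itself: it sends $t_1 = 1-\sum_{i}s_i$ and leaves the slack coordinate $s_1\geq 0$. Under this substitution $\sum_{i=1}^d x_i t_i = x_1(1-\sum_i s_i) + \sum_{i\geq 2}x_i s_i = x_1 - x_1 s_1 + \sum_{i\geq2}(x_i - x_1)s_i$, and $t_1 = 1-\sum_i s_i$ becomes precisely the factor $(1-\sum_i s_i)$ that appears in the integral representation of $H$. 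Pulling out $e^{x_1}$, the integral becomes $e^{x_1}\int_{\Sigma_d}(1-\sum_i s_i)\exp(-x_1 s_1 + \sum_{i\geq 2}(x_i-x_1)s_i)\,ds = e^{x_1}H(-x_1,x_2-x_1,\ldots,x_d-x_1)$, by Proposition~\ref{prop:magic}. Combining with the previous paragraph gives the stated identity.

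The main obstacle, such as it is, is purely bookkeeping: keeping careful track of which coordinate plays the "slack" role before and after the substitution, and confirming that the boundary terms in the integration by parts genuinely vanish (the vanishing of $1-\sum t_i$ at the relevant face is what makes this work — this is exactly why the weight $(1-\sum t_i)$ in the definition of $H$ is the right one). An alternative, if one prefers to avoid analysis, is to prove the identity formally on the level of the power series~\eqref{eq:Hidentity}: the substitution $x_i\mapsto x_i - x_1$ and the factor $e^{x_1}$ have a clean interpretation via the generating-function identity for complete homogeneous symmetric polynomials, and $\partial_{x_1}$ acts term-by-term; matching coefficients of $h_r$ reduces to a finite symmetric-function identity. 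I would present the integral-based argument as the main proof since it is shortest, and mention the series approach as a remark.
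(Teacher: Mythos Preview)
Your proof is correct. The paper does not actually supply a proof of this lemma, stating only that it ``can be derived by a direct computation,'' so there is nothing to compare against in detail. Your route via the integral representation of Proposition~\ref{prop:magic} --- differentiate under the integral, integrate by parts in $t_1$ (both boundary terms vanish, one from the factor $t_1$ and the other from the factor $1-\sum_i t_i$), and then apply the volume-preserving affine automorphism of $\Sigma_d$ that swaps $t_1$ with the slack coordinate --- is clean and complete. The only minor caveat is that the identity as stated requires $x_1\neq 0$, which your integration-by-parts step also assumes; the case $x_1=0$ follows by continuity (both sides extend analytically). The series approach you mention as an alternative would also work but is indeed less economical.
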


\begin{proof}[Proof of Theorem~\ref{thm:d+2points}] 
Our $d+2$ points in $\mathbb R^d$ can be partitioned uniquely into two 
affinely independent subsets whose
convex hulls intersect. This gives rise to a unique identity
$$\sum_{i=1}^k \alpha_ix_i \,\,\,= \,\, \sum_{j=k+1}^{d+2}\beta_j x_j,$$
where $1\leq k \leq d+1, \,\,\alpha_1,\dots, \alpha_k, \beta_{k+1},\dots, \beta_{d+2} \geq 0$, and $\sum \alpha_i = \sum \beta_j = 1$. We abbreviate $\mathcal{D} = \{1,2,\ldots,d+2\}$.
There are precisely three regular subdivisions of the configuration $X$:
\begin{enumerate}
\item[(i)] the triangulation $\,\bigl\{ \mathcal{D} \backslash  \{1\}, 
\mathcal{D} \backslash  \{2\}, \ldots, \mathcal{D} \backslash  \{k\} \bigr\}$,
\item[(ii)] the triangulation
$\,\bigl\{ \mathcal{D} \backslash  \{k{+}1\}, 
\mathcal{D} \backslash  \{k{+}2\}, \ldots, \mathcal{D} \backslash  \{d{+}2\} \bigr\}$,
\item[(iii)] the flat subdivision $\bigl\{ \mathcal{D} \bigr\}$.
\end{enumerate}
The simplex volumes $\,\sigma_{\mathcal{D} \setminus i} 
= {\rm vol}\bigl({\rm conv}(\,x_\ell : \ell \in \mathcal{D} \backslash \{i\})\bigr) \,$ satisfy the identity
\begin{equation}
\label{eq:simplexvolumes}
\qquad  \quad \sum_{i=1}^k \sigma_{\mathcal{D}\setminus i} \,\,\,= \,\,\sum_{j=k+1}^{d+2} 
 \! \sigma_{\mathcal{D}\setminus j}
  \quad = \quad {\rm vol}({\rm conv}(X)).   
\end{equation}

Now let $w \in \RR^{d+2}$ be a positive weight vector, and
suppose that the optimal heights $y_1,\dots, y_{d+2}$ do not induce the flat subdivision (iii). 
This means that the optimal subdivision is one  of the triangulations (i) and (ii).
We will show that in that case $w \not=  (\lambda, \lambda,\ldots,\lambda)$.

After relabeling we may assume that (ii) is the optimal
triangulation for the given weights~$w$.
This is equivalent to the inequality
$$\sum_{i=1}^k y_i\sigma_{\mathcal{D}\setminus i} 
\,\,\,> \,\, \sum_{j=k+1}^{d+2}y_j \sigma_{\mathcal{D}\setminus j}. \quad \qquad $$
In light of (\ref{eq:simplexvolumes}),
 at least one of $y_1,\dots, y_{k}$ has to be larger than at least one of $y_{k+1},\dots, y_{d+2}$. 
 After relabeling once more, we may assume that  $\,y_1 > y_{k+1}$.

Theorem \ref{thm:normalcone} states that the weight vector $w$ is
uniquely determined (up to scaling) by the optimal height vector $y$.
Namely, the coordinates of $w$ are given by the formula \eqref{weightsFormula} for
the optimal triangulation (ii). That formula gives
\begin{equation}
\label{eq:w_1}
w_1 \,\,\,= \,\, \sum_{j=k+1}^{d+2} \sigma_{\mathcal{D}\backslash j}e^{y_1}
H \bigl(y_\ell-y_1 : \ell\in\mathcal{D}\backslash \{1, j\} \bigr),  \quad
\end{equation}
and
\begin{equation}
\label{eq:w_{k+1}}
w_{k+1} \,\,\,=\,\, \sum_{j=k+2}^{d+2}\sigma_{\mathcal{D}\backslash j}e^{y_{k+1}}
H \bigl(y_\ell-y_{k+1}: \ell\in \mathcal{D}\backslash \{k{+}1,j\} \bigr).
\end{equation}
For any index $j\in \{k{+}2,\dots, d{+}2\}$ we consider the expression
\begin{align}\label{eq:difference}
e^{y_{1}}H(y_\ell-y_1 : \ell\in\mathcal{D}\backslash j )\,\,-\,\,
e^{y_{k+1}}H(y_\ell-y_{k+1}: \ell\in \mathcal{D}\backslash j)\, \, \\
= \,\,\, \bigl(\,e^{y_1-y_{k+1}} H(y_{\ell} - y_{k+1} - (y_1-y_{k+1}) : \ell\in \mathcal{D}\backslash j)
\, - \,H(y_\ell-y_{k+1}: \ell\in \mathcal{D}\backslash j)
\,\bigr). \notag
\end{align}
If we divide the parenthesized difference by $x_1 = y_1 - y_{k+1}$,
then we obtain an expression as in the right hand side of Lemma~\ref{lem:surprise}.
Then, by Lemma~\ref{lem:surprise}, the expression in~\eqref{eq:difference} becomes
$$ e^{y_{k+1}} \cdot (y_1-y_{k+1}) \cdot \frac{\partial H}{\partial x_1}\bigl( \,
y_\ell-y_{k+1}: \ell\in \mathcal{D}\backslash j \,\bigr) .$$
By Corollary \ref{cor:positive}, all partial derivatives of $H$ are positive.
Also, recall that $y_1>y_{k+1}$. Therefore, the expression in~\eqref{eq:difference} is positive. Hence, for any $j\in\{k{+}2,\dots, d{+}2\}$,
we have
$$ e^{y_{1}}H(y_\ell-y_1 : \ell\in\mathcal{D}\backslash j)
\,\,\,> \,\,\, e^{y_{k+1}}H(y_\ell-y_{k+1}: \ell\in \mathcal{D}\backslash j). $$
In the left expression it suffices to take $ \,\ell\in\mathcal{D}\backslash \{1, j\} $,
and in the right expression it suffices to take $\, \ell\in \mathcal{D}\backslash \{k{+}1,j\}$.
Summing over all $j$, the identities (\ref{eq:simplexvolumes}),
 (\ref{eq:w_1}) and (\ref{eq:w_{k+1}}) now imply
$$ w_1 \,\,> \,\, w_{k+1}. $$
This means that $w \not= (\lambda,\lambda,\ldots,\lambda)$ for all $\lambda > 0$.
We conclude that it is impossible to get a nontrivial subdivision
of $X$ as the optimal solution when all the weights are equal.
\end{proof}

We now show that the result of Theorem \ref{thm:d+2points} 
is the best possible in the  following sense.

\begin{theorem}
\label{thm:d+3points}
For any integer $d \geq 2$, there exists a configuration of
$n=d+3$ points in $\RR^d$ for which the
optimal subdivision with respect to unit weights is non-trivial.
\end{theorem}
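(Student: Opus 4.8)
The plan is to exhibit, for each $d \geq 2$, an explicit configuration of $n = d+3$ points whose optimal subdivision for unit weights is non-trivial, and to verify non-triviality by contradiction using Theorem~\ref{thm:d+2points} together with the weight formula \eqref{weightsFormula}. The natural construction is to build on the $n = d+2$ case: take a configuration $X'$ of $d+2$ points affinely spanning $\RR^d$ and add one more point $x_{d+3}$ placed so that the symmetry of the configuration is broken in a controlled way. In fact, since Theorem~\ref{thm:d+2points} says the flat subdivision is forced on $d+2$ points with \emph{equal} weights, the strategy is to choose $X$ so that the six-point plane examples of Table~\ref{tab:caroline} (the rows with several $3$-gons) generalize: for $d=2$ one already knows from Example~\ref{ex:Gaussian} that there are six-point configurations — e.g. the one in \eqref{eq:sixpoints} — with a non-trivial (indeed fully triangulated) optimal subdivision under unit weights, so that case is settled by exhibiting such a configuration explicitly and appealing to the {\tt LogConcDEAD} computation. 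The content of the theorem is therefore the inductive step $d \rightsquigarrow d+1$.

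The key steps, in order, would be: (1) Fix a base configuration $X_d$ of $d+3$ points in $\RR^d$ with non-trivial optimal subdivision $\Delta_d$ for unit weights; call $y^{(d)} \in \RR^{d+3}$ the optimal height vector, which by Theorem~\ref{thm:samworth} is the unique maximizer and by Theorem~\ref{thm:normalcone} satisfies $\frac1{d+3}(1,\dots,1) \in \mathrm{cone}(w^{\Delta_i}_{y^{(d)}})$. (2) Lift to $\RR^{d+1}$ by forming the bipyramid (or cone) $X_{d+1} = \bigl( (x,0) : x \in X_d \bigr) \cup \{(\,*\,, +1), (\,*\,,-1)\}$, or more simply by adding two apex points above and below the barycenter, obtaining $d+5 = (d+1)+4$ points — but since we want exactly $n = d+3$ we instead cone once: $X_{d+1} = \bigl( (x,0) : x \in X_d \bigr) \cup \{ e_{d+1} \}$, which has $(d+3)+1 = (d+1)+3$ points, as required. (3) Analyze the regular subdivisions of $X_{d+1}$: every one is obtained by coning $\Delta_d$-subdivisions of the base with the apex, so there is a natural bijection between relevant subdivisions of $X_{d+1}$ and those of $X_d$, with simplex volumes rescaled by a fixed factor (the height of the apex over its foot). (4) Choose the height $y^{(d+1)}$ of the lifted configuration to restrict to $y^{(d)}$ on the base and to take a suitable value at the apex so that the induced subdivision is the cone over $\Delta_d$. (5) Compute $w^{\Delta}_{y^{(d+1)}}$ via \eqref{weightsFormula}: because the new simplices all contain the apex and factor through the old ones, the function $H$ evaluated on the enlarged argument sets reduces — using the integral representation \eqref{eqn:HIntegralExpression} of Proposition~\ref{prop:magic} — to the $d$-dimensional values times controlled correction terms. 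Then show the resulting weight vector (or the cone of such vectors over all refining triangulations) still contains the uniform vector $\frac1{d+3}(1,\dots,1)$, or adjust the apex height until it does.

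The main obstacle will be step~(5): controlling how the symmetric function $H$ behaves under adding a coordinate. Passing from $H$ in $d$ variables to $H$ in $d+1$ variables is \emph{not} a simple substitution — $H$ has a nontrivial pole structure and the integral is over a simplex of one higher dimension — so the reduction requires either the recursive differential identity of Lemma~\ref{lem:surprise} (which already relates $d$- and $(d{-}1)$-variable values of $H$ and could be iterated) or a direct manipulation of the confluent hypergeometric representation ${}_1F_1(1; d+2; \cdot)$ from Proposition~\ref{prop:magic}. Getting the uniform weight vector to lie in the relevant cone after lifting may force a one-parameter search over the apex height; the cleanest route is probably to argue by a continuity/openness argument as in the proof of Theorem~\ref{thm:converse}: the set of height vectors inducing the cone over $\Delta_d$ is open in its secondary cone, the map $y \mapsto \mathrm{cone}(w^{\Delta_i}_y)$ varies analytically, and one checks that for the base configuration the uniform direction is in the interior of the union of these cones, a condition that is stable under the lift. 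Finally, for the base case $d=2$ one simply cites the explicit six-point example realized by {\tt LogConcDEAD} (e.g. a slight perturbation of \eqref{eq:sixpoints}) together with Theorem~\ref{thm:samworth} to certify uniqueness of the optimum, thereby anchoring the induction.
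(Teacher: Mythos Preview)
Your proposal has a counting slip: for $d=2$ the theorem concerns $n = d+3 = 5$ points, not six, so citing the six-point configuration \eqref{eq:sixpoints} does not anchor the induction. (The paper does display a five-point planar example in Figure~\ref{fig:fivepoints}, but invoking {\tt LogConcDEAD} output is numerical evidence, not a proof.) More seriously, the inductive coning step $d \rightsquigarrow d+1$ leaves the acknowledged obstacle in step~(5) genuinely open. After coning, the apex lies in \emph{every} maximal cell of the lifted subdivision, so its weight from \eqref{weightsFormula} is a sum over all simplices while each base point's weight is only a partial sum; there is no evident reason the uniform direction should remain in the cone of Theorem~\ref{thm:normalcone}, and the single free parameter you propose to tune is one scalar against $d+3$ equality constraints. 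Note also that Lemma~\ref{lem:surprise} relates values of $H$ in the \emph{same} number of variables --- it shifts the arguments, it does not drop one --- so it does not supply the dimension reduction you would need.

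The paper's proof avoids induction entirely and is much shorter. For each $d$ it works directly with the $(d{+}2)$-point configuration consisting of a $d$-simplex together with its centroid, and computes via \eqref{weightsFormula} that a non-flat triangulation is optimal precisely when the weights satisfy $w_1 = \cdots = w_{d+1}$ and $w_{d+2}/w_1 > (d+1)/d$ (Corollary~\ref{cor:setup}). Taking this ratio equal to $2$, which exceeds $(d+1)/d$ exactly for $d \geq 2$, one then \emph{splits} the heavy point $x_{d+2}$ into two nearby points, each carrying half its weight. The result is $d+3$ points with equal weights, and a short continuity argument (the locus of configurations whose optimal subdivision is trivial is closed) finishes the proof. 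No coning, no induction, and no comparison of $H$ across dimensions is needed.
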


The hypothesis $d \geq 2$ is essential in this theorem.
Indeed, for $d=1$ it can be shown, 
using the formulas in Example \ref{ex:d=1},
that the flat subdivision is optimal for any
configuration of $d+3=4$ points 
on the line $\mathbb R$ with unit weights.
 Here is an illustration of Theorem~\ref{thm:d+3points}.
 
 \begin{figure}[h]
\begin{center}
\includegraphics[width = 0.35\textwidth]{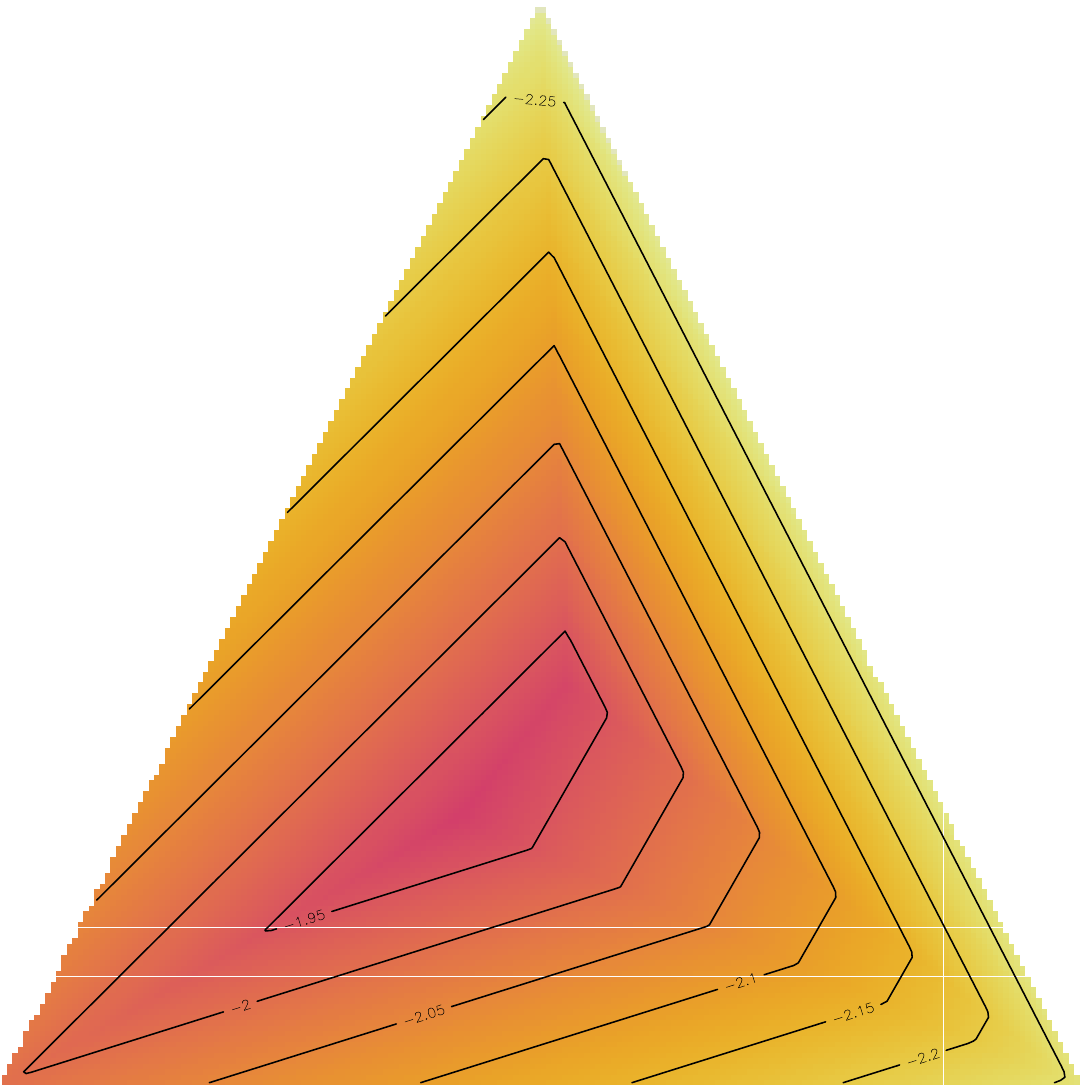} $\quad\qquad$
\includegraphics[width = 0.35\textwidth]{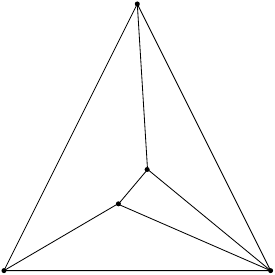}
\end{center}
\vspace{-0.2in}
\caption{The optimal log-concave density for the  five data points in (\ref{eq:fivepoints}) with unit weights.
\label{fig:fivepoints}}
\end{figure}

\begin{example} \rm Fix unit weights on the following five points in the plane:
\begin{align}\label{eq:fivepoints}
 X \,\, = \,\, \bigl(\, (0,0), \,(40, 0),\, (20, 40),\, (17, 10),\, (21, 15) \,\bigr).
\end{align}
Using {\tt LogConcDEAD} \cite{CGS}, we find that 
the optimal subdivision equals $\{124, 245, 235, 1345\}$.
\end{example}

To derive Theorem \ref{thm:d+3points}, we first study
the following configuration of $d+2$ points in $\RR^d$:

\begin{equation}
\label{eq:specialconfig}
X\,\, =\,\, \biggl( e_1\,,\,e_2\,,\,\ldots\,,\,  e_d\,,\, 0\,,\,\,  \frac1{d+1} \sum_{i=1}^de_i \,\biggr). 
\end{equation}

\begin{lemma}
Let $\alpha > 0$ and assign weights as follows to the configuration $X$ in (\ref{eq:specialconfig}):
 \begin{align}\label{eq:equalWeightsFormula}
 w_1=w_2 =\cdots = w_{d+1}  > 0,
 \text{ and } \,\,w_{d+2} \,=\,
 w_1\frac{(d+1)e^{\alpha} H(-\alpha, -\alpha,\dots, -\alpha)}{dH(\alpha, 0,\dots,0)}.
 \end{align}
 Then the optimal heights satisfy $\,y_1= y_2 = \cdots = y_{d+1}\,$ and $\,y_{d+2} = y_1+\alpha$.
 \end{lemma}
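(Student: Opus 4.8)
The strategy is to exploit the symmetry of the configuration~(\ref{eq:specialconfig}) and the formula~(\ref{weightsFormula}) from Theorem~\ref{thm:normalcone}. First I would identify the relevant regular subdivisions of $X$. The point $x_{d+2} = \frac{1}{d+1}\sum_{i=1}^d e_i$ is the barycenter of the simplex $\mathrm{conv}(e_1,\dots,e_d,0)$, so the configuration~(\ref{eq:specialconfig}) is exactly of the $d+2$-point type analyzed in the proof of Theorem~\ref{thm:d+2points}: the unique affine dependence is $\frac{1}{d+1}\bigl(e_1+\cdots+e_d+0\bigr)=x_{d+2}$, i.e.\ $k=d+1$, all $\alpha_i=\tfrac1{d+1}$ and $\beta_{d+2}=1$. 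The two nontrivial triangulations are then (i) the "star" triangulation $\bigl\{\mathcal D\setminus\{1\},\dots,\mathcal D\setminus\{d+1\}\bigr\}$, which subdivides the big simplex by coning from $x_{d+2}$, and (ii) the single big simplex $\mathcal D\setminus\{d+2\}=\{1,\dots,d+1\}$. For a height vector with $y_1=\cdots=y_{d+1}=:c$ and $y_{d+2}=c+\alpha$ with $\alpha>0$, the point $(x_{d+2},y_{d+2})$ lies strictly above the hyperplane through the other lifted points, so $x_{d+2}$ becomes a genuine tent pole and the induced subdivision is precisely the star triangulation (i). Hence such a $y$ is relevant, and by Theorem~\ref{thm:samworth} it is the optimal height vector for exactly one ray of weights, namely the one given by~(\ref{weightsFormula}) applied to triangulation~(i).

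Next I would compute that weight vector explicitly. Write $\sigma_j := \mathrm{vol}(\mathrm{conv}(x_\ell:\ell\in\mathcal D\setminus\{j\}))$. By the symmetry permuting $e_1,\dots,e_d$ (which fixes $0$ and $x_{d+2}$), all the simplices $\mathcal D\setminus\{i\}$ for $i\in\{1,\dots,d\}$ have equal volume, say $v$; the simplex $\mathcal D\setminus\{d+1\}=\{1,\dots,d+1\}$ (the big one) has volume $V=\mathrm{vol}(\mathrm{conv}(e_1,\dots,e_d,0))$, and~(\ref{eq:simplexvolumes}) gives $d\cdot v + (\text{the }j=d+1\text{ term is on the other side})$; more precisely the barycenter relation forces $\sigma_i = \tfrac1{d+1}V$ for $i=1,\dots,d+1$ is the wrong count — rather, each of the $d+1$ small simplices $\mathcal D\setminus\{i\}$, $i=1,\dots,d+1$, that contain $x_{d+2}$ has volume $\tfrac1{d+1}V$, by the barycentric subdivision. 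Now apply~(\ref{weightsFormula}) for triangulation~(i) at the height vector $y$. Using $y_i=c$ for $i\le d+1$ and $y_{d+2}=c+\alpha$: for each $k\le d+1$, the cells of~(i) containing $k$ are all $\mathcal D\setminus\{i\}$ with $i\le d+1$, $i\ne k$; the differences $y_\ell-y_k$ entering $H$ are $0$ for the other small-index vertices and $\alpha$ for the vertex $x_{d+2}$. So the argument of $H$ is $(\alpha,0,\dots,0)$ up to the symmetry of $H$, and the $k$-th coordinate of $w$ becomes $w_k = \bigl(\sum_{i\le d+1,\,i\ne k}\tfrac1{d+1}V\bigr)e^{c}\,H(\alpha,0,\dots,0) = \tfrac{d}{d+1}V\,e^c\,H(\alpha,0,\dots,0)$, which is independent of $k$; this proves $w_1=\cdots=w_{d+1}$. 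For $k=d+2$: the cells of~(i) containing $d+2$ are all $d+1$ of them, each of volume $\tfrac1{d+1}V$, and in each the differences $y_\ell-y_{d+2}$ are all $-\alpha$ (there are $d$ of them), so the argument of $H$ is $(-\alpha,\dots,-\alpha)$; thus $w_{d+2} = (d+1)\cdot\tfrac1{d+1}V\cdot e^{c+\alpha}\,H(-\alpha,\dots,-\alpha) = V e^{c+\alpha}H(-\alpha,\dots,-\alpha)$. Dividing, $w_{d+2}/w_1 = \dfrac{(d+1)\,e^{\alpha}H(-\alpha,\dots,-\alpha)}{d\,H(\alpha,0,\dots,0)}$, which is exactly~(\ref{eq:equalWeightsFormula}).

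Finally I would close the logical loop. Given $\alpha>0$ and the weights in~(\ref{eq:equalWeightsFormula}), there is a unique scaling constant $c$ making $\int_{\RR^d} f_{X,y}(t)\,dt=1$ for the height vector $y=(c,\dots,c,c+\alpha)$ — by Theorem~\ref{thm:IntegralFormula} the integral is $e^c$ times a fixed positive constant in $\alpha$, so $c$ is determined. For that $c$, the computation above shows $w^{(\mathrm i)}_y$ is a positive multiple of the prescribed $w$; by Theorem~\ref{thm:normalcone} (or directly Lemma~\ref{lem:weightsAnyDim}), this $y$ is then the unique optimal height vector for those weights. Hence the optimal heights satisfy $y_1=\cdots=y_{d+1}$ and $y_{d+2}=y_1+\alpha$, as claimed.

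\textbf{Main obstacle.} The only delicate point is the bookkeeping of which small simplices contain which vertex and their volumes — i.e.\ verifying that~(\ref{eq:specialconfig}) really has $x_{d+2}$ as the barycenter of $\{e_1,\dots,e_d,0\}$ so that the star triangulation is the one induced by the symmetric height lift, and that each small simplex has volume $\tfrac1{d+1}$ of the big one. Once this combinatorial setup is pinned down, substitution into~(\ref{weightsFormula}) and the symmetry of $H$ make the rest a direct calculation; no new analytic input beyond Corollary~\ref{cor:positive} (positivity of $H$, to guarantee $w\in\RR^{d+2}_{>0}$) is needed.
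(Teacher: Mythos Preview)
Your approach is correct and essentially identical to the paper's: both compute the weight vector from formula~(\ref{weightsFormula}) applied to the star triangulation $\{\mathcal D\setminus\{1\},\dots,\mathcal D\setminus\{d+1\}\}$ at the symmetric height vector $y_1=\cdots=y_{d+1}=c$, $y_{d+2}=c+\alpha$, then take the ratio $w_{d+2}/w_1$ to recover~(\ref{eq:equalWeightsFormula}). Your write-up is more explicit about the barycentric volumes and the normalization constant $c$, and contains a brief self-corrected mis-identification of $\mathcal D\setminus\{d+1\}$ versus $\mathcal D\setminus\{d+2\}$, but the argument and its ingredients match the paper's proof exactly.
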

 
\begin{proof} 
Let $\mathcal D = \{1,\dots, d+2\}$ and fix $w$ as in (\ref{eq:equalWeightsFormula}).
The volumes $\text{vol}(\mathcal D\backslash\{i\})$ are equal for $i\in\{1,\dots, d+1\}$. Set $\sigma = \text{vol}(\mathcal D\backslash\{i\})$. We will show that  the heights $y_1=\cdots=y_{d+1} = y$ and $y_{d+2} = y+\alpha$ solve the Lagrange multiplier equations \eqref{weightsFormula} for our 
optimization problem, assuming that $\Delta$ is the
 triangulation $\{\mathcal D\backslash \{1\},\ldots, \mathcal D\backslash \{d{+}1 \}\}$. 
Indeed, from \eqref{weightsFormula} we derive
$$ \begin{matrix}
 & w_i &= &
d \cdot \sigma \cdot e^y \cdot H(\alpha, 0, \ldots, 0) \quad & \hbox{
for $i\leq d+1$}  \\
\hbox{and}  \qquad & w_{d+2} 
 & = & (d{+}1)\cdot \sigma \cdot e^{y+\alpha} \cdot H(-\alpha, \dots, -\alpha). & 
 \end{matrix} $$
By taking ratios, we now obtain  \eqref{eq:equalWeightsFormula}.
Of course, the weights must be scaled so that they sum to one.
Since $\alpha > 0$, the subdivision induced by $y$ is indeed 
$\{\mathcal D\backslash 1, \ldots, \mathcal D\backslash \{d{+}1 \} \}$.
\end{proof}

We now note that,  by Lemma~\ref{lem:surprise},
$$e^\alpha \cdot H(-\alpha, \dots, -\alpha) - H(\alpha, 0, \ldots, 0) 
\,\,=\,\, \alpha \frac{\partial H}{\partial \alpha}(\alpha, 0, \ldots, 0).$$
This is positive for $\alpha > 0$, zero for $\alpha = 0$, and negative for $\alpha < 0$. The
first case implies:

\begin{corollary}  \label{cor:setup}
Fix the configuration $X$ in (\ref{eq:specialconfig}) and suppose that
 $w_1=\cdots=w_{d+1}$. Then
   $\frac{w_{d+2}}{w_1} > \frac{d+1}d$ if and only if
the optimal subdivision is the  triangulation $\{\mathcal D\backslash \{1\},
\ldots, \mathcal D \backslash \{ d{+}1 \}\}$.
\end{corollary}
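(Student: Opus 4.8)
The plan is to combine the differential identity displayed just before the statement with two structural facts already in hand: uniqueness of the optimal height vector (Theorem~\ref{thm:samworth}) and Corollary~\ref{cor:cccc}. Set $\mathcal D=\{1,\dots,d+2\}$, let $P={\rm conv}(X)$ and $V={\rm vol}(P)$, and recall that in the configuration \eqref{eq:specialconfig} the point $x_{d+2}=\frac1{d+1}\sum_{i=1}^{d+1}x_i$ is the barycenter of the $d$-simplex ${\rm conv}(x_1,\dots,x_{d+1})$, so the barycentric star splits that simplex into $d+1$ pieces of equal volume $V/(d+1)$. The symmetric group $S_{d+1}$ permuting $x_1,\dots,x_{d+1}$ acts on $X$ by affine maps fixing $x_{d+2}$; since $w_1=\cdots=w_{d+1}$, the vector $w$ is $S_{d+1}$-invariant, and uniqueness of the maximum likelihood estimator forces the optimal heights to be $S_{d+1}$-invariant, i.e.\ $y^*_1=\cdots=y^*_{d+1}=:y$. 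Put $\alpha:=y^*_{d+2}-y$. At the optimum $y^*$ is relevant --- otherwise $y^*_{d+2}$ could be raised to $h_{X,y^*}(x_{d+2})$, leaving the integral unchanged while increasing $w\cdot y$ since $w_{d+2}>0$ --- and, $x_{d+2}$ being the barycenter of points lifted to the common height $y$, relevance forces $\alpha\geq 0$. Of the three regular subdivisions of $X$ (the $k=d+1$ case of Theorem~\ref{thm:d+2points}), the single simplex $\{\mathcal D\setminus\{d+2\}\}$ would require $x_{d+2}$ strictly below the plane through $x_1,\dots,x_{d+1}$ and so is incompatible with relevance, the star triangulation $\Delta_{\rm st}=\bigl\{\mathcal D\setminus\{1\},\dots,\mathcal D\setminus\{d+1\}\bigr\}$ is induced exactly when $\alpha>0$, and the flat subdivision exactly when $\alpha=0$. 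Thus the corollary reduces to the equivalence $\alpha>0\iff w_{d+2}/w_1>\tfrac{d+1}{d}$.

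In the case $\alpha>0$, so that $\Delta_{\rm st}$ is optimal, Theorem~\ref{thm:normalcone} shows $w$ is, up to a positive scalar, determined by $y^*$ via \eqref{weightsFormula} for $\Delta_{\rm st}$; the computation in the proof of the lemma preceding this corollary gives $w_1=d\,\sigma\,e^{y}H(\alpha,0,\dots,0)$ and $w_{d+2}=(d+1)\,\sigma\,e^{y+\alpha}H(-\alpha,\dots,-\alpha)$ with $\sigma=V/(d+1)$, so that
\[
\frac{w_{d+2}}{w_1}\;=\;\frac{d+1}{d}\cdot\frac{e^{\alpha}H(-\alpha,\dots,-\alpha)}{H(\alpha,0,\dots,0)}.
\]
By the identity displayed before the statement (an instance of Lemma~\ref{lem:surprise}), the numerator exceeds the denominator by exactly $\alpha\,\frac{\partial H}{\partial x_1}(\alpha,0,\dots,0)$, which is positive for $\alpha>0$ because every partial derivative of $H$ is positive (Corollary~\ref{cor:positive}); together with $H>0$ this yields $w_{d+2}/w_1>\tfrac{d+1}{d}$.

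In the case $\alpha=0$, the optimal $y^*=(c,\dots,c)$ with $e^cV=1$ and the flat subdivision is optimal; by Theorem~\ref{thm:normalcone} and Corollary~\ref{cor:cccc}, $w$ lies in the cone spanned by the GKZ vectors of the two triangulations refining the flat subdivision. Using \eqref{eq:GKZvector} and the barycentric splitting above, these GKZ vectors are $z^{\Delta_{\rm st}}=\frac{V}{d+1}(d,\dots,d,d+1)$ and $z^{\{\mathcal D\setminus\{d+2\}\}}=V(1,\dots,1,0)$. Writing $w=s\,z^{\Delta_{\rm st}}+t\,z^{\{\mathcal D\setminus\{d+2\}\}}$ with $s,t\geq 0$ and $s>0$ (as $w$ is positive), one gets $w_{d+2}=sV$ and $w_1=\frac{sdV}{d+1}+tV$, hence $\frac{w_{d+2}}{w_1}=\bigl(\frac{d}{d+1}+\frac{t}{s}\bigr)^{-1}\leq\frac{d+1}{d}$. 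Combining the two cases proves the equivalence.

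Once the identity above the statement is available, the argument is essentially bookkeeping; the two places needing care are the symmetry reduction $y^*_1=\cdots=y^*_{d+1}$ --- so that the two-parameter formula \eqref{eq:equalWeightsFormula} applies and $\alpha$ is meaningful --- and the degenerate case $\alpha=0$, where the $H$-formula fails and one must instead route through Corollary~\ref{cor:cccc} and the explicit GKZ vectors. I do not anticipate a genuine obstacle beyond keeping these two cases straight.
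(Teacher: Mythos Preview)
Your proof is correct and follows the paper's line for the main direction: the symmetry reduction to $y^*_1=\cdots=y^*_{d+1}$, relevance forcing $\alpha\geq 0$, and for $\alpha>0$ the lemma plus the displayed identity giving $w_{d+2}/w_1>\tfrac{d+1}{d}$. The paper's own justification is extremely terse---literally ``the first case implies''---and does not spell out the symmetry argument, relevance, or the boundary case. Your treatment of $\alpha=0$ via Corollary~\ref{cor:cccc} and the explicit GKZ vectors of the two triangulations is a clean addition that the paper omits: it pins down the flat-optimum cone as exactly the weights with ratio at most $\tfrac{d+1}{d}$, which is what makes the biconditional airtight. So the approach is the same in spirit, but your write-up is more complete.
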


We are now prepared to pass from $d+2$ to $d+3$ points,
and to offer the missing proof.

\begin{proof}[Proof of Theorem \ref{thm:d+3points}]
We use Corollary \ref{cor:setup} with $\frac{w_{d+2}}{w_1} = 2$.
 This is strictly bigger than $\frac{d+1}{d}$ whenever $d\geq 2$. We
 redefine $(X,w)$ by splitting the last point $x_{d+2}$ into two nearby points
 with equal weights. Then $n=d+3$ and the optimal subdivision is non-trivial.
 This holds because, for any fixed $w \in \RR^n$, the set of $X$
 whose optimal subdivision is trivial is described by the
 vanishing of continuous functions. It is hence closed in the space of 
 configurations.
\end{proof}

We conclude this paper with a pair of challenges for Nonparametric Algebraic Statistics.

\begin{problem}  What is the smallest size $n $ of a configuration 
$X$ in $\RR^d$ such that the optimal subdivision of $\,X$ with  unit weights
has at least $c$ cells? \hfill This  $n$ is a function of $c$ and $d$.  \\
We just saw that $n(2,d) = d+3$ for $d \geq 2$.  \hfill
Determine upper and lower bounds for $n(c,d)$.
\end{problem}

We can also ask for a characterization of combinatorial types
of triangulations that are realizable as in Figures \ref{fig:octahedron}
and \ref{fig:fivepoints}.
Such a triangulation in $\RR^d$ is obtained by removing
a facet from a $(d{+}1)$-dimensional simplicial polytope with $\leq n$ vertices.
If we are allowed to vary $w \in \RR^n$, then Theorem \ref{thm:converse} 
tells us that all simplicial polytopes have such a realization.
Hence, in the following question, we seek
configurations $X$ in $\RR^d$ with  $w = \frac{1}{n}(1,\ldots,1)$.

\begin{problem}
Which simplicial polytopes can be realized by
points in $\RR^d$ with unit weights?
\end{problem}

For example, the octahedron can be realized with unit weights, as 
was seen in Figure \ref{fig:octahedron}.

\bigskip \bigskip 

\noindent
{\bf Acknowledgements.} We thank Donald Richards for very helpful discussions regarding Proposition~\ref{prop:magic}.
Bernd Sturmfels was
partially supported by
the Einstein Foundation Berlin and the NSF (DMS-1419018). Caroline Uhler was partially supported by DARPA (W911NF-16-1-0551), NSF (DMS-1651995) and ONR (N00014-17-1-2147).

\bigskip

\begin{small}

\end{small}

\bigskip \bigskip

\noindent
\footnotesize {\bf Authors' addresses:}

\smallskip

\noindent Elina Robeva,
Massachusetts Institute of Technology,
Department of Mathematics, {\tt erobeva@mit.edu}

\noindent Bernd Sturmfels,
 \  MPI-MiS Leipzig, {\tt bernd@mis.mpg.de} \ and \
UC  Berkeley,  {\tt bernd@berkeley.edu}

\noindent Caroline Uhler,
Massachusetts Institute of Technology,
            IDSS and EECS Department,
            {\tt cuhler@mit.edu}.

\end{document}